%%%%%%%%%%%%%%%%%%%%%%%%%%%%%%%%%%%%%%%%%%%%%%%%%%%
%%
%%       quant_vxx.tex
%%
%%      zuletzt geändert am 29.04.2011
%%      von Christian
%%
%%%%%%%%%%%%%%%%%%%%%%%%%%%%%%%%%%%%%%%%%%%%%%%%%%%

\documentclass[a4paper,11pt]{amsart}

\usepackage{mathptmx}
\usepackage{amsthm,latexsym,amssymb,amsmath,enumerate,xcolor,wasysym,hyperref}
\usepackage[matrix,arrow]{xy}
\usepackage{pstricks,graphicx}

\sloppy
\parindent0pt

\renewcommand{\H}{\mathcal{H}}
\newcommand{\D}{\mathcal{D}}
\newcommand{\LL}{\mathcal{L}}
\newcommand{\R}{\mathbb{R}}
\newcommand{\N}{\mathbb{N}}
\newcommand{\Z}{\mathbb{Z}}
\newcommand{\K}{\mathbb{K}}
\newcommand{\<}{\langle}
\renewcommand{\>}{\rangle}

\renewcommand{\Re}{\mathfrak{Re}}
\renewcommand{\phi}{\varphi}
\newcommand{\na}{\nabla}
\newcommand{\dV}{\,\mathrm{dV}}
\newcommand{\ds}{\,\mathrm{dA}}
\newcommand{\grad}{\mathrm{grad}}
\renewcommand{\div}{\mathrm{div}}
\newcommand{\id}{\mathrm{id}}
\newcommand{\tr}{\mathrm{tr}}
\newcommand{\Cinf}{C^\infty}
\newcommand{\DD}{C^\infty_\mathrm{c}}
\newcommand{\Cinfsc}{C^\infty_\mathrm{sc}}
\newcommand{\supp}{\operatorname{supp}}

\renewcommand{\epsilon}{\varepsilon}
\newcommand{\ext}{\mathrm{ext}}
\newcommand{\res}{\mathrm{res}}
\newcommand{\so}[2]{\<#1,#2\>_\tau}
\newcommand{\RS}{\mathcal{Q}}
\newcommand{\ddt}{\left.\frac{d}{dt}\right|_{t=0}}
\newcommand{\ddT}[1]{\left.\frac{d}{dt_{#1}}\right|_{t_{#1}=0}}
\newcommand{\DDT}{\left.\frac{\partial^n}{\partial t_1 \cdots \partial t_n}\right|_{t_1=\cdots =t_n=0}}
\newcommand{\dtt}{\left.\frac{\partial^2}{\partial t \partial s}\right|_{t=s=0}}
\newcommand{\Siggi}{\Sigma^{3/2}M}
\renewcommand{\a}{\mathbf{a}}
\renewcommand{\b}{\mathbf{b}}
\newcommand{\Cstar}{\mbox{C$^*$}}

\newcommand{\forget}{\mathrm{REAL}}
\newcommand{\SYMPL}{\mathrm{SYMPL}}
\newcommand{\CAR}{\mathrm{CAR}}
\newcommand{\CCR}{\mathrm{CCR}}
\newcommand{\SOL}{\mathrm{SOL}}

\newcommand{\hilb}{\mathsf{HILB}}
\newcommand{\rhilb}{\mathsf{HILB}_\R}

\newcommand{\Abos}{\mathfrak{A}_\mathrm{bos}}
\newcommand{\Aferm}{\mathfrak{A}_\mathrm{ferm}}
\newcommand{\Afermsd}{\mathfrak{A}_\mathrm{ferm}^\mathrm{sd}}
\newcommand{\Sympl}{\mathsf{Sympl}}
\newcommand{\CAlg}{\mathsf{C^*Alg}}
\newcommand{\GlobHypDef}{\mathsf{GlobHypDef}}
\newcommand{\GlobHypGreen}{\mathsf{GlobHypGreen}}
\newcommand{\GHSD}{\mathsf{GlobHypSkewDef}}

\newtheorem{thm}{Theorem}[section]
\newtheorem{cor}[thm]{Corollary}
\newtheorem{prop}[thm]{Proposition}
\newtheorem{lemma}[thm]{Lemma}

\theoremstyle{definition}
\newtheorem{ex}[thm]{Example}
\newtheorem{rem}[thm]{Remark}
\newtheorem{definition}[thm]{Definition}

%% Nicolas Makros

\newcommand{\Cl}{\operatorname{Cl}}

\def \be{\begin{eqnarray*}}
\def \ee{\end{eqnarray*}}
\def \beit{\begin{itemize}}
\def \eeit{\end{itemize}}
\def \bui#1#2{\mathrel{\mathop{\kern 0pt#1}\limits^{#2}}}
%\[\bui{\rightarrow}{b}\]
\def \buil#1#2{\mathrel{\mathop{\kern 0pt#1}\limits_{#2}}}
\def \C{\mathbb{C}}
\def \wih{\widehat}
\def \wit{\widetilde}

%opening
\title{Classical and Quantum Fields on Lorentzian Manifolds}
\author{Christian B\"ar}
\author{Nicolas Ginoux}
\address{
Universit{\"a}t Potsdam\\
Institut f\"ur Mathematik\\
Am Neuen Palais 10\\
Haus 8\\
14469 Potsdam\\
Germany
}
\address{
Fakult\"at f\"ur Mathematik\\
Universit\"at Regensburg\\
93040 Regensburg\\
Germany
}
 
\email{baer@math.uni-potsdam.de} 
\email{nicolas.ginoux@mathematik.uni-regensburg.de}

\date{\today}

\subjclass[2010]{58J45,35Lxx,81T20}

\keywords{Wave operator, Dirac-type operator, globally hyperbolic spacetime, Green's operator, CCR-algebra, CAR-algebra, state, representation, locally covariant quantum field theory, quantum field, $n$-point function}

\begin{document}

\begin{abstract}
We construct bosonic and fermionic locally covariant quantum field theories on curved backgrounds for large classes of fields.
We investigate the quantum field and $n$-point functions induced by suitable states.
\end{abstract}

\maketitle

\section{Introduction}
Classical fields on spacetime are mathematically modeled by sections of a vector bundle over a Lorentzian manifold.
The field equations are usually partial differential equations.
We introduce a class of differential operators,  called Green-hyperbolic operators, which have good analytical solubility properties.
This class includes wave operators as well as Dirac type operators.

In order to quantize such a classical field theory on a curved background, we need local algebras of observables.
They come in two flavors, bosonic algebras encoding the canonical commutation relations and fermionic algebras encoding the canonical anti-commutation relations.
We show how such algebras can be associated to manifolds equipped with suitable Green-hyperbolic operators.
We prove that we obtain locally covariant quantum field theories in the sense of \cite{BFV}.
There is a large literature where such constructions are carried out for particular examples of fields, see e.g.\ \cite{DHP,Dimock80,Dimock82,Furlani,Kay78,S}.
In all these papers the well-posedness of the Cauchy problem plays an important role.
We avoid using the Cauchy problem altogether and only make use of Green's operators.
In this respect, our approach is similar to the one in \cite{St}.
This allows us to deal with larger classes of fields, see Section~\ref{ssec:sum}, and to treat them systematically.
Much of the earlier work on constructing observable algebras for particular examples can be subsumed under this general approach.

It turns out that bosonic algebras can be obtained in much more general situations than fermionic algebras.
For instance, for the classical Dirac field both constructions are possible.
Hence, on the level of observable algebras, there is no spin-statistics theorem.
In order to obtain results like Theorem~5.1 in \cite{V} one needs more structure, namely representations of the observable algebras with good properties.

In order to produce numbers out of our quantum field theory that can be compared to experiments, we need states, in addition to observables.
We show how states with suitable regularity properties give rise to quantum fields and $n$-point functions.
We check that they have the properties expected from traditional quantum field theories on a Minkowski background.

{\em Acknowledgments.}
It is a pleasure to thank Alexander Strohmaier and Rainer Verch for very valuable discussion.
The authors would also like to thank SPP~1154 ``Globale Differentialgeometrie'' and SFB~647 ``Raum-Zeit-Materie'', both funded by Deutsche Forschungsgemeinschaft, for financial support.

\section{Field equations on Lorentzian manifolds}

\subsection{Globally hyperbolic manifolds}

We begin by fixing notation and recalling general facts about Lorentzian manifolds, see e.g.\ \cite{ONeill} or \cite{BGP} for more details.
Unless mentioned otherwise, the pair $(M,g)$ will stand for a smooth $m$-dimensional manifold $M$ equipped with a smooth Lorentzian metric $g$, where our convention for Lorentzian signature is $(-+\cdots+)$.
The associated volume element will be denoted by $\dV$.
We shall also assume our Lorentzian manifold $(M,g)$ to be time-orientable, i.e., that there exists a smooth timelike vector field on $M$.
Time-oriented Lorentzian manifolds will be also referred to as \emph{spacetimes}.
Note that in contrast to conventions found elsewhere, we do not assume that a spacetime is connected nor do we assume that its dimension be $m=4$.

For every subset $A$ of a spacetime $M$ we denote the causal future and past of $A$ in $M$ by $J_+(A)$ and $J_-(A)$, respectively.
If we want to emphasize the ambient space $M$ in which the causal future or past of $A$ is considered, we write $J_\pm^M(A)$ instead of $J_\pm(A)$.
Causal curves will always be implicitly assumed (future or past) oriented.

\begin{definition}\label{def-Cauchyhyp}
A \emph{Cauchy hypersurface} in a spacetime $(M,g)$ is a subset of $M$ which is met exactly once by every inextensible timelike curve.
\end{definition}
 
Cauchy hypersurfaces are always topological hypersurfaces but need not be smooth.
All Cauchy hypersurfaces of a spacetime are homeomorphic.

\begin{definition}\label{def-globhyp}
A spacetime $(M,g)$ is called \emph{globally hyperbolic} if and only if it contains a Cauchy hypersurface.
\end{definition}

A classical result of R.~Geroch \cite{Geroch70} says that a globally hyperbolic spacetime can be foliated by Cauchy hypersurfaces.
It is a rather recent and very important result that this also holds in the smooth category:

\begin{thm}[{\rm A. Bernal and M. S\'anchez \cite[Thm.~1.1]{BS}}]\label{tBernalSanchez}
Let $(M,g)$ be a globally hyperbolic spacetime.

Then there exists a smooth manifold $\Sigma$, a smooth one-parameter-family of Riemannian metrics $(g_t)_t$ on $\Sigma$ and a smooth positive function $\beta$ on $\R\times\Sigma$ such that $(M,g)$ is isometric to $(\R\times\Sigma,-\beta dt^2\oplus g_t)$.
Each $\{t\}\times\Sigma$ corresponds to a smooth spacelike Cauchy hypersurface in $(M,g)$.
\end{thm}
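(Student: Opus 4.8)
The plan is to reduce the statement to the existence of a smooth \emph{Cauchy temporal function} on $M$ and then to obtain the splitting by a gradient–flow argument. Recall that Geroch's theorem produces a \emph{continuous} Cauchy time function $\tau_0\colon M\to\R$ (built from normalized volumes of $J_-(x)$ and $J_+(x)$) whose level sets are Cauchy hypersurfaces. The first and decisive step is to upgrade $\tau_0$ to a smooth function $\tau$ whose gradient $\na\tau$ is everywhere timelike (past-directed, with our signature convention) and all of whose level sets are again Cauchy hypersurfaces. This is proved by a delicate local smoothing: one covers $M$ by the domains of causally convex coordinate charts of ``small diameter'', on each chart constructs smooth functions that increase along every future-directed causal curve at a quantitatively controlled rate (``steepness''), and glues them with a partition of unity; steepness is precisely the feature that is stable under such sums and that forces $\langle\na\tau,\na\tau\rangle<0$. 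One then checks that the smoothed function is still a time function and that its level sets remain Cauchy. I expect this step --- keeping the smoothed function a time function with everywhere timelike gradient and with Cauchy level sets --- to be the main obstacle; it is the technical heart of \cite{BS} (and of the earlier work of Bernal and S\'anchez it builds on), and the theorem as stated is essentially a geometric reformulation of it.

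Granted such a Cauchy temporal function $\tau$, put $\Sigma:=\tau^{-1}(0)$. Since $\na\tau$ is timelike it never vanishes, so $0$ is a regular value and $\Sigma$ is a smooth hypersurface; its normal $\na\tau$ is timelike, hence $\Sigma$ is spacelike, and, $\tau$ being Cauchy, $\Sigma$ is a Cauchy hypersurface. Next I would introduce the vector field
\[ X:=\frac{\na\tau}{\langle\na\tau,\na\tau\rangle}, \]
so that $d\tau(X)=g(\na\tau,X)=1$ and $X$ is timelike and future-directed. The flow $\Phi$ of $X$ is complete: a maximal integral curve $c$ is inextensible (if $c(t)$ converged to a point as $t$ approached a finite endpoint of its interval of definition, the flow could be continued there, contradicting maximality, so near a finite endpoint $c$ leaves every compact set), hence, being an inextensible timelike curve, it meets the Cauchy hypersurface $\tau^{-1}(s)$ for every $s\in\R$; since $\tfrac{d}{dt}\tau(c(t))=d\tau(X)=1$, we have $\tau(c(t))=t+\tau(c(0))$, so the interval of definition of $c$ must be all of $\R$. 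Therefore $\Phi\colon\R\times\Sigma\to M$, $\Phi(t,p)=\Phi_t(p)$, is a diffeomorphism with $\tau\circ\Phi(t,p)=t$ and $\Phi_*\partial_t=X$.

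Finally I would pull the metric back along $\Phi$ and read off the block form. One has $g(\partial_t,\partial_t)=g(X,X)=\langle\na\tau,\na\tau\rangle^{-1}<0$, so $\beta:=-\langle\na\tau,\na\tau\rangle^{-1}$, transported to $\R\times\Sigma$ via $\Phi$, is a smooth positive function. For $Y$ tangent to the slice $\{t\}\times\Sigma$ we have $d\tau(Y)=0$, i.e. $g(\na\tau,Y)=0$, whence $g(\partial_t,Y)=g(X,Y)=\langle\na\tau,\na\tau\rangle^{-1}d\tau(Y)=0$; thus the mixed terms vanish and $\Phi^*g=-\beta\,dt^2\oplus g_t$, where $g_t$ is the metric induced on $\{t\}\times\Sigma\cong\Sigma$, which is Riemannian since each slice is spacelike and which depends smoothly on $t$ because $g$ and $\Phi$ are smooth. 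Each $\{t\}\times\Sigma$ corresponds under $\Phi$ to the level set $\tau^{-1}(t)$, a smooth spacelike Cauchy hypersurface, which finishes the argument.
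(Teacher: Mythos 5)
The paper does not prove this theorem; it is imported verbatim from Bernal and S\'anchez, so there is no internal proof to compare yours against. Your second half --- passing from a smooth Cauchy temporal function $\tau$ to the orthogonal splitting via the flow of $X=\na\tau/\<\na\tau,\na\tau\>$, with completeness forced by $\tau(c(t))=t+\tau(c(0))$ together with the fact that an inextensible timelike curve meets every level set --- is correct, and it is exactly how the splitting is derived in the literature once the temporal function is in hand.

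The gap is the first half. The existence of a smooth function $\tau$ with everywhere timelike gradient \emph{and} Cauchy level sets is not a preliminary step: it is the entire content of the theorem, and your paragraph describing ``steep'' local functions glued by a partition of unity is a statement of strategy, not a proof. Two specific points are where naive versions of that strategy fail. First, for a sum $\sum_i\rho_i\tau_i$ the gradient contains the terms $\tau_i\na\rho_i$, over which one has no causal control; one must arrange the local functions to dominate the derivatives of the cut-offs quantitatively, and organizing this is the actual technical content of the smoothing. Second, even granted a smooth function with everywhere timelike, past-directed gradient, its level sets need not be Cauchy: in two-dimensional Minkowski space $u(t,x)=t-\sqrt{1+x^2}$ has timelike gradient everywhere, yet $u^{-1}(0)$ is the upper unit hyperboloid, which is missed by the inextensible null curve $t=x-5$ and so is not Cauchy. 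Hence an additional argument trapping the level sets of the smoothed function between Geroch slices (or directly verifying that every inextensible causal curve crosses each level set) is indispensable and is absent from your sketch. As written, the proposal reduces the theorem to an unproved statement of essentially the same depth.
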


For our purposes, we shall need a slightly stronger version of Theorem~\ref{tBernalSanchez} where one of the Cauchy hypersurfaces $\{t\}\times\Sigma$ can be prescribed:

\begin{thm}[{\rm A. Bernal and M. S\'anchez \cite[Thm.~1.2]{BS2}}]\label{tBernalSanchez2}
Let $(M,g)$ be a globally hyperbolic spacetime and $\tilde{\Sigma}$ a smooth spacelike Cauchy hypersurface in $(M,g)$.

Then there exists a smooth splitting $(M,g)\cong(\R\times\Sigma,-\beta dt^2\oplus g_t)$ as in Theorem~\ref{tBernalSanchez} such that $\tilde{\Sigma}$ corresponds to $\{0\}\times\Sigma$.
\end{thm}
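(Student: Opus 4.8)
The plan is to reduce the statement to the construction of a single global function on $M$, after which the geometric splitting comes essentially for free. Call a smooth function $\tau\colon M\to\R$ a \emph{Cauchy temporal function} if $\grad\tau$ is timelike everywhere and every level set $\tau^{-1}(c)$ is a (then automatically smooth, spacelike) Cauchy hypersurface. I claim it suffices to construct a Cauchy temporal function $\tau$ with $\tau(M)=\R$ and $\tau^{-1}(0)=\tilde\Sigma$. Granting this, set $\Sigma:=\tilde\Sigma$ and $X:=\grad\tau/g(\grad\tau,\grad\tau)$, so that $d\tau(X)=1$ and $X$ is timelike. The maximal integral curves of $X$ are timelike, hence meet each Cauchy hypersurface $\tau^{-1}(c)$ exactly once; so the flow $\Phi$ of $X$ is defined for all $t\in\R$ and $(t,x)\mapsto\Phi_t(x)$ is a diffeomorphism $\R\times\Sigma\to M$ carrying $\{0\}\times\Sigma$ onto $\tilde\Sigma$, along which $\tau$ is the $\R$-coordinate. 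Since $X$ is proportional to $\grad\tau$, it is $g$-orthogonal to every level set, so in these coordinates $g$ has no $dt\,dx$ cross terms and takes the form $-\beta\,dt^2\oplus g_t$ with $\beta:=-1/g(\grad\tau,\grad\tau)>0$ and $g_t$ the induced metric on $\{t\}\times\Sigma$. This is the asserted splitting.

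So everything is in the construction of $\tau$, and the naive shortcut fails. Using Theorem~\ref{tBernalSanchez} write $(M,g)\cong(\R\times\Sigma',-\beta'dt'^2\oplus g'_{t'})$; since the curves $t'\mapsto(t',x)$ are inextensible timelike, $\tilde\Sigma$ is the graph $\{(f(x),x):x\in\Sigma'\}$ of a smooth $f$, and one is tempted to take $\tau:=t'-f$. This straightens $\tilde\Sigma$ to the zero level but is useless in general: the remaining level sets $\{t'=f(x)+c\}$ are vertical translates of a spacelike graph and need be neither spacelike nor Cauchy, so $\grad(t'-f)$ need not stay timelike. One must instead run the smoothing machinery behind Theorem~\ref{tBernalSanchez} relative to $\tilde\Sigma$. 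Since $\tilde\Sigma$ is a smooth spacelike Cauchy hypersurface it is acausal, $M=I^-(\tilde\Sigma)\sqcup\tilde\Sigma\sqcup I^+(\tilde\Sigma)$, and $J^\pm(\tilde\Sigma)$ are globally hyperbolic manifolds-with-boundary with common Cauchy hypersurface $\tilde\Sigma$. On $J^+(\tilde\Sigma)$ I would build, by averaging local Cauchy time functions and perturbing the resulting time function to one with timelike gradient (the mechanism of the proof of Theorem~\ref{tBernalSanchez}, cf.\ \cite{BS}), a smooth $\tau^+\ge 0$ with $(\tau^+)^{-1}(0)=\tilde\Sigma$, with $\grad\tau^+$ timelike on all of $J^+(\tilde\Sigma)$ --- including on $\tilde\Sigma$, arranged by working in a product collar $[0,\epsilon)\times\tilde\Sigma$ and forcing $\tau^+$ there to be the collar parameter --- and with every level set a Cauchy hypersurface of $J^+(\tilde\Sigma)$; composing with a diffeomorphism of $\tau^+(J^+(\tilde\Sigma))$ onto $[0,\infty)$ one may assume $\tau^+$ is onto $[0,\infty)$. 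The mirror construction on $J^-(\tilde\Sigma)$ gives $\tau^-\le 0$ onto $(-\infty,0]$ with the same $1$-jet as $\tau^+$ along $\tilde\Sigma$, and $\tau:=\tau^\pm$ on $J^\pm(\tilde\Sigma)$ is a well-defined smooth function on $M$.

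Then one verifies the three properties of $\tau$. Smoothness across $\tilde\Sigma$ is the built-in matching of $1$-jets. The gradient is timelike off $\tilde\Sigma$ by construction and timelike on $\tilde\Sigma$ by the collar normalization, so $d\tau$ is nonzero with constant sign on every causal vector and $\tau$ is strictly monotone along every inextensible causal curve $\gamma$. A short argument --- using that $\tilde\Sigma$ is acausal and Cauchy, so $\gamma$ crosses it at most once and is future- (resp.\ past-) inextensible in $J^+(\tilde\Sigma)$ (resp.\ $J^-(\tilde\Sigma)$) beyond the crossing, together with the Cauchy property of the level sets of $\tau^\pm$ --- shows $\tau\circ\gamma$ is onto $\R$; hence $\gamma$ meets each $\tau^{-1}(c)$ exactly once and every level set is a (smooth, spacelike, acausal) Cauchy hypersurface. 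Thus $\tau$ is a Cauchy temporal function onto $\R$ with $\tau^{-1}(0)=\tilde\Sigma$, and the first paragraph concludes.

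The real difficulty is the one-sided construction of $\tau^\pm$: one must produce, simultaneously, smoothness, a nowhere-vanishing \emph{timelike} gradient up to and across the boundary $\tilde\Sigma$, and the Cauchy property of \emph{all} level sets, starting only from the single prescribed level set $\tilde\Sigma$ and the weak local data of Cauchy time functions. This is the technical heart of the result; the reduction of the first paragraph and the gluing and verifications of the last two are comparatively routine once it is available.
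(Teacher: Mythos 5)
The paper does not prove this statement at all: it is quoted from Bernal and S\'anchez \cite[Thm.~1.2]{BS2} and used as a black box, so there is no internal proof to compare yours against. That said, your first paragraph --- the reduction of the splitting to the existence of a surjective Cauchy temporal function $\tau$ with $\tau^{-1}(0)=\tilde\Sigma$, via the flow of $X=\grad\tau/g(\grad\tau,\grad\tau)$ --- is correct and is indeed how the metric splitting is extracted in \cite{BS,BS2}, and your diagnosis of why the naive straightening $\tau:=t'-f$ fails is also right.

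The gap is that the object on which everything hinges is never constructed. The existence of $\tau^\pm$ on $J^\pm(\tilde\Sigma)$ with, simultaneously, smoothness up to the boundary, gradient timelike up to and including $\tilde\Sigma$, and \emph{all} level sets Cauchy, is essentially equivalent to \cite[Thm.~1.2]{BS2} itself; the phrase ``by averaging local Cauchy time functions and perturbing'' names the strategy of \cite{BS} but does not carry it out, and you concede in your final paragraph that this is ``the technical heart of the result.'' So what you have is an accurate reduction plus an honest placeholder, not a proof. Two smaller points. First, gluing $\tau^+$ and $\tau^-$ with matching $1$-jets along $\tilde\Sigma$ only yields a $C^1$ function; to get smoothness you should require both to coincide with a common collar parameter on a two-sided collar of $\tilde\Sigma$ (your collar normalization can arrange this, but then invoke that rather than $1$-jets). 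Second, the completeness of the flow of $X$ deserves its own sentence: a maximal integral curve of $X$ extends to an inextensible timelike curve, hence meets every Cauchy level set $\tau^{-1}(c)$, and since $\tau$ increases at unit speed along it, its domain of definition must be all of $\R$.
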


We shall also need the following result which tells us that one can extend any compact acausal spacelike submanifold to a smooth spacelike Cauchy hypersurface.
Here a subset of a spacetime is called \emph{acausal} if no causal curve meets it more than once.

\begin{thm}[{\rm A. Bernal and M. S\'anchez \cite[Thm.~1.1]{BS2}}]\label{tBernalSanchez3}
Let $(M,g)$ be a globally hyperbolic spacetime and let $K\subset M$ be a compact acausal smooth spacelike submanifold with boundary.

Then there exists a smooth spacelike Cauchy hypersurface $\Sigma$ in $(M,g)$ with $K\subset\Sigma$.
\end{thm}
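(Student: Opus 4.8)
The plan is to reduce the statement to the construction of a single smooth function adapted to $K$. Recall that a \emph{Cauchy temporal function} on $M$ is a smooth $\tau\colon M\to\R$ whose gradient $\na\tau$ is everywhere timelike and past-directed and which tends to $\pm\infty$ along every inextensible causal curve; the splitting of Theorem~\ref{tBernalSanchez} provides one. If such a $\tau$ can be produced with the extra property $\tau|_K\equiv 0$, we are done: $0$ is then a regular value (since $\na\tau\neq 0$ everywhere), so $\Sigma:=\tau^{-1}(0)$ is a smooth embedded hypersurface; it is spacelike because $T_p\Sigma=(\na\tau)^\perp$ is the orthogonal complement of a timelike vector; and it is met exactly once by every inextensible timelike curve, along which $\tau$ is strictly monotone and surjective onto $\R$. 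Hence $\Sigma$ is a smooth spacelike Cauchy hypersurface with $K\subset\Sigma$. Everything therefore reduces to building such a $\tau$, in three steps.

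\emph{Step 1 (enlarge $K$).} I would first enlarge $K$ to a compact spacelike submanifold-with-boundary $K'$ having $K$ in its interior, by pushing a spacelike collar of $\partial K$ slightly outward. Because $K$ is compact and acausal, a sufficiently small such enlargement is still acausal: a causal curve meeting $K'$ twice would, by a limiting argument using compactness, force either a causal curve meeting $K$ twice or a causal curve tangent to the spacelike $K$, both impossible. \emph{Step 2 (local temporal function).} Since $K'$ is spacelike, each of its points has a neighbourhood $U_i$ in $M$ carrying a submersion $h_i$ with timelike past-directed gradient and $K'\cap U_i\subset h_i^{-1}(0)$; by compactness finitely many $U_i$ cover $K'$, and for a subordinate partition of unity $(\rho_i)$ one sets $h:=\sum_i\rho_i h_i$. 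Then $h$ vanishes on $K'$, and on $K'$ one has $\na h=\sum_i\rho_i\na h_i$ (the terms $h_i\,\na\rho_i$ vanish there), which lies in the open past timelike cone since that cone is convex. By continuity $\na h$ stays timelike and $0$ stays a regular value on some open neighbourhood $U$ of $K'$, so $h^{-1}(0)\cap U$ is a boundaryless spacelike hypersurface extending $K$.

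\emph{Step 3 (globalize).} It remains to extend $h$ to a Cauchy temporal function on all of $M$ which still coincides with $h$, hence vanishes, near $K$. Starting from a Cauchy temporal function $\tau_0$ on $M$ and a suitable rescaling of it, I would interpolate between $h$ near $K$ and the rescaled $\tau_0$ away from $K$, again using convexity of the past timelike cone so that gradients from the two pieces add up to a timelike past-directed vector, and choosing the transition region and the rescaling carefully enough that the error terms involving the gradient of the cutoff do not leave the cone. Outside a compact neighbourhood of $K$ the resulting $\tau$ agrees with the rescaled $\tau_0$, so it runs to $\pm\infty$ along every inextensible causal curve and is Cauchy; then $\Sigma:=\tau^{-1}(0)$ is the required hypersurface.

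The main obstacle is exactly this Step 3: a naive partition-of-unity patching of functions with timelike gradient produces additional terms proportional to the gradients of the cutoffs which need not respect the timelike cone, and one must simultaneously preserve the Cauchy property (properness along causal curves). Handling this cleanly — by the careful, partly iterative use of convexity of the cone together with a background Cauchy temporal function — is the technical heart of the argument of Bernal and S\'anchez, and I would refer to \cite{BS2} for the details; the acausality of the enlarged $K'$ in Step 1 is a secondary point that likewise uses the compactness and acausality of $K$ essentially.
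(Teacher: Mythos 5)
The paper does not prove this statement: it imports it verbatim as Theorem~1.1 of Bernal and S\'anchez \cite{BS2}, so there is no internal proof to compare yours against. That said, your outline correctly reproduces the strategy of the cited proof --- construct a Cauchy temporal function $\tau$ with $\tau|_K\equiv 0$ and take $\Sigma=\tau^{-1}(0)$ --- and your Step~2 (local construction via a partition of unity and convexity of the past timelike cone, with the $h_i\,\na\rho_i$ terms vanishing on $K'$ because $h_i=0$ there) is carried out correctly.

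As a standalone proof, however, the proposal has a genuine gap, and you name it yourself: Step~3 is not an implementation detail but the entire content of the theorem. Gluing a local function with timelike gradient into a global Cauchy temporal function is precisely where naive partition-of-unity arguments fail, because the cutoff gradients multiply the \emph{values} of the functions being interpolated (not quantities that vanish on the gluing region, as in Step~2) and can push the total gradient out of the timelike cone, while one must simultaneously keep $\tau$ proper along inextensible causal curves. Writing ``choosing the transition region and the rescaling carefully enough'' and then deferring to \cite{BS2} means the decisive estimate is never made. A secondary but real gap sits in Step~1: that a sufficiently small spacelike enlargement $K'$ of $K$ remains acausal requires showing that a compact acausal spacelike submanifold has a neighbourhood which no causal curve meets twice; your limiting argument must exclude sequences of short causal curves between nearby points of $K'$ collapsing to a single point rather than converging to a causal curve meeting $K$ twice, which uses the spacelikeness of $K$ essentially. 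In short, your text is an accurate road map of the Bernal--S\'anchez argument, but it does not constitute a proof independent of \cite{BS2} --- which is also all the paper under review claims for itself.
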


\begin{definition}\label{def-spacelikecompact}
A closed subset $A\subset M$ is called {\em spacelike compact} if there exists a compact subset $K\subset M$ such that $A \subset J^M(K) := J_-^M(K) \cup J_+^M(K)$.
\end{definition}

Note that a spacelike compact subset is in general not compact, but its intersection with any Cauchy hypersurface is compact, see e.g. \cite[Cor. A.5.4]{BGP}.

\begin{definition}\label{def-causcompat}
A subset $\Omega$ of a spacetime $M$ is called \emph{causally compatible} if and only if $J_\pm^\Omega(x)=J_\pm^M(x)\cap\Omega$ for every $x\in\Omega$.
\end{definition}

This means that every causal curve joining two points in $\Omega$ must be contained entirely in $\Omega$.

\subsection{Differential operators and Green's functions}

A {\em differential operator} of order (at most) $k$ on a vector bundle $S\rightarrow M$ over $\K=\R$ or $\K=\C$ is a linear map $P:\Cinf(M,S) \to \Cinf(M,S)$ which in local coordinates $x=(x^1,\ldots,x^m)$ of $M$ and with respect to a local trivialization looks like
$$
P = \sum_{|\alpha|\leq k}A_\alpha (x) \frac{\partial^\alpha}{\partial x^\alpha}.
$$
Here $\Cinf(M,S)$ denotes the space of smooth sections of $S\rightarrow M$, $\alpha=(\alpha_1,\ldots,\alpha_m)\in \N_0\times\cdots\times\N_0$ runs over multi-indices, $|\alpha|= \alpha_1 + \ldots + \alpha_m$ and $\frac{\partial^\alpha}{\partial x^\alpha} = \frac{\partial^{|\alpha|}}{\partial (x^1)^{\alpha_1} \cdots \partial (x^m)^{\alpha_m}}$.
The {\em principal symbol} $\sigma_P$ of $P$ associates to each covector $\xi\in T^*_xM$ a linear map $\sigma_P(\xi):S_x \to S_x$.
Locally, it is given by
$$
\sigma_P(\xi) = \sum_{|\alpha|= k}A_\alpha (x) \xi^\alpha
$$
where $\xi^\alpha = \xi_1^{\alpha_1}\cdots \xi_m^{\alpha_m}$ and $\xi = \sum_j \xi_j dx^j$.
If $P$ and $Q$ are two differential operators of order $k$ and $\ell$ respectively, then $Q\circ P$ is a differential operator of order $k+\ell$ and 
$$
\sigma_{Q\circ P}(\xi) = \sigma_{Q}(\xi) \circ \sigma_{P}(\xi) .
$$
For any linear differential operator $P:\Cinf(M,S) \to \Cinf(M,S)$ there is a unique formally dual operator $P^*:\Cinf(M,S^*) \to \Cinf(M,S^*)$ of the same order characterized by 
$$
\int_M \<\phi,P\psi\> \dV = \int_M \<P^*\phi,\psi\> \dV 
$$
for all $\psi\in\Cinf(M,S)$ and $\phi\in\Cinf(M,S^*)$ with $\supp(\phi) \cap \supp(\psi)$ compact.
Here $\<\cdot,\cdot\>:S^*\otimes S \to \K$ denotes the canonical pairing, i.e., the evaluation of a linear form in $S_x^*$ on an element of $S_x$, where $x\in M$.
We have $\sigma_{P^*}(\xi) = (-1)^k\sigma_P(\xi)^*$ where $k$ is the order of $P$.

\begin{definition}\label{def-selfadjoint}
Let a vector bundle $S\to M$ be endowed with a non-degenerate inner product $\<\cdot\,,\cdot\>$.
A linear differential operator $P$ on $S$ is called \emph{formally self-adjoint} if and only if
\[ \int_M\< P\phi,\psi\> \dV =\int_M\<\phi,P\psi\> \dV\]
holds for all $\phi,\psi\in \Cinf(M,S)$ with $\supp(\phi) \cap \supp(\psi)$ compact.

Similarly, we call $P$ \emph{formally skew-adjoint} if instead
\[ \int_M\< P\phi,\psi\> \dV =-\int_M\<\phi,P\psi\> \dV\, .\]
\end{definition}

We recall the definition of advanced and retarded Green's operators for a linear differential operator.

\begin{definition}\label{def-Green}
Let $P$ be a linear differential operator acting on the sections of a vector bundle $S$ over a Lorentzian manifold $M$.
An \emph{advanced Green's operator} for $P$ on $M$ is a linear map 
\[G_+:\DD(M,S)\to \Cinf(M,S)\]
satisfying:
\begin{itemize}
\item[(G$_1$)] $P\circ G_+=\id_{_{\DD(M,S)}}$;
\item[(G$_2$)] $G_+\circ P_{|_{\DD(M,S)}}=\id_{_{\DD(M,S)}}$;
\item[(G$_3^+$)] $\supp(G_+\phi)\subset J_+^M(\supp(\phi))$ for any $\phi\in\DD(M,S)$.
\end{itemize}
A \emph{retarded Green's operator} for $P$ on $M$ is a linear map $G_-:\DD(M,S)\to \Cinf(M,S)$ satisfying (G$_1$), (G$_2$), and 
\begin{itemize}
\item[(G$_3^-$)]
$\supp(G_-\phi)\subset J_-^M(\supp(\phi))\textrm{ for any }\phi\in\DD(M,S).$
\end{itemize}
\end{definition}

Here we denote by $\DD(M,S)$ the space of compactly supported smooth sections of $S$.

\begin{definition}\label{def-Greenhyperbolic}
Let $P:\Cinf(M,S) \to \Cinf(M,S)$ be a linear differential operator.
We call $P$ \emph{Green-hyperbolic} if the restriction of $P$ to any globally hyperbolic subregion of $M$ has advanced and retarded Green's operators.\end{definition}

\begin{rem}\label{rem:GreenUnique1}
If the Green's operators of the restriction of $P$ to a globally hyperbolic subregion exist, then they are necessarily unique, see Remark~\ref{rem:GreenUnique}.
\end{rem}

\subsection{Wave operators}
The most prominent class of Green-hyperbolic operators are wave operators, sometimes also called normally hyperbolic operators.

\begin{definition}
A linear differential operator of second order $P:\Cinf(M,S) \to \Cinf(M,S)$ is called a \emph{wave operator} if its principal symbol is given by the Lorentzian metric, i.e., for all $\xi\in T^*M$ we have
$$
\sigma_P(\xi) = -\<\xi,\xi\>\cdot\id .
$$
\end{definition}

In other words, if we choose local coordinates $x^1,\ldots,x^m$ on $M$ and a
local trivialization of $S$, then 
$$
P = -\sum_{i,j=1}^m g^{ij}(x)\frac{\partial^2}{\partial x^i \partial x^j}
+ \sum_{j=1}^m A_j(x) \frac{\partial}{\partial x^j} + B(x)
$$
where $A_j$ and $B$ are matrix-valued coefficients depending smoothly on
$x$ and $(g^{ij})$ is the inverse matrix of  $(g_{ij})$ with
$g_{ij}=\<\frac{\partial}{\partial x^i},\frac{\partial}{\partial x^j}\>$.
If $P$ is a wave operator, then so is its dual operator $P^*$.
In \cite[Cor.~3.4.3]{BGP} it has been shown that wave operators are Green-hyperbolic.

\begin{ex}[d'Alembert operator]\label{ex:dAlembert}
Let $S$ be the trivial line bundle so that sections of $S$ are just functions.
The d'Alembert operator $P=\Box=-\div\circ\grad$ is a formally self-adjoint wave operator, see e.g.\ \cite[p.~26]{BGP}.
\end{ex}

\begin{ex}[connection-d'Alembert operator]\label{ex:boxnabla}
More generally, let $S$ be a vector bundle and let $\nabla$ be a connection on $S$.
This connection and the Levi-Civita connection on $T^*M$ induce a connection on $T^*M\otimes S$, again denoted $\nabla$.
We define the connection-d'Alembert operator $\Box^\nabla$ to be the composition of the following three maps
$$
\Cinf(M,S) \xrightarrow{\nabla}
\Cinf(M,T^*M\otimes S) \xrightarrow{\nabla}
\Cinf(M,T^*M\otimes T^*M\otimes S)
\xrightarrow{-\tr\otimes\id_S} 
\Cinf(M,S)
$$
where $\tr:T^*M\otimes T^*M \to \R$ denotes the metric trace,
$\tr(\xi\otimes\eta)=\<\xi,\eta\>$. 
We compute the principal symbol,
$$
\sigma_{\Box^\nabla}(\xi)\phi
=
-(\tr\otimes\id_S)\circ\sigma_{\nabla}(\xi)\circ\sigma_{\nabla}(\xi)(\phi)
=
-(\tr\otimes\id_S)(\xi\otimes\xi\otimes\phi)
=
-\<\xi,\xi\>\,\phi.
$$
Hence $\Box^\nabla$ is a wave operator.
\end{ex}

\begin{ex}[Hodge-d'Alembert operator]\label{ex:kforms}
Let $S=\Lambda^kT^*M$ be the bundle of $k$-forms.
Exterior differentiation $d:\Cinf(M,\Lambda^kT^*M) \to
\Cinf(M,\Lambda^{k+1}T^*M)$ increases the degree by one while
the codifferential $\delta=d^*:\Cinf(M,\Lambda^{k}T^*M) \to
\Cinf(M,\Lambda^{k-1}T^*M)$ decreases the degree by one.
While $d$ is independent of the metric, the codifferential $\delta$ does
depend on the Lorentzian metric.
The operator $P=-d\delta - \delta d$ is a formally self-adjoint wave operator.
\end{ex}

\subsection{The Proca equation}

The Proca operator is an example of a Green-hyperbolic operator of second order which is not a wave operator.
First we need the following observation:

\begin{lemma}
Let $M$ be globally hyperbolic, let $S\to M$ be a vector bundle and let $P$ and $Q$ be differential operators acting on sections of $S$.
Suppose $P$ has advanced and retarded Green's operators $G_+$ and $G_-$.

If $Q$ commutes with $P$, then it also commutes with $G_+$ and with $G_-$.
\end{lemma}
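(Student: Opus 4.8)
The plan is to reduce everything to the uniqueness that is already built into the theory of Green's operators. First I note that $Q$, being a differential operator, does not enlarge supports, i.e.\ $\supp(Q\psi)\subseteq\supp(\psi)$ for every section $\psi$; in particular $Q$ maps $\DD(M,S)$ to itself, so that both $QG_+$ and $G_+Q$ are well-defined linear maps $\DD(M,S)\to\Cinf(M,S)$ (and similarly for $G_-$). Then, for a fixed $\phi\in\DD(M,S)$, I would show that $QG_+\phi$ and $G_+(Q\phi)$ solve the same equation and have support in $J_+^M(\supp\phi)$, and conclude that they agree. The case of $G_-$ follows by reversing the time direction.

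For the computation, put $u:=QG_+\phi-G_+(Q\phi)\in\Cinf(M,S)$. Since $Q$ commutes with $P$, property (G$_1$) gives $P(QG_+\phi)=Q(PG_+\phi)=Q\phi$ and $P(G_+(Q\phi))=Q\phi$, hence $Pu=0$. For the supports, (G$_3^+$) together with the locality of $Q$ yields $\supp(QG_+\phi)\subseteq\supp(G_+\phi)\subseteq J_+^M(\supp\phi)$, while (G$_3^+$) applied to $Q\phi$ gives $\supp(G_+(Q\phi))\subseteq J_+^M(\supp(Q\phi))\subseteq J_+^M(\supp\phi)$. Therefore $\supp(u)\subseteq J_+^M(\supp\phi)$, which is a past-compact set, i.e.\ one meeting every $J_-^M(L)$ ($L\subset M$ compact) in a compact set.

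It then remains to invoke the fact that a smooth section $u$ with $Pu=0$ and past-compact support must vanish; granting this, $QG_+\phi=G_+(Q\phi)$ for every $\phi$, that is $QG_+=G_+Q$, and interchanging future and past (using (G$_3^-$) and future-compact supports) gives $QG_-=G_-Q$. This vanishing statement is exactly what underlies the uniqueness of Green's operators in Remark~\ref{rem:GreenUnique}: for any $\psi\in\DD(M,S^*)$ one writes $\psi=P^*(G_-^*\psi)$ with $G_-^*$ a retarded Green's operator of the formally dual operator $P^*$; since $\supp(G_-^*\psi)\subseteq J_-^M(\supp\psi)$ is future-compact and $\supp(u)$ is past-compact, these two supports meet in a compact set, so the integration-by-parts identity defining $P^*$ applies and gives $\int_M\<\psi,u\>\dV=\int_M\<G_-^*\psi,Pu\>\dV=0$, whence $u=0$ as $\psi$ was arbitrary. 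The only non-formal ingredient is precisely this vanishing lemma — equivalently, the availability of Green's operators for $P^*$ — and it is the step I would expect to settle by citing Remark~\ref{rem:GreenUnique} rather than by a fresh argument; all the rest is bookkeeping with causal supports.
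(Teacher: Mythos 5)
Your proof is correct and follows essentially the same route as the paper: the paper packages your section $u=QG_+\phi-G_+(Q\phi)$ as the operator $\tilde G_\pm:=G_\pm+[G_\pm,Q]$, checks that it is again an advanced/retarded Green's operator, and concludes by the uniqueness of Green's operators (Remark~\ref{rem:GreenUnique}), whose proof is exactly the vanishing statement for solutions with past-/future-compact support that you invoke directly. The only difference is that you inline that uniqueness argument instead of citing it, and in doing so you make explicit the support estimates that the paper leaves implicit.
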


\begin{proof}
Assume $[P,Q]=0$.
We consider
$$
\tilde G_\pm := G_\pm + [G_\pm,Q] :\,\, \DD(M,s) \to \Cinfsc(M,S).
$$
We compute on $\DD(M,S)$:
$$
\tilde G_\pm P 
= G_\pm P + G_\pm QP -QG_\pm P 
= \id + G_\pm PQ - Q
= \id + Q - Q 
= \id 
$$
and similarly $P\tilde G_\pm = \id$.
Hence $\tilde G_\pm$ are also advanced and retarded Green's operators, respectively.
By Remark~\ref{rem:GreenUnique1}, Green's operators are unique, hence $\tilde G_\pm=G_\pm$ and therefore $[G_\pm,Q]=0$.
\end{proof}

\begin{ex}[Proca operator]
The discussion of this example follows \cite[p.~116f]{St}, see also \cite{Furlani} where is the discussion is based on the Cauchy problem.
The Proca equation describes massive vector bosons.
We take $S=T^*M$ and let $m_0>0$.
The Proca equation is
\begin{equation}
P\phi := \delta d \phi + m_0^2 \phi =0
\label{eq:Proca}
\end{equation}
where $\phi\in\Cinf(M,S)$.
Applying $\delta$ to \eqref{eq:Proca} we obtain, using $\delta^2=0$ and $m_0\neq 0$,
\begin{equation}
\delta\phi=0
\label{eq:Proca2}
\end{equation}
and hence
\begin{equation}
(d\delta+\delta d)\phi + m_0^2\phi = 0.
\label{eq:Proca3}
\end{equation}
Conversely, \eqref{eq:Proca2} and \eqref{eq:Proca3} clearly imply \eqref{eq:Proca}.

Since $\tilde P := d\delta+\delta d + m_0^2$ is minus a wave operator, it has Green's operators $\tilde G_\pm$.
We define
$$
G_\pm :\DD(M,S) \to \Cinfsc(M,S), \quad 
G_\pm 
:= (m_0^{-2}d\delta + \id)\circ \tilde G_\pm 
= \tilde G_\pm \circ (m_0^{-2}d\delta + \id) \, .
$$
The last equality holds because $d$ and $\delta$ commute with $\tilde P$.
For $\phi\in\DD(M,S)$ we compute
$$
G_\pm P\phi 
=
\tilde G_\pm (m_0^{-2}d\delta + \id)(\delta d + m_0^2)\phi=
\tilde G_\pm \tilde P\phi = \phi
$$
and similarly $PG_\pm\phi=\phi$.
Since the differential operator $m_0^{-2}d\delta + \id$ does not increase supports, the third axiom in the definition of advanced and retarded Green's operators holds as well.

This shows that $G_+$ and $G_-$ are advanced and retarded Green's operators for $P$, respectively.
Thus $P$ is not a wave operator but Green-hyperbolic.
\end{ex}

\subsection{Dirac type operators}\label{s:Diractype}
The most important Green-hyperbolic operators of first order are the so-called Dirac type operators.

\begin{definition}
A linear differential operator $D:\Cinf(M,S) \to \Cinf(M,S)$ of first order is called {\em of Dirac type}, if $-D^2$ is a wave operator.
\end{definition}

\begin{rem}
If $D$ is of Dirac type, then $i$ times its principal symbol satisfies the Clifford relations
$$
(i\sigma_D(\xi))^2 = -\sigma_{D^2}(\xi) = -\<\xi,\xi\>\cdot\id ,
$$
hence by polarization
$$
(i\sigma_D(\xi))(i\sigma_D(\eta)) + (i\sigma_D(\eta))(i\sigma_D(\xi)) = -2\<\xi,\eta\>\cdot\id .
$$
The bundle $S$ thus becomes a module over the bundle of Clifford algebras $\Cl(TM)$ associated with $(TM,\<\cdot\,,\cdot\>)$.
See \cite[Sec.~1.1]{Baum} or \cite[Ch.~I]{LM} for the definition and properties of the Clifford algebra $\Cl(V)$ associated with a vector space $V$ with inner product.
\end{rem}

\begin{rem}
If $D$ is of Dirac type, then so is its dual operator $D^*$.
On a globally hyperbolic region let $G_+$ be the advanced Green's operator for $D^2$ which exists since $-D^2$ is a wave operator.
Then it is not hard to check that $D\circ G_+$ is an advanced Green's operator for $D$, see e.g.\ the proof of Theorem~2.3 in \cite{DHP} or \cite[Thm.~3.2]{Mue}.
The same discussion applies to the retarded Green's operator.
Hence any Dirac type operator is Green-hyperbolic.
\end{rem}

\begin{ex}[Classical Dirac operator]\label{ex:spinors}
If the spacetime $M$ carries a spin structure, then one can define the spinor bundle $S=\Sigma M$ and the classical Dirac operator
$$
D : \Cinf(M,\Sigma M) \to \Cinf(M,\Sigma M) ,\quad D\phi:=i\sum_{j=1}^m\epsilon_j e_j\cdot\na_{e_j}\phi .
$$
Here $(e_j)_{1\leq j\leq m}$ is a local orthonormal basis of the tangent bundle, $\epsilon_j=\<e_j,e_j\>=\pm 1$ and ``$\cdot$'' denotes the Clifford multiplication, see e.g.\ \cite{Baum} or \cite[Sec.~2]{BGM}.
The principal symbol of $D$ is given by 
$$
\sigma_D(\xi)\psi = i\xi^\sharp \cdot \psi.
$$
Here $\xi^\sharp$ denotes the tangent vector dual to the 1-form $\xi$ via the Lorentzian metric, i.e., $\<\xi^\sharp,Y\> = \xi(Y)$ for all tangent vectors $Y$ over the same point of the manifold.
Hence
$$
\sigma_{D^2}(\xi)\psi = \sigma_D(\xi)\sigma_D(\xi)\psi = -\xi^\sharp \cdot
\xi^\sharp \cdot \psi = \<\xi,\xi\>\,\psi.
$$
Thus $P=-D^2$ is a wave operator.
Moreover, $D$ is formally self-adjoint, see e.g. \cite[p.~552]{BGM}.
\end{ex}

\begin{ex}[Twisted Dirac operators]\label{ex:twistedDirac}
More generally, let $E\to M$ be a complex vector bundle equipped with a non-degenerate Hermitian inner product and a metric connection $\nabla^E$ over a spin spacetime $M$.
In the notation of Example~\ref{ex:spinors}, one may define the Dirac operator of $M$ twisted with $E$ by $$
D^E:=i\sum_{j=1}^m\epsilon_j e_j\cdot\na_{e_j}^{\Sigma M\otimes E}:\Cinf(M,\Sigma M\otimes E)\to\Cinf(M,\Sigma M\otimes E),
$$
where $\na^{\Sigma M\otimes E}$ is the tensor product connection on $\Sigma M\otimes E$.
Again, $D^E$ is a formally self-adjoint Dirac type operator.
\end{ex}

\begin{ex}[Euler operator]\label{ex:Euler}
In Example~\ref{ex:kforms}, replacing $\Lambda^kT^*M$ by $S:=\Lambda T^*M\otimes\C = \oplus_{k=0}^n\Lambda^kT^*M\otimes\C$, the Euler operator $D=i(d-\delta)$ defines a formally self-adjoint Dirac type operator.
In case $M$ is spin, the Euler operator coincides with the Dirac operator of $M$ twisted with $\Sigma M$.
\end{ex}

\begin{ex}[Buchdahl operators]\label{ex:Buchdahl}
On a $4$-dimensional spin spacetime $M$, consider the standard orthogonal and parallel splitting $\Sigma M=\Sigma_+M\oplus\Sigma_-M$ of the complex spinor bundle of $M$ into spinors of positive and negative chirality.
The finite dimensional irreducible representations of the simply-connected Lie group $\mathrm{Spin}^0(3,1)$ are given by $\Sigma_+^{({k}/{2})} \otimes \Sigma_-^{({\ell}/{2})}$ where $k,\ell\in\N$.
Here $\Sigma_+^{({k}/{2})}=\Sigma_+^{\odot k}$ is the $k$-th symmetric tensor product of the positive half-spinor representation $\Sigma_+$ and similarly for $\Sigma_-^{({\ell}/{2})}$.
Let the associated vector bundles $\Sigma_\pm^{({k}/{2})} M$ carry the induced inner product and connection.

For $s\in\mathbb{N}$, $s\geq1$, consider the twisted Dirac operator $D^{(s)}$ acting on sections of $\Sigma M\otimes\Sigma_+^{({(s-1)}/{2})}M$.
In the induced splitting 
$$
\Sigma M\otimes\Sigma_+^{({(s-1)}/{2})}M
=
\Sigma_+M\otimes\Sigma_+^{({s-1}/{2})}M\oplus\Sigma_-M\otimes\Sigma_+^{({(s-1)}/{2})}M
$$ 
the operator $D^{(s)}$ is of the form 
$$
\left(\begin{array}{cc}0&D_-^{(s)}\\ D_+^{(s)}&0\end{array}\right)
$$
because Clifford multiplication by vectors exchanges the chiralities.
The Clebsch-Gordan formulas \cite[Prop. II.5.5]{BrtD} tell us that the representation $\Sigma_+\otimes\Sigma_+^{(\frac{s-1}{2})}$ splits as 
$$
\Sigma_+\otimes\Sigma_+^{(\frac{s-1}{2})}
=
\Sigma_+^{(\frac{s}{2})}\oplus\Sigma_+^{(\frac{s}{2}-1)}.
$$
Hence we have the corresponding parallel orthogonal projections 
$$
\pi_s :\Sigma_+M\otimes\Sigma_+^{(\frac{s-1}{2})}M \to \Sigma_+^{(\frac{s}{2})}M
\quad\mbox{ and }\quad
\pi_s' :\Sigma_+M\otimes\Sigma_+^{(\frac{s-1}{2})}M \to \Sigma_+^{(\frac{s}{2}-1)}M .
$$
On the other hand, the representation $\Sigma_-\otimes\Sigma_+^{(\frac{s-1}{2})}$ is irreducible.
Now \emph{Buchdahl operators} are the operators of the form 
$$
B_{\mu_1,\mu_2,\mu_3}^{(s)}
:=
\left(\begin{array}{cc}\mu_1\cdot\pi_s + \mu_2\cdot\pi_s'&D_-^{(s)}\\ D_+^{(s)}&\mu_3\cdot\id\end{array}\right)
$$
where $\mu_1,\mu_2,\mu_3\in\mathbb{C}$ are constants.
By definition, $B_{\mu_1,\mu_2,\mu_3}^{(s)}$ is of the form $D^{(s)}+b$, where $b$ is of order zero.
In particular, $B_{\mu_1,\mu_2,\mu_3}^{(s)}$ is a Dirac-type operator, hence it is Green-hyperbolic.

If $M$ were Riemannian, then $D^{(s)}$ would be formally self-adjoint. Hence the operator $B_{\mu_1,\mu_2,\mu_3}^{(s)}$ would be formally self-adjoint if and only if the constants $\mu_1,\mu_2,\mu_3$ are real.
In Lorentzian signature, $\Sigma_+M$ and $\Sigma_-M$ are isotropic for the natural inner product on $\Sigma M$, so that the bundles on which the Buchdahl operators act, carry no natural non-degenerate inner product.

For a definition of Buchdahl operators using indices we refer to \cite{Buchdahl2,Buchdahl3,Wue} and to \cite[Def.~8.1.4, p.~104]{MueDipl}.
\end{ex}

\subsection{The Rarita-Schwinger operator}\label{ss:RaritaSchwinger}

For the Rarita-Schwinger operator on Riemannian manifolds, we refer to \cite[Sec.~2]{Wang}, see also \cite[Sec.~2]{BransonHijaziBW}.
In this section let the spacetime $M$ be spin and consider the Clifford-multiplication $\gamma:T^*M\otimes\Sigma M\rightarrow \Sigma M$, $\theta\otimes\psi\mapsto \theta^\sharp\cdot\psi$, where $\Sigma M$ is the complex spinor bundle of $M$.
Then there is the representation theoretic splitting of $T^*M\otimes\Sigma M$ into the orthogonal and parallel sum
\[
T^*M\otimes\Sigma M=\iota(\Sigma M)\oplus \Siggi,
\]
where $\Siggi:=\mathrm{ker}(\gamma)$ and $\iota(\psi):=-\frac{1}{m}\sum_{j=1}^me_j^*\otimes e_j\cdot\psi$.
Here again $(e_j)_{1\leq j\leq m}$ is a local orthonormal basis of the tangent bundle.
Let $\mathcal{D}$ be the twisted Dirac operator on $T^*M\otimes\Sigma M$, that is, $\mathcal{D}:=i\cdot(\id\otimes\gamma)\circ\nabla$,
where $\nabla$ denotes the induced covariant derivative on $T^*M\otimes\Sigma M$.

\begin{definition}
The \emph{Rarita-Schwinger operator} on the spin spacetime $M$ is defined by $\RS:=(\id-\iota\circ\gamma)\circ\mathcal{D}:\Cinf(M,\Siggi)\rightarrow\Cinf(M,\Siggi)$.
\end{definition}

By definition, the Rarita-Schwinger operator is pointwise obtained as the orthogonal projection onto $\Siggi$ of the twisted Dirac operator $\mathcal{D}$ restricted to a section of $\Siggi$. 
Using the above formula for $\iota$, the Rarita-Schwinger operator can be written down explicitly:
\[
\RS\psi
=
i\cdot\sum_{\beta=1}^me_\beta^*\otimes \sum_{\alpha=1}^m\epsilon_\alpha(e_\alpha\cdot\nabla_{e_\alpha}\phi_\beta-\frac{2}{m}e_\beta\cdot\nabla_{e_\alpha}\phi_\alpha)
\]
for all $\psi=\sum_{\beta=1}^me_\beta^*\otimes\psi_\beta\in\Cinf(M,\Siggi)$, where here $\nabla$ is the standard connection on $\Sigma M$.
It can be checked that $\RS$ is a formally self-adjoint linear differential operator of first order, with principal symbol
\[
\sigma_\RS(\xi):\psi\mapsto i\Big\{(\id\otimes\xi^\sharp\cdot)\psi-\frac{2}{m}\sum_{\beta=1}^me_\beta^*\otimes e_\beta\cdot(\xi^\sharp\lrcorner\psi)\Big\},
\]
for all $\psi=\sum_{\beta=1}^me_\beta^*\otimes\psi_\beta\in \Siggi$.
Here $X\lrcorner\psi$ denotes the insertion of the tangent vector $X$ in the first factor, that is, $X\lrcorner\psi:=\sum_{\beta=1}^me_\beta^*(X)\psi_\beta$.\\

\begin{lemma}\label{l:RSnotdef}
Let $M$ be a spin spacetime of dimension $m\ge 3$.
Then the characteristic variety of the Rarita-Schwinger operator of $M$ coincides with the set of lightlike covectors.
\end{lemma}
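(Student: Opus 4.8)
The plan is to compute the principal symbol $\sigma_\RS(\xi)$ explicitly for a fixed nonzero covector $\xi\in T^*_xM$ and determine for which $\xi$ it fails to be invertible on the fibre $\Siggi_x$. The characteristic variety is by definition the set of $\xi\neq 0$ for which $\sigma_\RS(\xi)$ is not an isomorphism; I want to show this equals the set of lightlike $\xi$, i.e.\ $\<\xi,\xi\>=0$.

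First I would treat the \emph{timelike/spacelike case} $\<\xi,\xi\>\neq 0$ and show $\sigma_\RS(\xi)$ is invertible there. The natural approach is to produce an explicit inverse, or at least show injectivity (fibrewise this suffices since $\sigma_\RS(\xi)$ is an endomorphism of a finite-dimensional space). Normalize so $\<\xi,\xi\>=\pm 1$ and set $X=\xi^\sharp$. Suppose $\sigma_\RS(\xi)\psi=0$ for $\psi=\sum_\beta e_\beta^*\otimes\psi_\beta\in\ker\gamma$, i.e.
\[
(\id\otimes X\cdot)\psi=\frac{2}{m}\sum_{\beta=1}^m e_\beta^*\otimes e_\beta\cdot(X\lrcorner\psi).
\]
Apply $\gamma$ (Clifford contraction in the spinor factor) to both sides: the left side gives $\sum_\beta \epsilon_\beta e_\beta\cdot X\cdot\psi_\beta = -X\cdot(\sum_\beta\epsilon_\beta e_\beta\cdot\psi_\beta) - 2\sum_\beta\xi(e_\beta)\psi_\beta= -X\cdot\gamma(\psi) - 2(X\lrcorner\psi)= -2(X\lrcorner\psi)$ using $\gamma(\psi)=0$; the right side gives $\frac{2}{m}\sum_\beta\epsilon_\beta e_\beta\cdot e_\beta\cdot(X\lrcorner\psi)=-\frac{2}{m}\cdot m\cdot(X\lrcorner\psi)\cdot(-1)$... — more carefully $\sum_\beta\epsilon_\beta e_\beta\cdot e_\beta = -m\cdot\id$, so the right side is $-2(X\lrcorner\psi)$. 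This identity is automatically satisfied, so I instead contract with $X$ itself (insert $X$ in the form-factor): $X\lrcorner\big((\id\otimes X\cdot)\psi\big)=X\cdot(X\lrcorner\psi)$ while the right side becomes $\frac{2}{m}\<X,X\>\,X\cdot(X\lrcorner\psi)\cdot(-1)$... this yields a scalar relation forcing $X\cdot(X\lrcorner\psi)=0$ unless a coefficient vanishes, hence $X\lrcorner\psi=0$ when $\<\xi,\xi\>\neq 0$; feeding this back gives $(\id\otimes X\cdot)\psi=0$, and since $X\cdot$ is invertible on spinors when $\<X,X\>\neq 0$, $\psi=0$. The bookkeeping of $\epsilon_\beta$'s and the Clifford relations $e_\alpha\cdot e_\beta+e_\beta\cdot e_\alpha=-2\<e_\alpha,e_\beta\>$ is the routine part I would not grind through here.

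Next, for the \emph{lightlike case} $\<\xi,\xi\>=0$ (with $\xi\neq 0$, which forces $m\ge 3$ for a nonzero lightlike covector orthogonal decomposition to behave as expected — actually $m\ge 2$ suffices but the hypothesis $m\ge 3$ ensures $\Siggi\neq 0$ and the splitting is nontrivial), I must exhibit a nonzero $\psi\in\ker\gamma$ with $\sigma_\RS(\xi)\psi=0$. The candidate is $\psi = X^*\otimes\varphi - \iota\gamma(X^*\otimes\varphi)$ projected into $\Siggi$, or more directly $\psi=\sum_\beta e_\beta^*\otimes(X\cdot e_\beta\cdot\varphi+\tfrac{1}{?}\dots)$ built from $X=\xi^\sharp$ and a suitable spinor $\varphi$, using that $X\cdot X\cdot=-\<X,X\>=0$. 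The key mechanism: when $X$ is null, $X\cdot$ is \emph{nilpotent} on spinors, so elements of the form $X^*\otimes(X\cdot\varphi)$ lie in the kernel of $\gamma$ (since $\gamma(X^*\otimes X\cdot\varphi)=X\cdot X\cdot\varphi=0$) and one checks $\sigma_\RS(\xi)$ annihilates them: the first term $(\id\otimes X\cdot)(X^*\otimes X\cdot\varphi)=X^*\otimes X\cdot X\cdot\varphi=0$, and the second term involves $X\lrcorner(X^*\otimes X\cdot\varphi)=\<X,X\>\,X\cdot\varphi=0$. Hence $\sigma_\RS(\xi)\psi=0$ with $\psi\neq 0$ as soon as $X\cdot\varphi\neq 0$ for some $\varphi$, which holds because $X\cdot$ is not identically zero (its square being zero does not make it zero when $m\ge 2$ and the spinor module is nontrivial).

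The main obstacle will be the lightlike case: I need $\psi$ to actually land in $\Siggi=\ker\gamma$ and to be genuinely nonzero, and the naive element $X^*\otimes X\cdot\varphi$ already lies in $\ker\gamma$ but I must double-check it is not killed by the projection defining $\Siggi$ (it cannot be proportional to $\iota$ of something unless it is zero, since $\iota(\Sigma M)$ and $\ker\gamma$ meet only in $0$) and confirm via a local-frame computation that both terms of $\sigma_\RS(\xi)$ vanish on it, handling the $\epsilon_\beta$ signs from the Lorentzian metric correctly when $X$ is null. Conversely, in the non-lightlike case the only subtlety is making the injectivity argument airtight on $\ker\gamma$ rather than on all of $T^*_xM\otimes\Sigma_x M$; projecting the computation onto $\Siggi$ at each step requires care but no new ideas.
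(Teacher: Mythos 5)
Your argument is correct and is essentially the paper's proof in basis-free form: contracting the kernel equation with $X=\xi^\sharp$ yields $(1-\tfrac{2}{m})\,X\cdot(X\lrcorner\psi)=0$, which is exactly the paper's step $e_1\cdot\psi_\beta=\tfrac{2}{m}e_\beta\cdot\psi_1$ evaluated at $\beta=1$ in an adapted frame, and your lightlike witness $\xi\otimes(X\cdot\varphi)$ is the same element the paper writes as $e_1^*\otimes\psi_1+e_2^*\otimes(-\psi_1)$ with $(e_1+e_2)\cdot\psi_1=0$. One small correction: the hypothesis $m\ge 3$ is needed precisely so that the coefficient $1-\tfrac{2}{m}$ is nonzero in the non-lightlike case (at $m=2$ the injectivity argument fails), not merely to make $\Siggi$ nontrivial as your parenthetical suggests.
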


\begin{proof}
By definition, the characteristic variety of $\RS$ is the set of nonzero co\-vectors $\xi$ for which $\sigma_\RS(\xi)$ is not invertible.
Fix an arbitrary point $x\in M$.
Let $\xi\in T_x^*M\setminus\{0\}$ be non-lightlike.
Without loss of generality we may assume that $\xi$ is normalized and that the Lorentz orthonormal basis is chosen so that $\xi^\sharp=e_1$.
Hence $\epsilon_1=1$ if $\xi$ is spacelike and $\epsilon_1=-1$ if $\xi$ is timelike.
Take $\psi=\sum_{\beta=1}^me_\beta^*\otimes\psi_\beta\in\mathrm{ker}(\sigma_\RS(\xi))$.
Then
\be 
0&=&\sum_{\beta=1}^me_\beta^*\otimes e_1\cdot\psi_\beta-\frac{2}{m}\sum_{\beta=1}^me_\beta^*\otimes e_\beta\cdot\psi_1\\
&=&\sum_{\beta=1}^me_\beta^*\otimes (e_1\cdot\psi_\beta-\frac{2}{m}e_\beta\cdot\psi_1),
\ee
which implies $e_1\cdot\psi_\beta=\frac{2}{m}e_\beta\cdot\psi_1$ for all $\beta\in\{1,\ldots,m\}$.
Choosing $\beta=1$, we obtain $e_1\cdot\psi_1=0$ because $m\geq 3$.
Hence $\psi_1=0$, from which $\psi_\beta=0$ follows for all $\beta\in\{1,\ldots,m\}$.
Hence $\psi=0$ and $\sigma_\RS(\xi)$ is invertible.

If $\xi\in T_x^*M\setminus\{0\}$ is lightlike, then we may assume that $\xi^\sharp=e_1+e_2$, where $\epsilon_1=-1$ and $\epsilon_2=1$.
Choose $\psi_1\in\Sigma_xM\setminus\{0\}$ with $(e_1+e_2)\cdot\psi_1=0$.
Such a $\psi_1$ exists because Clifford multiplication by a lightlike vector is nilpotent.
Set $\psi_2:=-\psi_1$ and $\psi:=e_1^*\otimes\psi_1+e_2^*\otimes\psi_2$.
Then $\psi\in \Sigma^{3/2}_xM\setminus\{0\}$ and
\[
-i\sigma_\RS(\xi)(\psi)=\sum_{j=1}^2e_j^*\otimes \underbrace{(e_1+e_2)\cdot\psi_j}_{=0}-\frac{2}{m}e_j^*\otimes e_j\cdot(\underbrace{\psi_1+\psi_2}_{=0})
=0.
\]
This shows $\psi\in\mathrm{ker}(\sigma_\RS(\xi))$ and hence $\sigma_\RS(\xi)$ is not invertible.
\end{proof}

The same proof shows that in the Riemannian case the Rarita-Schwinger operator is elliptic.

\begin{rem}
Since the characteristic variety of the Rarita-Schwinger operator is exactly that of the Dirac operator, Lemma~\ref{l:RSnotdef} together with \cite[Thms. 23.2.4 \& 23.2.7]{Hoerm3} imply that the Cauchy problem for $\RS$ is well-posed in case $M$ is globally hyperbolic.
This implies they $\RS$ has advanced and retarded Green's operators.
Hence $\RS$ is not of Dirac type but it is Green-hyperbolic.
\end{rem}

\begin{rem}
The equations originally considered by Rarita and Schwinger in \cite{RS} correspond to the twisted Dirac operator $\mathcal{D}$ restricted to $\Siggi$ but not projected back to $\Siggi$.
In other words, they considered the operator
$$
\mathcal{D}|_{C^\infty(M,\Siggi)} : C^\infty(M,\Siggi) \to C^\infty(M,T^*M\otimes \Sigma M) .
$$
These equations are over-determined.
Therefore it is not a surprise that non-trivial solutions restrict the geometry of the underlying manifold as observed by Gibbons \cite{Gib} and that this operator has no Green's operators.
\end{rem}

\subsection{Combining given operators into a new one}\label{ssec:sum}
Given two Green-hyperbolic operators we can form the direct sum and obtain a new operator in a trivial fashion.
It turns out that this operator is again Green-hyperbolic.
Note that the two operators need not have the same order.

\begin{lemma}
Let $S_1, S_2 \to M$ be two vector bundles over the globally hyperbolic manifold $M$.
Let $P_1$ and $P_2$ be two Green-hyperbolic operators acting on sections of $S_1$ and $S_2$ respectively.
Then 
$$
P_1 \oplus P_2 := \begin{pmatrix}
                         P_1 & 0 \cr
                         0   & P_2
                        \end{pmatrix}
: \Cinf(M,S_1\oplus S_2) \to \Cinf(M,S_1\oplus S_2)
$$
is Green-hyperbolic.
\end{lemma}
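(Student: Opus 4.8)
The plan is to verify directly that the direct sum of Green's operators for $P_1$ and $P_2$ yields Green's operators for $P_1\oplus P_2$, and to do this on every globally hyperbolic subregion so as to match Definition~\ref{def-Greenhyperbolic}.

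First I would fix a globally hyperbolic subregion $\Omega\subseteq M$. The restriction of $P_1\oplus P_2$ to $\Omega$ is plainly the direct sum of the restrictions of $P_1$ and $P_2$ to $\Omega$, and since $P_1,P_2$ are Green-hyperbolic these restrictions admit advanced and retarded Green's operators $G_\pm^{(1)}$ and $G_\pm^{(2)}$. I would then observe the canonical identification $\DD(\Omega,S_1\oplus S_2)=\DD(\Omega,S_1)\oplus\DD(\Omega,S_2)$, under which a compactly supported section $\phi$ corresponds to the pair $(\phi_1,\phi_2)$ of its components, with $\supp\phi=\supp\phi_1\cup\supp\phi_2$, and likewise $\Cinf(\Omega,S_1\oplus S_2)=\Cinf(\Omega,S_1)\oplus\Cinf(\Omega,S_2)$. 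Define $G_\pm:=G_\pm^{(1)}\oplus G_\pm^{(2)}$, i.e.\ $G_\pm(\phi_1,\phi_2):=(G_\pm^{(1)}\phi_1,G_\pm^{(2)}\phi_2)$.

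Next I would check the axioms of Definition~\ref{def-Green} componentwise. Since $(P_1\oplus P_2)\circ(G_\pm^{(1)}\oplus G_\pm^{(2)})=(P_1 G_\pm^{(1)})\oplus(P_2 G_\pm^{(2)})$ and similarly for the reversed composition, properties (G$_1$) and (G$_2$) for $G_\pm$ follow immediately from the corresponding properties of $G_\pm^{(1)}$ and $G_\pm^{(2)}$. For the support condition (G$_3^\pm$) I would use the elementary identity $J_\pm(A\cup B)=J_\pm(A)\cup J_\pm(B)$ for subsets of a spacetime: then $\supp(G_\pm\phi)=\supp(G_\pm^{(1)}\phi_1)\cup\supp(G_\pm^{(2)}\phi_2)\subseteq J_\pm^\Omega(\supp\phi_1)\cup J_\pm^\Omega(\supp\phi_2)=J_\pm^\Omega(\supp\phi)$. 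Hence $G_+$ and $G_-$ are advanced and retarded Green's operators for $(P_1\oplus P_2)|_\Omega$, and since $\Omega$ was an arbitrary globally hyperbolic subregion, $P_1\oplus P_2$ is Green-hyperbolic.

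I do not expect a genuine obstacle here: the statement is essentially a bookkeeping exercise. The only points that require a little care are the identification of compactly supported sections of a direct sum bundle with pairs of compactly supported sections (so that the block-diagonal operator really does act componentwise on $\DD$), and the behaviour of causal futures and pasts under finite unions; once these are noted, the three Green's-operator axioms are verified by a componentwise computation.
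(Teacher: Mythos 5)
Your proof is correct and follows exactly the paper's approach: the paper simply notes that the block-diagonal operator $\begin{pmatrix}G_1 & 0 \cr 0 & G_2\end{pmatrix}$ built from the advanced (resp.\ retarded) Green's operators of $P_1$ and $P_2$ is an advanced (resp.\ retarded) Green's operator for $P_1\oplus P_2$, leaving the componentwise verification implicit. Your write-up just makes explicit the details (restriction to an arbitrary globally hyperbolic subregion, the identification of sections of the sum bundle with pairs, and $J_\pm(A\cup B)=J_\pm(A)\cup J_\pm(B)$ for the support axiom) that the paper dismisses with ``clearly.''
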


\begin{proof}
If $G_1$ and $G_2$ are advanced Green's operators for $P_1$ and $P_2$ respectively, then clearly $\begin{pmatrix}G_1 & 0 \cr 0   & G_2\end{pmatrix}$ is an advanced Green's operator for $P_1 \oplus P_2$.
The retarded case is analogous.
\end{proof}

It is interesting to note that $P_1$ and $P_2$ need not have the same order.
Hence Green-hyperbolic operators need not be hyperbolic in the usual sense.
Moreover, it is not obvious that Green-hyperbolic operators have a well-posed Cauchy problem.
For instance, if $P_1$ is a wave operator and $P_2$ a Dirac-type operator, then along a Cauchy hypersurface one would have to prescribe the normal derivative for the $S_1$-component but not for the $S_2$-component.

\section{Algebras of observables}

Our next aim is to quantize the classical fields governed by Green-hyperbolic differential operators.
We construct local algebras of observables and we prove that we obtain locally covariant quantum field theories in the sense of \cite{BFV}.

\subsection{Bosonic quantization}\label{ss:CCRquant}

In this section we show how a quantization process based on canonical commutation relations (CCR) can be carried out for formally self-adjoint Green-hyperbolic operators.
This is a functorial procedure.
We define the first category involved in the quantization process.

\begin{definition}\label{def:GHG}
The category $\GlobHypGreen$ consists of the following objects and morphisms:
\beit\item An object in $\GlobHypGreen$ is a triple $(M,S,P)$, where 
\beit\item[$\RHD$]
$M$ is a globally hyperbolic spacetime, 
\item[$\RHD$]
$S$ is a real vector bundle over $M$ endowed with a non-degenerate inner product $\<\cdot\,,\cdot\>$ and 
\item[$\RHD$]
$P$ is a formally self-adjoint Green-hyperbolic operator acting on sections of $S$.
\eeit
\item A morphism between two objects $(M_1,S_1,P_1)$ and $(M_2,S_2,P_2)$ of $\GlobHypGreen$ is a pair $(f,F)$, where 
\beit\item[$\RHD$] 
$f$ is a time-orientation preserving isometric embedding $M_1\rightarrow M_2$ with $f(M_1)$ causally compatible and open in $M_2$, 
\item[$\RHD$] 
$F$ is a fiberwise isometric vector bundle isomorphism over $f$ such that the following diagram commutes:
\begin{equation}\label{diag:restrictD}
\xymatrix{\Cinf(M_2,S_2)\ar[r]^{P_2}\ar[dd]_{\res}&\Cinf(M_2,S_2)\ar[dd]_{\res}\\ &\\ \Cinf(M_1,S_1)\ar[r]^{P_1} &\Cinf(M_1,S_1),}
\end{equation}
where $\res(\phi):=F^{-1}\circ\phi\circ f$ for every $\phi\in \Cinf(M_2,S_2)$. 
\eeit
\eeit
\end{definition}

Note that morphisms exist only if the manifolds have equal dimension and the vector bundles have the same rank. 
Note furthermore, that the inner product $\<\cdot\,,\cdot\>$ on $S$ is not required to be positive or negative definite.

The causal compatibility condition, which is not automatically satisfied (see e.g. \cite[Fig. 33]{BGP}), ensures the commutation of the extension and restriction maps with the Green's operators:

\begin{lemma}\label{l:extcommutesGreen}
Let $(f,F)$ be a morphism between two objects $(M_1,S_1,P_1)$ and $(M_2,S_2,P_2)$ in the category $\GlobHypGreen$ and let $(G_1)_\pm$ and $(G_2)_\pm$ be the respective Green's operators for $P_1$ and $P_2$. 
Denote by $\ext(\phi)\in\DD(M_2,S_2)$ the extension by $0$ of $F\circ\phi\circ f^{-1}:f(M_1)\rightarrow S_2$ to $M_2$, for every $\phi\in\DD(M_1,S_1)$. 
Then 
\[\res\circ (G_2)_\pm\circ\ext=(G_1)_\pm.\]
\end{lemma}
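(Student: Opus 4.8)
The plan is to exploit the uniqueness of Green's operators (Remark~\ref{rem:GreenUnique1}) by showing that $\res\circ (G_2)_\pm\circ\ext$ satisfies the three defining axioms of an advanced (resp.\ retarded) Green's operator for $P_1$ on $M_1$. Once that is verified, uniqueness forces the claimed equality.

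First I would check axioms (G$_1$) and (G$_2$). The key input is the commutative diagram \eqref{diag:restrictD} in the definition of a morphism, which says $\res\circ P_2 = P_1\circ\res$ on $\Cinf(M_2,S_2)$, together with its counterpart for extensions. Concretely, for $\phi\in\DD(M_1,S_1)$ one has $\ext(P_1\phi) = P_2(\ext\phi)$: this holds on the open set $f(M_1)$ because $F$ intertwines $P_1$ and $P_2$ there, and it holds outside $\overline{f(M_1)}$ trivially since both sides vanish; the only subtlety is near $\partial f(M_1)$, but $\ext\phi$ and $P_2(\ext\phi)$ both vanish there because $\phi$ has compact support inside the open set $f(M_1)$, so $\ext\phi$ is already a smooth compactly supported section of $M_2$ vanishing in a neighborhood of the boundary. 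Granting $P_2\circ\ext = \ext\circ P_1$ and $\res\circ P_2 = P_1\circ\res$, one computes on $\DD(M_1,S_1)$:
\[
P_1\circ\big(\res\circ (G_2)_\pm\circ\ext\big) = \res\circ P_2\circ (G_2)_\pm\circ\ext = \res\circ\ext = \id,
\]
and similarly
\[
\big(\res\circ (G_2)_\pm\circ\ext\big)\circ P_1 = \res\circ (G_2)_\pm\circ P_2\circ\ext = \res\circ\ext = \id,
\]
using $\res\circ\ext = \id_{\DD(M_1,S_1)}$, which is immediate from the definitions of $\res$ and $\ext$ and $F^{-1}\circ F = \id$.

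Next I would verify the support condition (G$_3^\pm$). Take $\phi\in\DD(M_1,S_1)$ with $\supp\phi = K_1\subset M_1$; then $\supp(\ext\phi) = f(K_1)$, and by (G$_3^\pm$) for $(G_2)_\pm$ we get $\supp\big((G_2)_\pm\ext\phi\big)\subset J_\pm^{M_2}(f(K_1))$. Applying $\res$ restricts to $f(M_1)$ and pulls back along $f$, so $\supp\big(\res (G_2)_\pm\ext\phi\big)\subset f^{-1}\big(J_\pm^{M_2}(f(K_1))\cap f(M_1)\big)$. Here is where causal compatibility of $f(M_1)$ in $M_2$ enters: it gives $J_\pm^{M_2}(f(x))\cap f(M_1) = J_\pm^{f(M_1)}(f(x))$ for each $x$, hence $J_\pm^{M_2}(f(K_1))\cap f(M_1) = J_\pm^{f(M_1)}(f(K_1))$, and since $f$ is an isometry onto $f(M_1)$ this equals $f\big(J_\pm^{M_1}(K_1)\big)$. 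Therefore $\supp\big(\res (G_2)_\pm\ext\phi\big)\subset J_\pm^{M_1}(K_1)$, which is exactly (G$_3^\pm$).

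The main obstacle I anticipate is not any single deep step but the careful bookkeeping at the boundary $\partial f(M_1)$: one must make sure that $\ext\phi$ is genuinely smooth on all of $M_2$ (it is, since $\phi$ is compactly supported inside the \emph{open} set $f(M_1)$) and that $P_2$ applied to it still vanishes near the boundary, so that the intertwining identity $P_2\circ\ext = \ext\circ P_1$ is valid globally and not merely on $f(M_1)$. With that point handled, everything else is a direct consequence of the morphism axioms, causal compatibility, and the uniqueness statement of Remark~\ref{rem:GreenUnique1}, which lets us conclude $\res\circ (G_2)_\pm\circ\ext = (G_1)_\pm$.
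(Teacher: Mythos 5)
Your proof is correct and takes essentially the same route as the paper: one checks that $\res\circ (G_2)_\pm\circ\ext$ satisfies axioms (G$_1$), (G$_2$) via the intertwining relations $P_2\circ\ext=\ext\circ P_1$ and $\res\circ P_2=P_1\circ\res$, and (G$_3^\pm$) via causal compatibility, then invokes the uniqueness of Green's operators. The only difference is that you spell out in more detail the smoothness of $\ext\phi$ near $\partial f(M_1)$ and the identity $P_2\circ\ext=\ext\circ P_1$, which the paper treats as immediate from diagram~\eqref{diag:restrictD}.
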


\begin{proof}
Set $(\wit{G_1})_\pm:=\res\circ (G_2)_\pm\circ\ext$ and fix $\phi\in\DD(M_1,S_1)$.
First observe that the causal compatibility condition on $f$ implies that 
\be 
\supp((\wit{G_1})_\pm(\phi))&=&f^{-1}(\supp((G_2)_\pm\circ\ext(\phi)))\\
&\subset&f^{-1}(J_\pm^{M_2}(\supp(\ext(\phi))))\\
&=&f^{-1}(J_\pm^{M_2}(f(\supp(\phi))))\\
&=&J_\pm^{M_1}(\supp(\phi)).
\ee
In particular, $(\wit{G_1})_\pm(\phi)$ has spacelike compact support in $M_1$ and $(\wit{G_1})_\pm$ satisfies Axiom~${\rm (G_3)}$. 
Moreover, it follows from (\ref{diag:restrictD}) that $P_2\circ\ext=\ext\circ P_1$ on $\DD(M_1,S_1)$, which directly implies that $(\wit{G_1})_\pm$ satisfies Axioms ${\rm (G_1)}$ and ${\rm (G_2)}$ as well. 
The uniqueness of the advanced and retarded Green's operators on $M_1$ yields $(\wit{G_1})_\pm=(G_1)_\pm$.
\end{proof}

Next we show how the Green's operators for a formally self-adjoint Green-hyperbolic operator provide a symplectic vector space in a canonical way.
First we see how the Green's operators of an operator and of its formally dual operator are related.

\begin{lemma}\label{l:GpmDsa}
Let $M$ be a globally hyperbolic spacetime and $G_+,G_-$ the advanced and retarded Green's operators for a Green-hyperbolic operator $P$ acting on sections of $S\rightarrow M$.
Then the advanced and retarded Green's operators $G_+^*$ and $G_-^*$ for $P^*$ satisfy
\[
\int_M\< G_\pm^*\phi,\psi\>\dV=\int_M\<\phi,G_\mp\psi\>\dV 
\]
for all $\phi\in \DD(M,S^*)$ and $\psi\in \DD(M,S)$. 
\end{lemma}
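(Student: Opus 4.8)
The plan is to exploit the defining properties of Green's operators together with the duality between $P$ and $P^*$. First I would fix $\phi\in\DD(M,S^*)$ and $\psi\in\DD(M,S)$ and consider the pairing $\int_M\<G_+^*\phi,\psi\>\dV$. The idea is to insert $\psi = P G_- \psi$ (using axiom (G$_1$) for $G_-$) and then move $P$ across using the formal duality, obtaining $\int_M\<P^* G_+^*\phi, G_-\psi\>\dV = \int_M\<\phi,G_-\psi\>\dV$ by axiom (G$_1$) for $G_+^*$. The only subtlety is that the formal duality identity $\int_M\<\alpha,P\beta\>\dV=\int_M\<P^*\alpha,\beta\>\dV$ requires $\supp(\alpha)\cap\supp(\beta)$ to be compact, and here $\alpha = G_+^*\phi$, $\beta = G_-\psi$ are only sections with noncompact (spacelike compact, in fact future/past compact) support. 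So I must first check that the supports intersect in a compact set.

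The support check is where the causal structure enters, and I expect this to be the main (though still routine) obstacle. By axiom (G$_3^+$) for $G_+^*$ we have $\supp(G_+^*\phi)\subset J_+^M(\supp\phi)$, and by (G$_3^-$) for $G_-$ we have $\supp(G_-\psi)\subset J_-^M(\supp\psi)$. Since $\supp\phi$ and $\supp\psi$ are compact, the intersection $J_+^M(\supp\phi)\cap J_-^M(\supp\psi)$ is compact in a globally hyperbolic spacetime (this is a standard property, e.g.\ \cite[Lem.~A.5.7]{BGP}). Hence the duality relation applies. I would carry out the computation in the order: (i) write $\int_M\<G_+^*\phi,\psi\>\dV = \int_M\<G_+^*\phi, P G_-\psi\>\dV$; (ii) justify support compactness as above; (iii) apply formal duality to get $\int_M\<P^*G_+^*\phi, G_-\psi\>\dV$; (iv) apply (G$_1$) for $G_+^*$ to get $\int_M\<\phi, G_-\psi\>\dV$. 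This proves the identity with the upper sign; the lower sign is obtained by interchanging the roles of advanced and retarded throughout (i.e.\ insert $\psi = PG_+\psi$ and use $G_-^*$), or equivalently by time-reversal symmetry of the argument.

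One should also note for cleanliness that it does not matter whether one uses (G$_1$) or (G$_2$) at step (i): using $\psi\in\DD(M,S)$ means $PG_-\psi=\psi$ holds by (G$_1$), and in step (iv) the section $G_+^*\phi$ has been produced from $\phi\in\DD(M,S^*)$ so $P^*G_+^*\phi=\phi$ by (G$_1$) again. The argument is symmetric in $G_+\leftrightarrow G_-$ and uses only the axioms (G$_1$), (G$_3^\pm$), the formal duality defining $P^*$, and the global hyperbolicity of $M$; no Cauchy-problem input is needed, consistent with the paper's overall philosophy. This lemma will presumably be used to transport the symplectic (or here, symmetric) pairing built from $G:=G_+-G_-$ through the duality $S\leftrightarrow S^*$, which is why the precise sign swap $G_\pm^*\leftrightarrow G_\mp$ matters.
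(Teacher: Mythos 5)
Your proposal is correct and follows essentially the same route as the paper: insert $\psi=PG_\mp\psi$ via (G$_1$), integrate by parts using the formal duality, and conclude with (G$_1$) for $G_\pm^*$, with the compactness of $J_\pm^M(\supp\phi)\cap J_\mp^M(\supp\psi)$ justifying the integration by parts exactly as in the paper's proof.
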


\begin{proof}
Axiom ${\rm (G_1)}$ for the Green's operators implies that
\be 
\int_M\< G_\pm^*\phi,\psi\>\dV&=&\int_M\<G_\pm^*\phi,P(G_\mp\psi)\>\dV\\
&=&\int_M\<P^*(G_\pm^*\phi),G_\mp \psi\>\dV\\
&=&\int_M\<\phi,G_\mp \psi\>\dV,
\ee
where the integration by parts is justified since $\supp(G_\pm^*\phi)\cap\supp(G_\mp\psi)\subset J_\pm^M(\supp(\phi))\cap J_\mp^M(\supp(\psi))$ is compact.
\end{proof}

\begin{prop}\label{p:symplstruct}
Let $(M,S,P)$ be an object in the category $\GlobHypGreen$.
Set $G:=G_+-G_-$, where $G_+, G_-$ are the advanced and retarded Green's operator for $P$, respectively.

Then the pair $(\SYMPL(M,S,P),\omega)$ is a symplectic vector space, where
\[
\SYMPL(M,S,P):=\DD(M,S)/\ker(G)
\quad\mbox{ and }\quad
\omega([\phi],[\psi]):=\int_M\< G\phi,\psi\> \dV.
\]
Here the square brackets $[\cdot]$ denote residue classes modulo $\ker(G)$.
\end{prop}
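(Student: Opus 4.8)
The plan is to verify the three defining properties of a symplectic vector space: that $\omega$ is well-defined on the quotient, that $\omega$ is skew-symmetric (this is where formal self-adjointness of $P$ enters), and that $\omega$ is non-degenerate on $\SYMPL(M,S,P)$. The well-definedness is essentially built into the definition of the quotient, so the real content is in skew-symmetry and non-degeneracy, and the main obstacle will be the non-degeneracy, which requires a support/solvability argument showing that $\ker(G)$ is exactly $P(\DD(M,S))$.

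First I would treat skew-symmetry. Since $S$ carries a non-degenerate inner product, one may identify $S$ with $S^*$, and under this identification $P$ being formally self-adjoint means $P^* = P$ (as an operator on $\Cinf(M,S)$), and the Green's operators for $P^*$ become $G_\pm^* = G_\pm$ read through this identification. Then Lemma~\ref{l:GpmDsa} gives, for $\phi,\psi\in\DD(M,S)$,
\[
\int_M \<G_\pm\phi,\psi\>\dV = \int_M \<\phi,G_\mp\psi\>\dV,
\]
so that $\int_M\<G\phi,\psi\>\dV = \int_M\<(G_+-G_-)\phi,\psi\>\dV = \int_M\<\phi,(G_--G_+)\psi\>\dV = -\int_M\<G\psi,\phi\>\dV$. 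By symmetry of the inner product this reads $\omega([\phi],[\psi]) = -\omega([\psi],[\phi])$. (One should check the integral is finite: $\supp(G_\pm\phi)\cap\supp(\psi)$ lies in $J_\pm^M(\supp\phi)\cap\supp\psi$, which is compact since $\supp\phi$ is compact and $\supp\psi$ is compact.) This same computation incidentally shows $\omega$ descends to the quotient: if $\phi\in\ker(G)$ then $\omega([\phi],[\psi]) = \int_M\<G\phi,\psi\>\dV = 0$ for all $\psi$, and by skew-symmetry likewise in the second argument, so $\omega$ is well-defined on $\SYMPL(M,S,P)\times\SYMPL(M,S,P)$; bilinearity is clear.

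The hard part will be non-degeneracy: I must show that if $\omega([\phi],[\psi])=0$ for all $[\psi]$, then $[\phi]=0$, i.e.\ $\phi\in\ker(G)$. Non-degeneracy of the inner product on $S$ together with $\omega([\phi],[\psi])=\int_M\<G\phi,\psi\>\dV = 0$ for all $\psi\in\DD(M,S)$ forces $G\phi = 0$ pointwise, hence $\phi\in\ker(G)$ directly --- so actually non-degeneracy is immediate from non-degeneracy of $\<\cdot,\cdot\>$ and the fundamental lemma of the calculus of variations. The one genuine subtlety is that for this argument to even make sense one needs $G\phi\in\Cinf(M,S)$ to be a smooth section and the pairing $\int_M\<G\phi,\psi\>\dV$ to be a well-defined finite integral for all test sections $\psi$ --- which holds because $G\phi$ has spacelike compact support (its support lies in $J_+^M(\supp\phi)\cup J_-^M(\supp\phi)$, whose intersection with any Cauchy hypersurface is compact), so the integrand is compactly supported. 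Putting these three verifications together establishes that $(\SYMPL(M,S,P),\omega)$ is a symplectic vector space.
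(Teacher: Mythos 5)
Your proof is correct and takes essentially the same route as the paper: skew-symmetry via Lemma~\ref{l:GpmDsa} combined with formal self-adjointness of $P$, and non-degeneracy from the observation that the null-space of the bilinear form $(\phi,\psi)\mapsto\int_M\<G\phi,\psi\>\dV$ on $\DD(M,S)$ is exactly $\ker(G)$, by non-degeneracy of the fiber metric. The paper's proof is simply a terser version of yours; as you correctly noticed mid-argument, the identification $\ker(G)=P(\DD(M,S))$ (Theorem~\ref{pexactseq}) is not needed for this proposition.
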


\begin{proof}
The bilinear form $(\phi,\psi)\mapsto\int_M\< G\phi,\psi\> \dV$ on $\DD(M,S)$ is skew-symmetric as a consequence of Lemma~\ref{l:GpmDsa} because $P$ is formally self-adjoint. 
Its null-space is exactly $\ker(G)$. 
Therefore the induced bilinear form $\omega$ on the quotient space $\SYMPL(M,S,P)$ is non-degenerate and hence a symplectic form.
\end{proof}

Put $\Cinfsc(M,S):=\{\phi\in \Cinf(M,S)\,|\,\mathrm{supp}(\phi)\textrm{ is spacelike compact}\}$.
The next result will in particular show that we can consider $\SYMPL(M,S,P)$ as the space of smooth solutions of the equation $P\phi=0$ which have spacelike compact support.

\begin{thm}\label{pexactseq}
Let $M$ be a Lorentzian manifold, let $S\to M$ be a vector bundle, and let $P$ be a Green-hyperbolic operator acting on sections of $S$.
Let $G_\pm$ be advanced and retarded Green's operators for $P$, respectively.
Put 
$$
G:=G_+-G_-: \DD(M,S) \to \Cinfsc(M,S).
$$
Then the following linear maps form a complex:
\begin{equation}\label{exactseq}
 \{0\}\to\DD(M,S)\xrightarrow{P}\DD(M,S)\xrightarrow{G}\Cinfsc(M,S)\xrightarrow{P}\Cinfsc(M,S).
\end{equation}
This complex is always exact at the first $\DD(M,S)$.
If $M$ is globally hyperbolic, then the complex is exact everywhere.
\end{thm}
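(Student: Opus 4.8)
The plan is to verify the complex property first, then establish exactness at each of the three remaining spots, working from left to right and using the Green's operator axioms (G$_1$)--(G$_3^\pm$) at every stage.

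\emph{Complex property.} For the composition $G\circ P$ on $\DD(M,S)$: by (G$_1$) and (G$_2$) we have $G_\pm\circ P = \id$ on $\DD(M,S)$, so $G\circ P = (G_+ - G_-)\circ P = \id - \id = 0$. For the composition $P\circ G$ on $\DD(M,S)$: by (G$_1$) we have $P\circ G_\pm = \id$, so $P\circ G = 0$. One also has to note that $G$ indeed maps into $\Cinfsc(M,S)$: by (G$_3^\pm$), $\supp(G\phi)\subset J_+^M(\supp\phi)\cup J_-^M(\supp\phi) = J^M(\supp\phi)$, which is spacelike compact since $\supp\phi$ is compact.

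\emph{Exactness at the first $\DD(M,S)$} (always, no global hyperbolicity needed). If $\phi\in\DD(M,S)$ with $P\phi=0$, then applying (G$_2$) gives $\phi = G_+(P\phi) = 0$. So $P$ is injective there.

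\emph{Exactness at the second $\DD(M,S)$} (assuming $M$ globally hyperbolic). Suppose $\psi\in\DD(M,S)$ with $G\psi=0$, i.e.\ $G_+\psi = G_-\psi =: \chi$. Then $\chi\in\Cinf(M,S)$ and by (G$_3^+$), (G$_3^-$), $\supp\chi\subset J_+^M(\supp\psi)\cap J_-^M(\supp\psi)$, which is compact in a globally hyperbolic spacetime; hence $\chi\in\DD(M,S)$. By (G$_1$), $P\chi = P G_+\psi = \psi$. So $\psi = P\chi$ lies in the image of $P$.

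\emph{Exactness at the first $\Cinfsc(M,S)$} (assuming $M$ globally hyperbolic). This is the main point and the place I expect the real work. Suppose $\chi\in\Cinfsc(M,S)$ with $P\chi = 0$; we want $\chi = G\psi$ for some $\psi\in\DD(M,S)$. The idea is to cut $\chi$ by a time function. Using the Bernal--S\'anchez splitting (Theorem~\ref{tBernalSanchez}), write $M\cong\R\times\Sigma$ with Cauchy hypersurfaces $\Sigma_t=\{t\}\times\Sigma$, pick a smooth function $\rho$ that is $\equiv 1$ on $J_+^M(\Sigma_{t_1})$ and $\equiv 0$ on $J_-^M(\Sigma_{t_0})$ for suitable $t_0<t_1$, and set $\psi:=P(\rho\chi)$. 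Since $\supp(\rho\chi)$ and $\supp((1-\rho)\chi)$ are both spacelike compact and contained in a half-space bounded by a Cauchy hypersurface (hence past-compact resp.\ future-compact), and because $\supp\chi$ itself is spacelike compact, one checks that $\psi=P(\rho\chi)=-P((1-\rho)\chi)$ has support in the compact set $\supp(d\rho)\cap\supp\chi$ (here $P\chi=0$ is used, so $P(\rho\chi)$ involves only derivatives hitting $\rho$), so $\psi\in\DD(M,S)$. Then $G_+\psi = G_+ P(\rho\chi) = \rho\chi$ by (G$_2$) — this step requires a short argument that $G_+\circ P$ acts as the identity on sections with past-compact (not just compact) support, which follows from the support properties of $G_+$ together with uniqueness of Green's operators; similarly $G_-\psi = G_- P(-(1-\rho)\chi) = -(1-\rho)\chi + \chi\cdot 0$... more precisely $G_-\psi = -(1-\rho)\chi$. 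Hence $G\psi = G_+\psi - G_-\psi = \rho\chi + (1-\rho)\chi = \chi$, as desired.

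\emph{Exactness at the last $\Cinfsc(M,S)$} (assuming $M$ globally hyperbolic). Given $\eta\in\Cinfsc(M,S)$, we must find $\chi\in\Cinfsc(M,S)$ with $P\chi=\eta$. Decompose $\eta = \rho\eta + (1-\rho)\eta$ with $\rho$ as above; then $\rho\eta$ has past-compact spacelike-compact support and $(1-\rho)\eta$ has future-compact spacelike-compact support. Extend $G_+$ to sections with past-compact support and $G_-$ to sections with future-compact support (the support condition (G$_3^\pm$) makes the integrals/definitions meaningful, and one invokes the same uniqueness argument), and set $\chi := G_+(\rho\eta) + G_-((1-\rho)\eta)$. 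Using (G$_1$) in the extended sense, $P\chi = \rho\eta + (1-\rho)\eta = \eta$; and $\supp\chi\subset J^M(\supp\eta)$ is spacelike compact.

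The crux throughout the two rightmost exactness statements is the extension of $G_\pm$ from compactly supported sections to sections with past/future compact support and the verification that the Green's operator identities persist; this uses global hyperbolicity essentially (so that $J_+^M(K)\cap J_-^M(K')$ is compact, and that a Cauchy temporal function with the required level-set geometry exists). I would isolate this extension as a preliminary lemma before assembling the four exactness arguments.
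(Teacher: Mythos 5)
Your argument is correct in outline and uses the same basic strategy as the paper for the one nontrivial step (exactness at the first $\Cinfsc(M,S)$): cut a spacelike compact solution $\chi$ into a past-compactly and a future-compactly supported piece and apply $P$ to one piece to produce the compactly supported preimage $\psi$. The difference lies in how the identity $G_+\psi=\rho\chi$ is justified. You defer it to an unproved extension lemma asserting that $G_+\circ P=\id$ persists on sections with past-compact support, citing ``the support properties of $G_+$ together with uniqueness of Green's operators''; that citation is not quite the right tool. What the step actually needs is the uniqueness statement of Remark~\ref{rem:futurecompact}: if $Pv=0$ and $\supp(v)$ is past-compact, then $v=0$ (apply it to $v:=G_+(P(\rho\chi))-\rho\chi$). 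That remark is proved independently by a duality computation, so your route closes up, but as written the key step is asserted rather than proved. The paper avoids the extension lemma altogether: it verifies $G_+\psi=\phi_+$ directly by testing against arbitrary $\chi\in\DD(M,S^*)$ and integrating by parts via Lemma~\ref{l:GpmDsa}, which keeps everything inside the axioms (G$_1$)--(G$_3^\pm$) for compactly supported sections. Your approach buys a reusable principle (Green's operators extend to past/future-compact supports) at the cost of having to prove it; the paper's is more self-contained. Finally, your last paragraph proves surjectivity of $P:\Cinfsc(M,S)\to\Cinfsc(M,S)$; this is not part of the theorem (the complex does not end in $\to\{0\}$, so no exactness is claimed at the final term), and it genuinely requires the extension of $G_\pm$ to non-compactly supported sections, which this paper never develops --- you should either drop it or present that extension as a separate, fully proved lemma.
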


\begin{proof}
The proof follows the lines of \cite[Thm.~3.4.7]{BGP} where the result was shown for wave operators.
First note that, by (G$_3^\pm$) in the definition of Green's operators, we have that $G_\pm : \DD(M,S) \to \Cinfsc(M,S)$.
It is clear from (G$_1$) and (G$_2$) that $PG=GP=0$ on $\DD(M,S)$, hence \eqref{exactseq} is a complex.

If $\phi\in\DD(M,S)$ satisfies $P\phi=0$, then by (G$_2$) we have $\phi=G_+P\phi=0$ which shows that $P_{|_{\DD(M,S)}}$ is injective.
Thus the complex is exact at the first $\DD(M,S)$.

From now on let $M$ be globally hyperbolic.
Let $\phi\in\DD(M,S)$ with $G\phi=0$, i.e., $G_+\phi=G_-\phi$. 
We put $\psi:=G_+\phi=G_-\phi \in \Cinf(M,S)$ and we see that
$\supp(\psi)=\supp(G_+\phi)\cap\supp(G_-\phi)\subset J_+(\supp(\phi))\cap 
J_-(\supp(\phi))$.
Since $(M,g)$ is globally hyperbolic $J_+(\supp(\phi))\cap
J_-(\supp(\phi))$ is compact, hence $\psi\in \DD(M,S)$.
From $P\psi = PG_+\phi = \phi$ we see that $\phi\in P(\DD(M,S))$.
This shows exactness at the second $\DD(M,S)$.

It remains to show that any $\phi\in \Cinfsc(M,S)$ with $P\phi=0$ is of the form $\phi=G\psi$ with $\psi\in \DD(M,S)$.
Using a cut-off function decompose $\phi$ as $\phi=\phi_+-\phi_-$ where $\supp(\phi_\pm) \subset J_\pm(K)$ where $K$ is a suitable compact subset of $M$.
Then $\psi:=P\phi_+=P\phi_-$ satisfies $\supp(\psi)\subset J_+(K)\cap J_-(K)$.
Thus $\psi\in \DD(M,S)$.
We check that $G_+\psi=\phi_+$.
Namely, for all $\chi\in\DD(M,S^*)$ we have by Lemma~\ref{l:GpmDsa}
$$
\int_M \<\chi,G_+P\phi_+\> \dV =
\int_M \<G_-^{*}\chi,P\phi_+\> \dV =
\int_M \<P^*G_-^{*}\chi,\phi_+\> \dV =
\int_M \<\chi,\phi_+\> \dV .
$$
The integration by parts in the second equality is justified because $\supp(\phi_+) \cap \supp(G^*_-\chi) \subset J_+(K)\cap J_-(\supp(\chi))$ is compact.
Similarly, one shows $G_-\psi=\phi_-$.
Now $G\psi = G_+\psi - G_-\psi = \phi_+ - \phi_- = \phi$ which concludes the proof. 
\end{proof}

In particular, given an object $(M,S,P)$ in $\GlobHypGreen$, the map $G$ induces an isomorphism from 
$$
\SYMPL(M,S,P)=\DD(M,S)/\ker(G) \xrightarrow{\cong} \ker(P)\cap \Cinfsc(M,S).
$$ 

\begin{rem}\label{rem:futurecompact}
Exactness at the first $\DD(M,S)$ in sequence \eqref{exactseq} says that there are no non-trivial smooth solutions of $P\phi=0$ with compact support.
Indeed, if $M$ is globally hyperbolic, more is true.

\emph{If $\phi\in\Cinf(M,S)$ solves $P\phi=0$ and $\supp(\phi)$ is future or past-compact, then $\phi=0$.}

Here a subset $A\subset M$ is called future-compact if $A\cap J_+(x)$ is compact for any $x\in M$.
Past-compactness is defined similarly.

\begin{proof}
Let $\phi\in\Cinf(M,S)$ solve $P\phi=0$ such that $\supp(\phi)$ is future-compact.
For any $\chi\in\DD(M,S^*)$ we have
$$
\int_M\<\chi,\phi\>\dV
=
\int_M\<P^*G^*_+\chi,\phi\>\dV
=
\int_M\<G^*_+\chi,P\phi\>\dV
=
0.
$$
This shows $\phi=0$.
The integration by parts is justified because $\supp(G^*_+\chi)\cap\supp(\phi) \subset J_+(\supp(\chi))\cap\supp(\phi)$ is compact, see \cite[Lemma~A.5.3]{BGP}.
\end{proof}
\end{rem}

\begin{rem}\label{rem:GreenUnique}
Let $M$ be a globally hyperbolic spacetime and $(M,S,P)$ an object in $\GlobHypGreen$.
{\em Then the Green's operators $G_+$ and $G_-$ are unique.}
Namely, if $G_+$ and $\tilde G_+$ are advanced Green's operators for $P$, then for any $\phi\in\DD(M,S)$ the section $\psi:=G_+\phi-\tilde G_+\phi$ has past-compact support and satisfies $P\psi=0$.
By the previous remark, we have $\psi=0$ which shows $G_+=\tilde G_+$.
\end{rem}

Now, let $(f,F)$ be a morphism between two objects $(M_1,S_1,P_1)$ and $(M_2,S_2,P_2)$ in the category $\GlobHypGreen$.
For $\phi\in\DD(M_1,S_1)$ consider the extension by zero $\ext(\phi)\in\DD(M_2,S_2)$ as in Lemma~\ref{l:extcommutesGreen}. 

\begin{lemma}\label{l:symplfunctor}
Given a morphism $(f,F)$ between two objects $(M_1,S_1,P_1)$ and $(M_2,S_2,P_2)$ in the category $\GlobHypGreen$, extension by zero induces a symplectic linear map $\SYMPL(f,F):\SYMPL(M_1,S_1,P_1)\rightarrow \SYMPL(M_2,S_2,P_2)$.

Moreover, 
\begin{equation}
\SYMPL(\id_M,\id_S)=\id_{\SYMPL(M,S,P)}
\label{eq:symplid} 
\end{equation}
and for any further morphism $(f',F'):(M_2,S_2,P_2)\rightarrow(M_3,S_3,P_3)$ one has
\begin{equation}
\SYMPL((f',F')\circ(f,F))=\SYMPL(f',F')\circ\SYMPL(f,F).
\label{eq:symplfunk}
\end{equation}
\end{lemma}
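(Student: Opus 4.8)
The plan is to verify, in order: that extension by zero descends to the quotients, that the descended map preserves the symplectic forms, and finally the two functoriality identities \eqref{eq:symplid} and \eqref{eq:symplfunk}. Throughout write $G_i:=(G_i)_+-(G_i)_-$ for the operators attached to $(M_i,S_i,P_i)$ as in Proposition~\ref{p:symplstruct}, and let $\omega_i$ denote the symplectic form on $\SYMPL(M_i,S_i,P_i)$.

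The first and most delicate point is well-definedness, i.e.\ that $\ext$ maps $\ker(G_1)$ into $\ker(G_2)$; once this is known, $\SYMPL(f,F)[\phi]:=[\ext\phi]$ is a well-defined linear map, linearity being immediate from linearity of $\ext$. So let $\phi\in\DD(M_1,S_1)$ with $G_1\phi=0$. Since $M_1$ is globally hyperbolic, exactness of the complex \eqref{exactseq} at the second $\DD(M_1,S_1)$ (Theorem~\ref{pexactseq}) produces $\eta\in\DD(M_1,S_1)$ with $\phi=P_1\eta$. Using the identity $P_2\circ\ext=\ext\circ P_1$ on $\DD(M_1,S_1)$ (derived in the proof of Lemma~\ref{l:extcommutesGreen} from the commuting square \eqref{diag:restrictD}), we obtain $\ext\phi=P_2(\ext\eta)$ with $\ext\eta\in\DD(M_2,S_2)$, whence $G_2\ext\phi=G_2P_2\ext\eta=0$ because \eqref{exactseq} is a complex. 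I expect this to be the main obstacle: a priori the restriction map $\res$ has a huge kernel, so one cannot conclude directly from the identity $\res(G_2\ext\phi)=G_1\phi$ of Lemma~\ref{l:extcommutesGreen}; factoring $\phi$ through $P_1$ via exactness on $M_1$ is exactly what bypasses this.

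Next I would check that $\SYMPL(f,F)$ is symplectic. Fix $\phi,\psi\in\DD(M_1,S_1)$. The section $\ext\psi$ has support the compact set $\supp(\ext\psi)\subset f(M_1)$ and equals $F\circ\psi\circ f^{-1}$ there, while Lemma~\ref{l:extcommutesGreen} gives $(G_2\ext\phi)\circ f=F\circ(G_1\phi)$ on $M_1$. Hence, using that $f$ is an isometric embedding (so it pulls the volume element of $M_2$ back to that of $M_1$) and that $F$ is fiberwise isometric,
\[
\omega_2([\ext\phi],[\ext\psi])=\int_{M_2}\langle G_2\ext\phi,\ext\psi\rangle\dV=\int_{f(M_1)}\langle F\circ(G_1\phi)\circ f^{-1},\,F\circ\psi\circ f^{-1}\rangle\dV=\int_{M_1}\langle G_1\phi,\psi\rangle\dV,
\]
and the right-hand side is $\omega_1([\phi],[\psi])$. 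In particular $\SYMPL(f,F)$ is injective since $\omega_1$ is non-degenerate.

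Finally, \eqref{eq:symplid} is clear: for $f=\id_M$ and $F=\id_S$ the extension by zero is the identity of $\DD(M,S)$, so it induces the identity on $\SYMPL(M,S,P)$. For \eqref{eq:symplfunk}, write the composite morphism as $(f'\circ f,\,F'\circ F)$ and denote by $\ext$, $\ext'$ and $\ext''$ the extension maps attached to $(f,F)$, $(f',F')$ and the composite. Since $\ext\phi$ vanishes outside the compact set $\supp(\ext\phi)\subset f(M_1)$, the section $\ext'(\ext\phi)$ is supported in $f'(f(M_1))$ and on that set equals $F'\circ F\circ\phi\circ(f'\circ f)^{-1}$, which is precisely $\ext''\phi$; hence $\ext'\circ\ext=\ext''$ as maps $\DD(M_1,S_1)\to\DD(M_3,S_3)$, and passing to the quotients yields $\SYMPL(f',F')\circ\SYMPL(f,F)=\SYMPL((f',F')\circ(f,F))$.
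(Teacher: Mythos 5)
Your proposal is correct and follows essentially the same route as the paper: well-definedness via the identification $\ker(G_1)=P_1(\DD(M_1,S_1))$ from Theorem~\ref{pexactseq} combined with $P_2\circ\ext=\ext\circ P_1$, the symplectic property via Lemma~\ref{l:extcommutesGreen} together with the support of $\ext\psi$ lying in $f(M_1)$, and the functoriality identities observed to be immediate. The only difference is expository: you spell out the intermediate steps (and the reason a naive use of $\res$ would not suffice) that the paper leaves implicit.
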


\begin{proof}
If $\phi=P_1\psi\in\ker(G_1)=P_1(\DD(M_1,S_1))$, then $\ext(\phi)=P_2(\ext(\psi))\in P_2(\DD(M_2,S_2))=\ker(G_2)$.
Hence $\ext$ induces a linear map 
$$
\SYMPL(f,F):\DD(M_1,S_1)/\ker(G_1)\to \DD(M_2,S_2)/\ker(G_2).
$$
Furthermore, applying Lemma~\ref{l:extcommutesGreen}, we have, for any $\phi,\psi\in\DD(M_1,S_1)$
$$
\int_{M_2}\< G_2(\ext(\phi)),\ext(\psi)\> \dV
=
\int_{M_1}\<\res\circ G_2\circ\ext(\phi),\psi\> \dV
=
\int_{M_1}\<G_1\phi,\psi\> \dV,
$$ 
hence $\SYMPL(f,F)$ is symplectic.
Equation \eqref{eq:symplid} is trivial and extending once or twice by $0$ amounts to the same, so (\ref{eq:symplfunk}) holds as well.
\end{proof}

\begin{rem}\label{rem:symplanders}
Under the isomorphism $\SYMPL(M,S,P) \to \ker(P) \cap \Cinfsc(M,S)$ induced by $G$, the extension by zero corresponds to an extension as a smooth solution of $P\phi=0$ with spacelike compact support.
This follows directly from Lemma~\ref{l:extcommutesGreen}.
In other words, for any morphism $(f,F)$ from $(M_1,S_1,P_1)$ to $(M_2,S_2,P_2)$ in $\GlobHypGreen$ we have the following commutative diagram:
$$
\xymatrix{
\SYMPL(M_1,S_1,P_1) \ar[rr]^{\SYMPL(f,F)} \ar[d]_\cong
&&  
\SYMPL(M_2,S_2,P_2) \ar[d]^\cong\\
\ker(P_1)\cap \Cinfsc(M_1,S_1)  \ar[rr]^{\mathrm{extension\, as}}_{\mathrm{a\, solution}}
&&  
\ker(P_2)\cap \Cinfsc(M_2,S_2) .
}
$$
\end{rem}

Let $\Sympl$ denote the category of real symplectic vector spaces with symplectic linear maps as morphisms.
Lemma~\ref{l:symplfunctor} says that we have constructed a covariant functor
\[\SYMPL:\GlobHypGreen\longrightarrow\Sympl.\]

In order to obtain an algebra-valued functor, we compose $\SYMPL$ with the functor $\CCR$ which associates to any symplectic vector space its Weyl algebra.
Here ``CCR'' stands for ``canonical commutation relations''.
This is a general algebraic construction which is independent of the context of Green-hyperbolic operators and which is carried out in Section~\ref{s:appendixCCR}. 
As a result, we obtain the functor 
\[\Abos := \CCR\circ\SYMPL:\GlobHypGreen\longrightarrow\CAlg,\]
where $\CAlg$ is the category whose objects are the unital \Cstar-algebras and whose morphisms are the injective unit-preserving \Cstar-morphisms.

In the remainder of this section we show that the functor $\CCR\circ\SYMPL$ is a bosonic locally covariant quantum field theory.
We call two subregions $M_1$ and $M_2$ of a spacetime $M$ \emph{causally disjoint} if and only if $J^M(M_1)\cap M_2=\emptyset$.
In other words, there are no causal curves joining $M_1$ and $M_2$.

\begin{thm}\label{thm:Abos}
The functor $\Abos:\GlobHypGreen\longrightarrow\CAlg$ is a bosonic locally covariant quantum field theory, i.e., the following axioms hold:
\begin{enumerate}[(i)]
\item\label{quantcaus}
\textbf{(Quantum causality)}
Let $(M_j,S_j,P_j)$ be objects in $\GlobHypGreen$, $j=1,2,3$, and $(f_j,F_j)$ morphisms from $(M_j,S_j,P_j)$ to $(M_3,S_3,P_3)$, $j=1,2$, such that $f_1(M_1)$ and $f_2(M_2)$ are causally disjoint regions in $M_3$.

Then the subalgebras $\Abos(f_1,F_1)(\Abos(M_1,S_1,P_1))$ and $\Abos(f_2,F_2)(\Abos(M_2,S_2,P_2))$ of\, $\Abos(M_3,S_3,P_3)$ commute. 
\item\label{timeslice}
\textbf{(Time slice axiom)}
Let $(M_j,S_j,P_j)$ be objects in $\GlobHypGreen$, $j=1,2$, and $(f,F)$ a morphism from $(M_1,S_1,P_1)$ to $(M_2,S_2,P_2)$ such that there is a Cauchy hypersurface $\Sigma\subset M_1$ for which $f(\Sigma)$ is a Cauchy hypersurface of $M_2$.
Then 
$$
\Abos(f,F):\Abos(M_1,S_1,P_1) \to \Abos(M_2,S_2,P_2)
$$ 
is an isomorphism.
\end{enumerate}
\end{thm}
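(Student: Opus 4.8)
The plan is to reduce both axioms to properties of the symplectic functor $\SYMPL$, since $\Abos=\CCR\circ\SYMPL$ and the functor $\CCR$ is (by its construction in Section~\ref{s:appendixCCR}) faithful and preserves the relevant structure: it turns a direct sum of symplectic spaces into the tensor product of Weyl algebras, symplectically orthogonal subspaces into commuting subalgebras, and symplectic isomorphisms into $\Cstar$-algebra isomorphisms. So the whole theorem becomes a statement about the spaces $\SYMPL(M,S,P)=\DD(M,S)/\ker(G)$ and the maps $\SYMPL(f,F)$ induced by extension by zero.

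For \eqref{quantcaus} (quantum causality), first I would pull back to the symplectic level: it suffices to show that the subspaces $\SYMPL(f_1,F_1)(\SYMPL(M_1,S_1,P_1))$ and $\SYMPL(f_2,F_2)(\SYMPL(M_2,S_2,P_2))$ of $\SYMPL(M_3,S_3,P_3)$ are $\omega$-orthogonal. Take $\phi_j\in\DD(M_j,S_j)$, $j=1,2$; then by the description of $\omega$ in Proposition~\ref{p:symplstruct},
\[
\omega\big([\ext_1\phi_1],[\ext_2\phi_2]\big)
=\int_{M_3}\<G_3(\ext_1\phi_1),\ext_2\phi_2\>\dV .
\]
Now $\supp(\ext_2\phi_2)\subset f_2(M_2)$, while by (G$_3^\pm$) the support of $G_3(\ext_1\phi_1)$ lies in $J^{M_3}_+(\supp\ext_1\phi_1)\cup J^{M_3}_-(\supp\ext_1\phi_1)\subset J^{M_3}(f_1(M_1))$, which is disjoint from $f_2(M_2)$ by the causal disjointness hypothesis. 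Hence the integrand vanishes identically and $\omega$ of the two classes is zero; applying $\CCR$ gives the commuting subalgebras.

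For \eqref{timeslice} (time slice axiom), since $\CCR$ sends symplectic isomorphisms to $\Cstar$-isomorphisms, it is enough to show $\SYMPL(f,F)$ is an isomorphism of symplectic vector spaces; as it is always injective and symplectic by Lemma~\ref{l:symplfunctor}, only surjectivity is at issue. Here I would use the identification (Theorem~\ref{pexactseq} and Remark~\ref{rem:symplanders}) of $\SYMPL(M,S,P)$ with $\ker(P)\cap\Cinfsc(M,S)$, under which $\SYMPL(f,F)$ becomes ``extend a spacelike-compact solution on $M_1$ to one on $M_2$.'' So I must show every spacelike-compact solution $\psi$ of $P_2\psi=0$ on $M_2$ is the extension of one on $f(M_1)$. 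The strategy is the standard Cauchy-development argument: using Theorem~\ref{tBernalSanchez2} split $M_1\cong\R\times\Sigma$ with $\Sigma=\{0\}\times\Sigma$, and correspondingly (via $f$) $M_2$ with $f(\Sigma)$ a Cauchy hypersurface; restrict $\psi$ to a neighbourhood of $f(\Sigma)$, pull back by $(f,F)$ to a solution near $\Sigma\subset M_1$, solve $P_1$ there, and check the resulting section on $M_1$ extends back to all of $\psi$ on $M_2$. The key point making this work without invoking well-posedness of a Cauchy problem is a \emph{finite-propagation-speed / domain-of-dependence} argument: a spacelike-compact solution is determined by its behaviour near any Cauchy hypersurface, because $M_2$ equals the union of the domains of dependence of neighbourhoods of $f(\Sigma)$, and $f(M_1)$ contains such a neighbourhood. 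Concretely one writes $\psi=\psi_+-\psi_-$ with $\supp\psi_\pm$ past/future-compact, expresses $\psi_\pm$ via the Green's operators of suitable cutoffs supported near the Cauchy surface, and uses that these supports sit inside $f(M_1)$ together with the support property of Green's operators and Lemma~\ref{l:extcommutesGreen}.

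The main obstacle is the surjectivity in \eqref{timeslice}: one has to argue carefully that a solution on $M_2$ with spacelike-compact support is reconstructed from data localized in the causally compatible open subset $f(M_1)$, using only the existence and support properties of Green's operators (plus the Bernal--S\'anchez splitting to get a collar neighbourhood of the Cauchy hypersurface inside $f(M_1)$), rather than an existence-and-uniqueness theorem for the Cauchy problem. Once the identification $\SYMPL\cong\ker(P)\cap\Cinfsc$ and the support bookkeeping are set up, the causality axiom \eqref{quantcaus} is essentially immediate.
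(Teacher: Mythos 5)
Your proposal is correct, and part \eqref{quantcaus} is essentially the paper's own argument: both reduce to showing that the two subspaces of $\SYMPL(M_3,S_3,P_3)$ are $\omega$-orthogonal because $G_3(\ext_1\phi_1)$ and $\ext_2\phi_2$ have disjoint supports, and then invoke the Weyl relation (iv) of Definition~\ref{d:CCR}. For part \eqref{timeslice} you and the paper agree up to the reduction to surjectivity of $\SYMPL(f,F)$ in the solution picture $\ker(P)\cap\Cinfsc$ (Remark~\ref{rem:symplanders}), but the surjectivity step itself is handled by genuinely different means. The paper restricts $\psi$ to $M_1$, uses Lemma~\ref{lem:spacecompact} to see that the restriction $\phi$ is again a spacelike-compact solution, and then shows $\psi$ is the \emph{unique} spacelike-compact solution on $M_2$ extending $\phi$: reducing to $\phi=0$, the truncation of $\psi$ to $J_+^{M_2}(\Sigma)$ is a smooth solution with past-compact support, hence zero by Remark~\ref{rem:futurecompact}. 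Your route is constructive instead: you rerun the exactness argument of Theorem~\ref{pexactseq} with the cutoff localized in a collar of $f(\Sigma)$, writing $\psi=G_2u$ with $u$ compactly supported inside $f(M_1)$, and conclude via Lemma~\ref{l:extcommutesGreen}. Both work; the paper's uniqueness argument is shorter and needs no collar, while yours exhibits the preimage explicitly. Two cautions for a final write-up: your first description of the step (``solve $P_1$ there'') reads like an appeal to well-posedness of the Cauchy problem, which the paper deliberately avoids --- only your subsequent ``concretely'' version, which uses nothing but Green's operators, should be retained; and the collar claim (that the region where the cutoff varies, intersected with $\supp\psi$, is a compact subset of $f(M_1)$ for a sufficiently thin slab around $f(\Sigma)$) does require a short compactness argument using $J^{M_2}(K)\cap f(\Sigma)\subset f(M_1)$ and should be spelled out.
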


\begin{proof}
We first show \eqref{quantcaus}.
For notational simplicity we assume without loss of generality that $f_j$ and $F_j$ are inclusions, $j=1,2$.
Let $\phi_j\in\DD(M_j,S_j)$.
Since $M_1$ and $M_2$ are causally disjoint, the sections $G\phi_1$ and $\phi_2$ have disjoint support, thus
$$
\omega([\phi_1],[\phi_2])=\int_M\< G\phi_1,\phi_2\> \dV = 0.
$$
Now relation \eqref{CCR4} in Definition~\ref{d:CCR} tells us
$$
w([\phi_1])\cdot w([\phi_2]) = w([\phi_1]+[\phi_2]) = w([\phi_2])\cdot w([\phi_1]) .
$$
Since $\Abos(f_1,F_1)(\Abos(M_1,S_1,P_1))$ is generated by elements of the form $w([\phi_1])$ and $\Abos(f_2,F_2)(\Abos(M_2,S_2,P_2))$ by elements of the form $w([\phi_2])$, the assertion follows.

In order to prove \eqref{timeslice} we show that $\SYMPL(f,F)$ is an isomorphism of symplectic vector spaces provided $f$ maps a Cauchy hypersurface of $M_1$ onto a Cauchy hypersurface of $M_2$.
Since symplectic linear maps are always injective, we only need to show surjectivity of $\SYMPL(f,F)$.
This is most easily seen by replacing $\SYMPL(M_j,S_j,P_j)$ by $\ker(P_j)\cap\Cinfsc(M_j,S_j)$ as in Remark~\ref{rem:symplanders}.
Again we assume without loss of generality that $f$ and $F$ are inclusions.

Let $\psi\in\Cinfsc(M_2,S_2)$ be a solution of $P_2\psi=0$.
Let $\phi$ be the restriction of $\psi$ to $M_1$.
Then $\phi$ solves $P_1\phi=0$ and has spacelike compact support in $M_1$ by Lemma~\ref{lem:spacecompact} below.
We will show that there is only one solution in $M_2$ with spacelike compact support extending $\phi$.
It will then follow that $\psi$ is the image of $\phi$ under the extension map corresponding to $\SYMPL(f,F)$ and surjectivity will be shown.

To prove uniqueness of the extension, we may, by linearity, assume that $\phi=0$.
Then $\psi_+$ defined by
$$
\psi_+(x) := \begin{cases}
              \psi(x), & \mbox{if $x\in J_+^{M_2}(\Sigma)$,}\\
               0,      & \mbox{otherwise,}
             \end{cases}
$$
is smooth since $\psi$ vanishes in an open neighborhood of $\Sigma$.
Now $\psi_+$ solves $P_2\psi_+=0$ and has past-compact support.
By Remark~\ref{rem:futurecompact}, $\psi_+\equiv 0$, i.e., $\psi$ vanishes on $J_+^{M_2}(\Sigma)$.
One shows similarly that $\psi$ vanishes on $J_-^{M_2}(\Sigma)$, hence $\psi=0$.
\end{proof}

\begin{lemma}\label{lem:spacecompact}
Let $M$ be a globally hyperbolic spacetime and let $M' \subset M$ be a causally compatible open subset which contains a Cauchy hypersurface of $M$.
Let $A\subset M$ be spacelike compact in $M$.

Then $A\cap M'$ is spacelike compact in $M'$.
\end{lemma}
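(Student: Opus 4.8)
The plan is to exhibit an explicit compact subset $K'\subset M'$ with $A\cap M'\subset J^{M'}(K')$; since $A$ is closed in $M$, the set $A\cap M'$ is then a closed subset of $M'$ contained in $J^{M'}(K')$ with $K'$ compact, which is exactly the assertion. Since $A$ is spacelike compact in $M$, I fix a compact $K\subset M$ with $A\subset J^M(K)$, and I let $\Sigma\subset M'$ be a Cauchy hypersurface of $M$ as provided by the hypothesis. The natural candidate is
$$
K':=J^M(K)\cap\Sigma .
$$
I then have to prove two things: (a) $K'$ is a compact subset of $M'$, and (b) $A\cap M'\subset J^{M'}(K')$.

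For (a): because $M$ is globally hyperbolic and $K$ is compact, $J_+^M(K)$ and $J_-^M(K)$ are closed, hence $J^M(K)=J_+^M(K)\cup J_-^M(K)$ is a closed subset of $M$; being a closed set contained in $J^M(K)$ with $K$ compact, $J^M(K)$ is itself spacelike compact in the sense of Definition~\ref{def-spacelikecompact}. By the remark following Definition~\ref{def-spacelikecompact}, its intersection with the Cauchy hypersurface $\Sigma$ is compact, i.e.\ $K'$ is compact; and clearly $K'\subset\Sigma\subset M'$, so $K'$ is a compact subset of $M'$.

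For (b): let $x\in A\cap M'$. Then $x\in J^M(K)$, so there are $p\in K$ and a causal curve $\gamma$ in $M$ passing through both $p$ and $x$. I extend $\gamma$ to an inextensible causal curve $\tilde\gamma$ in $M$; since $\Sigma$ is a Cauchy hypersurface, $\tilde\gamma$ meets $\Sigma$ at some point $q$. As $p$, $q$ and $x$ all lie on the single causal curve $\tilde\gamma$, a short case distinction according to the position of $q$ along $\tilde\gamma$ shows that $q$ is causally related to $p$ and to $x$ in $M$; hence $q\in J_+^M(p)\cup J_-^M(p)\subset J^M(K)$, so that $q\in J^M(K)\cap\Sigma=K'$, and also $x\in J_\pm^M(q)$. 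Finally, since $q\in\Sigma\subset M'$ and $x\in M'$, the causal compatibility of $M'$ in $M$ (Definition~\ref{def-causcompat}) gives $x\in J_\pm^M(q)\cap M'=J_\pm^{M'}(q)\subset J^{M'}(K')$, as desired.

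I expect the only genuinely delicate ingredients to be two standard facts from causality theory: that an inextensible causal curve necessarily meets a Cauchy hypersurface (this is what guarantees the point $q$, and is the place where it matters that $\Sigma$ is Cauchy, cf.\ \cite{ONeill} or \cite{BGP}), and that causal compatibility is precisely what is needed to transfer causal relatedness in $M$ to causal relatedness in $M'$ — without that hypothesis the statement would fail. The remaining steps (closedness of $J_\pm^M(K)$ for compact $K$ in a globally hyperbolic spacetime, and the elementary observation that a point lying on a causal curve is causally related to the endpoints of that curve) are routine.
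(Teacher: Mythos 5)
Your proof is correct and follows essentially the same route as the paper's: the same candidate $K'=J^M(K)\cap\Sigma$, compactness via \cite[Cor.~A.5.4]{BGP}, the extension of a causal curve to an inextensible one meeting $\Sigma$, and causal compatibility to pass from $J^M(K')$ to $J^{M'}(K')$. The only (harmless) differences are that you apply causal compatibility pointwise rather than at the level of the sets $J^M(K')\cap M'=J^{M'}(K')$, and that you explicitly record the closedness of $A\cap M'$ in $M'$, which the paper leaves implicit.
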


\begin{proof}
Fix a common Cauchy hypersurface $\Sigma$ of $M'$ and $M$.
By assumption, there exists a compact subset $K\subset M$ with $A\subset
J^M(K)$.
Then $K':=J^M(K)\cap\Sigma$ is compact \cite[Cor.~A.5.4]{BGP} and contained in $M'$.

Moreover $A\subset J^M(K')$: let $p\in A$ and let $\gamma$ be a causal curve (in $M$) from $p$ to some $k\in K$.
Then $\gamma$ can be extended to an inextensible causal curve in $M$, which hence meets $\Sigma$ at some point $q$.
Because of $q\in\Sigma\cap J^M(k)\subset
K'$ one has $p\in J^M(K')$.

Therefore $A\cap M'\subset J^M(K')\cap M'=J^{M'}(K')$ because of the
causal compatibility of $M'$ in $M$.
The lemma is proved.
\end{proof}

The quantization process described in this subsection applies in particular to formally self-adjoint wave and Dirac-type operators.

\subsection{Fermionic quantization}

Next we construct a fermionic quantization.
For this we need a functorial construction of Hilbert spaces rather than symplectic vector spaces.
As we shall see this seems to be possible only under much more restrictive assumptions.
The underlying Lorentzian manifold $M$ is assumed to be a globally hyperbolic spacetime as before.
The vector bundle $S$ is assumed to be complex with Hermitian inner product $\<\cdot\,,\cdot\>$ which may be indefinite.
The formally self-adjoint Green-hyperbolic operator $P$ is assumed to be of first order.

\begin{definition}\label{d:Diractypdef}
A formally self-adjoint Green-hyperbolic operator $P$ of first order acting on sections of a complex vector bundle $S$ over a spacetime $M$ is of \emph{definite type} if and only if for any $x\in M$ and any future-directed timelike tangent vector $\mathfrak{n}\in T_xM$, the bilinear map 
$$
S_x\times S_x \to \C,
\quad\quad
(\phi,\psi)\mapsto\< i\sigma_P(\mathfrak{n}^\flat)\cdot\phi,\psi\> ,
$$
yields a positive definite Hermitian scalar product on $S_x$.
\end{definition}

\begin{ex}\label{ex:spinDiractypdef}
The classical Dirac operator $P$ from Example~\ref{ex:spinors} is, when defined with the correct sign, of definite type, see e.g. \cite[Sec.~1.1.5]{Baum} or \cite[Sec.~2]{BGM}.
\end{ex}

\begin{ex}\label{ex:twistedDiracnotdef}
If $E\to M$ is a semi-Riemannian or -Hermitian vector bundle endowed with a metric connection over a spin spacetime $M$, then the twisted Dirac operator from Example~\ref{ex:twistedDirac} is of definite type if and only if the metric on $E$ is positive definite.
This can be seen by evaluating the tensorized inner product on elements of the form $\sigma\otimes v$, where $v\in E_x$ is null.
\end{ex}

\begin{ex}\label{ex:Eulerntypdef}
The operator $P=i(d-\delta)$ on $S=\Lambda T^*M\otimes \C$ is of Dirac type but not of definite type.
This follows from Example~\ref{ex:twistedDiracnotdef} applied to Example~\ref{ex:Euler}, since the natural inner product on $\Sigma M$ is not positive definite.
An alternative elementary proof is the following: for any timelike tangent vector $\mathfrak{n}$ on $M$ and the corresponding covector $\mathfrak{n}^\flat$, one has 
$$
\< i\sigma_P(\mathfrak{n}^\flat)\mathfrak{n}^\flat,\mathfrak{n}^\flat\>
=
-\<\mathfrak{n}^\flat\wedge\mathfrak{n}^\flat-\mathfrak{n}\lrcorner\mathfrak{n}^\flat,\mathfrak{n}^\flat\>
=
\<\mathfrak{n},\mathfrak{n}\>\<1,\mathfrak{n}^\flat\>
=
0.
$$
\end{ex}

\begin{ex}\label{ex:RSnottypdef}
The Rarita-Schwinger operator defined in Section~\ref{ss:RaritaSchwinger} is not of definite type if the dimension of the manifolds is $m\geq 3$.
This can be seen as follows.
Fix a point $x\in M$ and a pointwise orthonormal basis $(e_j)_{1\leq j\leq m}$ of $T_xM$ with $e_1$ timelike.
The Lorentzian metric induces inner products on $\Sigma M$ and on $\Siggi$ which we denote by $\<\cdot\,,\cdot\>$.
Choose $\xi:=e_1^\flat\in T_x^*M$ and $\psi\in \Sigma^{3/2}_xM$.
Since $\sigma_\RS(\xi)$ is pointwise obtained as the orthogonal projection of $\sigma_{\mathcal{D}}(\xi)$ onto $\Sigma^{3/2}_xM$, one has
\be 
\<-i\sigma_\RS(\xi)\psi,\psi\>&=&\<(\id\otimes\xi^\sharp\cdot)\psi,\psi\>-\frac{2}{m}\underbrace{\sum_{\beta=1}^m\<e_\beta^*\otimes e_\beta\cdot\psi_1,\psi\>}_{=0}\\
&=&\sum_{\beta=1}^m\epsilon_\beta\<e_1\cdot\psi_\beta,\psi_\beta\>.
\ee
Choose, as in the proof of Lemma~\ref{l:RSnotdef}, a $\psi\in \Sigma^{3/2}_xM$ with $\psi_k=0$ for all $3\leq k\leq m$.
For such a $\psi$ the condition $\psi\in \Sigma^{3/2}_xM$ becomes $e_1\cdot\psi_1=e_2\cdot\psi_2$.
As in the proof of Lemma~\ref{l:RSnotdef} we obtain
\[\<-i\sigma_\RS(\xi)\psi,\psi\>=-\<e_1\cdot\psi_2,\psi_2\>+\<e_1\cdot\psi_2,\psi_2\>=0, \]
which shows that the Rarita-Schwinger operator cannot be of definite type.
\end{ex}

We define the category $\GlobHypDef$, whose objects are the triples $(M,S,P)$, where $M$ is a globally hyperbolic spacetime, $S$ is a complex vector bundle equipped with a complex inner product $\<\cdot\,,\cdot\>$, and $P$ is a formally self-adjoint Green-hyperbolic operator of definite type acting on sections of $S$.
The morphisms are the same as in the category $\GlobHypGreen$.

We construct a covariant functor from $\GlobHypDef$ to $\hilb$, where $\hilb$ denotes the category whose objects are complex pre-Hilbert spaces and whose morphisms are isometric linear embeddings.
As in Section~\ref{ss:CCRquant}, the underlying vector space is the space of classical solutions to the equation $P\phi=0$ with spacelike compact support.
We put
\[\SOL(M,S,P):=\ker(P)\cap \Cinfsc(M,S).\]
Here ``$\SOL$'' stands for classical solutions of the equation $P\phi=0$ with spacelike compact support.

\begin{lemma}\label{l:scalprodinvariant}
Let $(M,S,P)$ be an object in $\GlobHypDef$.
Let $\Sigma\subset M$ be a smooth spacelike Cauchy hypersurface with its future-oriented unit normal vector field $\mathfrak{n}$ and its induced volume element $\ds$.
Then
\begin{equation}
(\phi,\psi):=\int_\Sigma \< i\sigma_P(\mathfrak{n}^\flat)\cdot\phi_{|_\Sigma},\psi_{|_\Sigma}\>\ds,
\label{eq:ScalProd}
\end{equation}
yields a positive definite Hermitian scalar product on $\SOL(M,S,P)$ which does not depend on the choice of $\Sigma$.
\end{lemma}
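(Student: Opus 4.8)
The plan is to prove two things: that the sesquilinear form \eqref{eq:ScalProd} is positive definite Hermitian on $\SOL(M,S,P)$, and that it does not depend on the choice of Cauchy hypersurface $\Sigma$. The first point is essentially immediate from the hypotheses. Since $P$ is of definite type, for each $x\in\Sigma$ the pairing $(\phi,\psi)\mapsto\<i\sigma_P(\mathfrak{n}^\flat)\cdot\phi,\psi\>$ is a positive definite Hermitian scalar product on $S_x$ (here $\mathfrak n$ is future-directed timelike, being the unit normal of a spacelike Cauchy hypersurface). Integrating this fiberwise positive definite form against the volume element $\ds$ over $\Sigma$ gives a positive semi-definite Hermitian form on sections; and it is actually definite on $\SOL(M,S,P)$ because a solution $\phi\in\ker(P)\cap\Cinfsc(M,S)$ with $\phi_{|_\Sigma}=0$ must vanish identically. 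This last fact can be deduced either from the well-posedness of the Cauchy problem for first-order Green-hyperbolic operators, or more in the spirit of the paper: if $\phi_{|_\Sigma}=0$, one splits $\phi=\phi_+-\phi_-$ using the characteristic function of $J_\pm^M(\Sigma)$, notes that each $\phi_\pm$ is smooth (as $\phi$ vanishes on $\Sigma$), solves $P\phi_\pm=0$, and has past- resp.\ future-compact support, hence vanishes by Remark~\ref{rem:futurecompact}.

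For the independence of $\Sigma$, the strategy is a divergence/Stokes argument. Given two smooth spacelike Cauchy hypersurfaces $\Sigma_0$ and $\Sigma_1$, and solutions $\phi,\psi\in\SOL(M,S,P)$, I would show that the difference of the two surface integrals is the integral of a divergence over the region between them, which vanishes. Concretely, one introduces the current: a vector field (or $1$-density) $V$ on $M$ determined by $\<V,\xi^\sharp\>$-type expressions built from $\<i\sigma_P(\xi)\phi,\psi\>$, i.e.\ the $1$-form $\xi\mapsto\<i\sigma_P(\xi)\phi,\psi\>$, and one computes $\div$ of the associated vector field. Using that $P$ is of first order with principal symbol $\sigma_P$ and that $P$ is formally self-adjoint, together with $P\phi=P\psi=0$, the divergence of this current works out to zero; this is the standard computation showing that the "Cauchy data pairing" is a conserved quantity for solutions. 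Then, because $\phi$ and $\psi$ have spacelike compact support, their supports meet $\Sigma_0$, $\Sigma_1$, and the region between them in compact sets, so there are no boundary contributions at spatial infinity and the divergence theorem applies on the slab bounded by $\Sigma_0$ and $\Sigma_1$, giving $\int_{\Sigma_1}=\int_{\Sigma_0}$.

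There is a technical wrinkle: two arbitrary Cauchy hypersurfaces $\Sigma_0,\Sigma_1$ need not be disjoint, and neither need lie in the past of the other, so "the region between them" is not literally a slab. I would handle this in the usual way: by Theorem~\ref{tBernalSanchez} (or \ref{tBernalSanchez2}) pick a smooth temporal function giving a foliation, and observe that for any Cauchy hypersurface $\Sigma$ and any leaf $\Sigma_t$ of such a foliation one can interpolate, or simply note it suffices to compare each $\Sigma$ with the leaves of one fixed foliation and that any two Cauchy hypersurfaces can be pushed apart into the chronological future/past of a common third one; alternatively one invokes that any compact piece where $\Sigma_0$ and $\Sigma_1$ differ is contained in a globally hyperbolic subregion with $\Sigma_0$-like and $\Sigma_1$-like pieces as partial Cauchy data, reducing to the slab case. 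The main obstacle — and the only part needing genuine care — is this organization of the Stokes argument plus verifying that the relevant current is divergence-free; the positivity and non-degeneracy are routine given the definite-type hypothesis and Remark~\ref{rem:futurecompact}.
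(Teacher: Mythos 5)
Your argument for the independence of $\Sigma$ is essentially the paper's: the ``divergence-free current'' computation is exactly the Green's formula
$\int_\Omega(\<P\phi,\psi\>-\<\phi,P\psi\>)\dV=\int_{\Sigma'}\<\sigma_P(\mathfrak{n}^\flat)\phi,\psi\>\ds-\int_{\Sigma}\<\sigma_P(\mathfrak{n}^\flat)\phi,\psi\>\ds$
that the paper cites from Taylor, applied on the slab between two \emph{disjoint} Cauchy hypersurfaces (spacelike compactness of the supports makes all integrals finite), and among your several suggestions for the non-disjoint case the correct one is the paper's: $I_-^M(\Sigma_0)\cap I_-^M(\Sigma_1)$ is a nonempty open globally hyperbolic subset, a smooth spacelike Cauchy hypersurface $\Sigma''$ of it is disjoint from both and is again a Cauchy hypersurface of $M$, and one compares $\Sigma_0$ with $\Sigma''$ and $\Sigma''$ with $\Sigma_1$. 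So that half is fine, though you should commit to one reduction rather than listing alternatives.

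The genuine gap is in your non-degeneracy argument. You claim that if $\phi\in\ker(P)\cap\Cinfsc(M,S)$ vanishes on $\Sigma$, then $\phi_\pm:=\phi\cdot\chi_{J_\pm^M(\Sigma)}$ is smooth ``as $\phi$ vanishes on $\Sigma$.'' Vanishing of $\phi$ \emph{on} $\Sigma$ only makes $\phi_\pm$ continuous; smoothness of the truncation requires all derivatives of $\phi$, in particular the transversal ones, to vanish along $\Sigma$. (In the paper's proof of the time slice axiom the analogous truncation is smooth precisely because there the solution vanishes on an open \emph{neighborhood} of $\Sigma$, which you do not have here.) The claim can be rescued by an induction using the equation: writing $P=\sigma_P(dt)\partial_t+(\text{tangential})+(\text{order }0)$ near $\Sigma$ and using that $\sigma_P(\mathfrak{n}^\flat)$ is invertible (definite type), $P\phi=0$ and $\phi|_\Sigma=0$ force $\partial_t\phi|_\Sigma=0$ and inductively all derivatives to vanish — but that argument is not in your write-up, and the alternative you mention (well-posedness of the Cauchy problem) is exactly what the paper is structured to avoid and is not available for general Green-hyperbolic operators within its framework. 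The cleanest repair needs no new analysis at all: once independence of $\Sigma$ is established, $(\phi,\phi)=0$ forces $\phi|_{\Sigma'}=0$ for \emph{every} smooth spacelike Cauchy hypersurface $\Sigma'$ (the integrand is pointwise nonnegative), and since the leaves of a splitting as in Theorem~\ref{tBernalSanchez} cover $M$, this gives $\phi=0$. Positive semi-definiteness and Hermitian symmetry are, as you say, immediate from the definite-type hypothesis.
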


\begin{proof}
First note that $\supp(\phi)\cap\Sigma$ is compact since $\supp(\phi)$ is spacelike compact, so that the integral is well-defined.
We have to show that it does not depend on the choice of Cauchy hypersurface.
Let $\Sigma'$ be any other smooth spacelike Cauchy hypersurface.
Assume first that $\Sigma$ and $\Sigma'$ are disjoint and let $\Omega$ be the domain enclosed by $\Sigma$ and $\Sigma'$ in $M$.
Its boundary is $\partial\Omega=\Sigma\cup\Sigma'$.
Without loss of generality, one may assume that $\Sigma'\subset J_+^M(\Sigma)$.
By the Green's formula \cite[p.~160, Prop.~9.1]{T1} we have for all $\phi,\psi \in \Cinfsc(M,S)$, 
\begin{equation}\label{eq:Green1storder}
\int_\Omega \left(\< P\phi,\psi\>-\< \phi,P\psi\>\right) \dV
=
\int_{\Sigma'}\<\sigma_P(\mathfrak{n}^\flat)\phi,\psi\> \ds
-\int_{\Sigma}\<\sigma_P(\mathfrak{n}^\flat)\phi,\psi\> \ds.
\end{equation}
For $\phi,\psi\in\SOL(M,S,P)$ we have $P\phi=P\psi=0$ and thus
\begin{align*}
0 
&=
\int_{\Sigma}\<\sigma_P(\mathfrak{n}^\flat)\phi,\psi\>\ds 
- \int_{\Sigma'}\<\sigma_P(\mathfrak{n}^\flat)\phi,\psi\>\ds .
\end{align*}
This shows the result in the case $\Sigma\cap\Sigma'=\emptyset$.

If $\Sigma\cap\Sigma'\neq\emptyset$ consider the subset $I_-^M(\Sigma)\cap I_-^M(\Sigma')$ of $M$ where, as usual, $I_+^M(\Sigma)$ and $I_-^M(\Sigma)$ denote the chronological future and past of the subset $\Sigma$ in $M$, respectively.
This subset is nonempty, open, and globally hyperbolic.
This follows e.g.\ from \cite[Lemma A.5.8]{BGP}.
Hence it admits a smooth spacelike Cauchy hypersurface $\Sigma''$ by Theorem~\ref{tBernalSanchez}.
By construction, $\Sigma''$ meets neither $\Sigma$ nor $\Sigma'$ and it can be easily checked that $\Sigma''$ is also a Cauchy hypersurface of $M$. 
The result follows from the argument above being applied first to the pair $(\Sigma,\Sigma'')$ and then to the pair $(\Sigma'',\Sigma')$.
\end{proof}

\begin{rem}
If one drops the assumption that $P$ be of definite type, then the above sesquilinear form $(\cdot\,,\cdot)$ on $\ker(P)\cap\Cinfsc(M,S)$ still does not depend on the choice of $\Sigma$, however it need no longer be positive definite and can even be degenerate.
Pick for instance the spin Dirac operator $D_g$ associated to the underlying Lorentzian metric $g$ on a spin spacetime $M$ (see Example~\ref{ex:spinors}) and, keeping the spinor bundle $\Sigma_gM$ associated to $g$, change the metric on $M$ so that the new metric $g'$ has larger future and past cones at each point. Note that this implies that any globally hyperbolic subregion of $(M,g')$ is also globally hyperbolic in $(M,g)$.
Then, denoting by $D_g^*$ the formal adjoint of $D_g$ with respect to the metric $g'$, the operator $\left(\begin{array}{cc}0& D_g\\D_g^*&0\end{array}\right)$ on $\Sigma_gM\oplus\Sigma_gM$ remains Green-hyperbolic but it fails to be of definite type, since there exist timelike vectors for $g'$ which are lightlike for $g$.
Hence the principal symbol of the operator becomes non-invertible and the bilinear form in \eqref{eq:ScalProd} becomes degenerate for these $g'$-timelike covectors.
\end{rem}

For any object $(M,S,P)$ in $\GlobHypDef$ we will from now on equip $\SOL(M,S,P)$ with the Hermitian scalar product in \eqref{eq:ScalProd} and thus turn $\SOL(M,S,P)$ into a pre-Hilbert space.

Given a morphism $(f,F)$ from $(M_1,S_1,P_1)$ to $(M_2,S_2,P_2)$ in $\GlobHypDef$, then this is also a morphism in $\GlobHypGreen$ and hence induces a homomorphism $\SYMPL(f,F):\SYMPL(M_1,S_1,P_1) \to \SYMPL(M_2,S_2,P_2)$.
As explained in Remark~\ref{rem:symplanders}, there is a corresponding extension homomorphism $\SOL(f,F):\SOL(M_1,S_1,P_1) \to \SOL(M_2,S_2,P_2)$.
In other words, $\SOL(f,F)$ is defined such that the diagram
\begin{equation}
\xymatrix{
\SYMPL(M_1,S_1,P_1) \ar[rr]^{\SYMPL(f,F)} \ar[d]_\cong
&&  
\SYMPL(M_2,S_2,P_2) \ar[d]^\cong\\
\SOL(M_1,S_1,P_1)  \ar[rr]^{\SOL(f,F)}
&&  
\SOL(M_2,S_2,P_2)
}
\label{eq:symplhilb}
\end{equation}
commutes.
The vertical arrows are the vector space isomorphisms induced be the Green's propagators $G_1$ and $G_2$, respectively.

\begin{lemma}\label{lem:hilb}
The vector space homomorphism $\SOL(f,F):\SOL(M_1,S_1,P_1) \to \SOL(M_2,S_2,P_2)$ preserves the scalar products, i.e., it is an isometric linear embedding of pre-Hilbert spaces.
\end{lemma}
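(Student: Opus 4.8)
The plan is to reduce the claim to the Cauchy-hypersurface-independence of the scalar product, which is already available from Lemma~\ref{l:scalprodinvariant}. First I would fix a smooth spacelike Cauchy hypersurface $\Sigma_1$ of $M_1$ and compute the scalar product on $\SOL(M_1,S_1,P_1)$ using $\Sigma_1$. The image $f(\Sigma_1)\subset M_2$ is a spacelike acausal submanifold but in general not a Cauchy hypersurface of $M_2$, so it cannot be used directly on the right-hand side. The remedy is to invoke Theorem~\ref{tBernalSanchez3}: extend the compact piece of $f(\Sigma_1)$ that meets the relevant supports to a full smooth spacelike Cauchy hypersurface $\Sigma_2$ of $M_2$. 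Actually, since the supports we care about are spacelike compact, only a compact portion of $f(\Sigma_1)$ is relevant, so this extension argument suffices for the integral computation.

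The key computation is then as follows. For $\phi\in\SOL(M_1,S_1,P_1)$, write $\psi:=\SOL(f,F)(\phi)\in\SOL(M_2,S_2,P_2)$; by Remark~\ref{rem:symplanders} (and diagram \eqref{eq:symplhilb}) $\psi$ restricted to $f(M_1)$ equals $F\circ\phi\circ f^{-1}$, and $\psi$ vanishes outside $J^{M_2}(f(K))$ for a suitable compact $K$. Choose a smooth spacelike Cauchy hypersurface $\Sigma_1$ of $M_1$; then $f(\Sigma_1)$ is a spacelike acausal hypersurface in $M_2$. Compute the scalar product $(\psi,\psi)_{M_2}$ using a Cauchy hypersurface $\Sigma_2$ of $M_2$; by Lemma~\ref{l:scalprodinvariant} the result is independent of $\Sigma_2$. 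I would deform $\Sigma_2$ (or pick it via the extension in Theorem~\ref{tBernalSanchez3} applied to the compact part of $f(\Sigma_1)$ cut down to the support of $\psi$) so that on the support of $\psi$ it agrees with $f(\Sigma_1)$. Since $\psi$ vanishes away from a spacelike compact set and $f(M_1)$ is causally compatible and open in $M_2$, the integral
\begin{equation*}
\int_{\Sigma_2}\<i\sigma_{P_2}(\mathfrak{n}^\flat)\psi_{|\Sigma_2},\psi_{|\Sigma_2}\>\ds
=
\int_{f(\Sigma_1)}\<i\sigma_{P_2}(\mathfrak{n}^\flat)\psi_{|f(\Sigma_1)},\psi_{|f(\Sigma_1)}\>\ds
\end{equation*}
because outside a common compact subset of $\Sigma_2\cap f(\Sigma_1)$ both integrands vanish. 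Finally, since $F$ is a fiberwise isometry over the isometry $f$, it intertwines the principal symbols ($F\circ\sigma_{P_1}(\xi) = \sigma_{P_2}((f^{-1})^*\xi)\circ F$, which follows from the commuting diagram \eqref{diag:restrictD} in Definition~\ref{def:GHG}) and maps $\mathfrak{n}_1^\flat$ to $\mathfrak{n}_2^\flat$ along $\Sigma_1$; pulling the integral back along $f$ yields exactly $(\phi,\phi)_{M_1}$. Polarization then gives preservation of the full sesquilinear form.

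The main obstacle I anticipate is the bookkeeping around the fact that $f(\Sigma_1)$ need not be a Cauchy hypersurface of $M_2$: one must carefully justify that only a compact portion of $f(\Sigma_1)$ contributes and that this portion can be completed to a genuine Cauchy hypersurface of $M_2$ (or alternatively, argue directly with Green's formula \eqref{eq:Green1storder} on the region between $\Sigma_2$ and a suitable hypersurface through the compact part of $f(\Sigma_1)$, using that $\psi$ has spacelike compact support so the boundary terms at ``infinity'' drop out). Making the causal-compatibility of $f(M_1)$ in $M_2$ do its job here — ensuring the deformed Cauchy hypersurface stays inside $f(M_1)$ on the relevant compact set — is the delicate point. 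Once that is in place, the rest is the routine symbol-intertwining computation and polarization, which I would not spell out in detail.
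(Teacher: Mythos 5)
Your proposal follows essentially the same route as the paper: fix a compact submanifold $K\subset\Sigma_1$ containing $J^{M_1}(K_1)\cap\Sigma_1$ for a suitable compact $K_1$ controlling the supports, extend $K$ via Theorem~\ref{tBernalSanchez3} to a smooth spacelike Cauchy hypersurface $\Sigma_2$ of $M_2$, and observe that both integrals localize to $K$, so that Lemma~\ref{l:scalprodinvariant} finishes the argument. The delicate point you correctly flag is settled in the paper by an inextensible-causal-curve argument: any causal curve in $M_2$ from a point of $J^{M_2}(K_1)$ to $K_1$ extends to an inextensible causal curve meeting $\Sigma_2$ exactly once, its restriction to $M_1$ is inextensible there and hence meets $\Sigma_1$ inside $K\subset\Sigma_2$, and acausality forces the two intersection points to coincide, which yields $J^{M_2}(K_1)\subset J^{M_2}(K)$ and hence $\supp(\phi_2)\cap\Sigma_2=\supp(\phi_1)\cap\Sigma_1$.
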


\begin{proof}
Without loss of generality we assume that $f$ and $F$ are inclusions.
Let $\Sigma_1$ be a spacelike Cauchy hypersurface of $M_1$.
Let $\phi_1,\psi_1\in\Cinfsc(M_1,S_1)$.
Denote the extension of $\phi_1$ by $\phi_2:=\SOL(f,F)(\phi_1)$ and similarly for $\psi_1$.

Let $K_1\subset M_1$ be a compact subset such that $\supp(\phi_2)\subset J^{M_2}(K_1)$ and $\supp(\psi_2)\subset J^{M_2}(K_1)$.
We choose a compact submanifold $K\subset \Sigma_1$ with boundary such that $J^{M_1}(K_1) \cap\Sigma_1\subset K$.
Since $\Sigma_1$ is a Cauchy hypersurface in $M_1$, $J^{M_1}(K_1) \subset J^{M_1}(J^{M_1}(K_1)\cap\Sigma_1) \subset J^{M_1}(K)$.

By Theorem~\ref{tBernalSanchez3} there is a spacelike Cauchy hypersurface $\Sigma_2 \subset M_2$ containing $K$.
Since $\Sigma_i$ is a Cauchy hypersurface of $M_i$ (where $i=1,2$), it is met by every inextensible causal curve \cite[Lemma 14.29]{ONeill}.
Moreover, by definition of a Cauchy hypersurface, $\Sigma_i$ is achronal in $M_i$.
Since it is also spacelike, $\Sigma_i$ is even acausal \cite[Lemma 14.42]{ONeill}. In particular, it is met \emph{exactly once} by every inextensible causal curve in $M_i$.

This implies $J^{M_2}(K_1)\subset J^{M_2}(K)$: 
namely, pick $p\in J^{M_2}(K_1)$ and a causal curve $\gamma$ in $M_2$ from $p$ to some $k_1\in K_1$.
Extend $\gamma$ to an inextensible causal curve $\overline{\gamma}$ in $M_2$.
Then $\overline{\gamma}$ meets $\Sigma_2$ at some point $q_2$, because $\Sigma_2$ is a Cauchy hypersurface in $M_2$.
But $\overline{\gamma}\cap M_1$ is also an inextensible causal curve in $M_1$, hence it intersects $\Sigma_1$ at a point $q_1$, which must lie in $K$ by definition of $K$.
Because of $K\subset\Sigma_2$ and the uniqueness of the intersection point, one has $q_1=q_2$.
In particular, $p\in J^{M_2}(K)$.

\begin{center}
\scalebox{0.6} % Change this value to rescale the drawing.
{
\begin{pspicture}(0,-4.86)(20.7,5.86)
\pscustom[linewidth=0.04,fillstyle=solid,fillcolor=lightgray]{
\psbezier(4.75,-0.34)(5.88,-2.36)(13.752771,-3.48)(16.22,-0.34)
\psbezier[liftpen=1](16.22,-0.34)(14.62,2.76)(6.4,2.3351269)(4.75,-0.34)
}
\psbezier[linewidth=0.04](0.02,-0.3)(4.24,-5.84)(19.04,-3.78)(20.68,-0.4)
\psbezier[linewidth=0.04](0.02,-0.3)(3.56,4.0452337)(17.54,5.84)(20.68,-0.4)

\psbezier[linewidth=0.04,fillstyle=solid,fillcolor=gray](9.7,0.9)(9.689865,1.8999486)(10.037922,1.288475)(11.02,1.1)(12.002078,0.911525)(12.1638975,1.5660923)(11.84,0.62)(11.516103,-0.3260923)(9.7101345,-0.099948645)(9.7,0.9)

\psbezier[linewidth=0.07,tbarsize=0.07055555cm 5.0]{|-|}(6.86,-0.12)(7.755464,-0.17354254)(13.935233,-0.34)(14.94,-0.32043934)

\psline[linewidth=0.02cm](11.74,0.44)(14.58,3.38)
\psline[linewidth=0.02cm](9.72,0.62)(7.74,3.3)
\psline[linewidth=0.02cm](11.96,1.16)(16.24,-2.9)
\psline[linewidth=0.02cm](9.76,1.44)(5.54,-3.28)

\psbezier[linewidth=0.04](4.75,-0.34)(5.72,-0.44)(5.9,-0.12)(7.1,-0.12)
\psbezier[linewidth=0.04](16.22,-0.34)(14.84,-0.3)(15.76,-0.34)(14.58,-0.32)
\psbezier[linewidth=0.04](0.02,-0.3)(1.62,0.52)(5.66,-0.2)(7.42,-0.14)
\psbezier[linewidth=0.04](20.68,-0.4)(18.58,0.64)(15.66,-0.46)(14.24,-0.3)

\rput(16.601406,1.725){$M_2$}
\rput(10.711407,0.685){\psframebox*[framearc=.3]{$K_1$}}
\rput(7.621406,0.845){\psframebox*[framearc=.3]{$M_1$}}
\rput(11.241406,2.805){$J^{M_2}(K_1)$}
\rput(2.7714062,0.505){$\Sigma_2$}
\rput(5.891406,-0.595){\psframebox*[framearc=.3]{$\Sigma_1$}}
\rput(12.651406,-0.315){\psframebox*[framearc=.3]{$K$}}
\end{pspicture} 
}
{\em Fig.~1}
\end{center}

We conclude $\supp(\phi_2)\subset J^{M_2}(K)$.
Since $K\subset \Sigma_2$, we have $\supp(\phi_2) \cap\Sigma_2\subset J^{M_2}(K)\cap\Sigma_2$ and $J^{M_2}(K)\cap\Sigma_2=K$ using the acausality of $\Sigma_2$.
This shows $\supp(\phi_2) \cap\Sigma_2 = \supp(\phi_1) \cap\Sigma_1$ and similarly for $\psi_2$.
Now we get
$$
(\phi_2,\psi_2)
=
\int_{\Sigma_2} \< i\sigma_{P_2}(\mathfrak{n}^\flat)\cdot\phi_2,\psi_2\>\ds
=
\int_{\Sigma_1} \< i\sigma_{P_1}(\mathfrak{n}^\flat)\cdot\phi_1,\psi_1\>\ds
=
(\phi_1,\psi_1)
$$
and the lemma is proved.
\end{proof}

The functoriality of $\SYMPL$ and diagram~\eqref{eq:symplhilb} show that $\SOL$ is a functor from $\GlobHypDef$ to $\hilb$, the category of complex pre-Hilbert spaces with isometric linear embeddings.
Composing with the functor $\mathrm{CAR}$ (see Section~\ref{s:appendixCAR}), we obtain the covariant functor 
\[
\Aferm:=\mathrm{CAR}\circ\SOL:\GlobHypDef\longrightarrow\CAlg.
\]
The fermionic algebras $\Aferm(M,S,P)$ are actually $\Z_2$-graded algebras, see Proposition~\ref{pCAR}~\eqref{pCAR:4}.

\begin{thm}\label{thm:Aferm}
The functor $\Aferm:\GlobHypDef\longrightarrow\CAlg$ is a fermionic locally covariant quantum field theory, i.e., the following axioms hold:
\begin{enumerate}[(i)]
\item\label{quantcaus2}
\textbf{(Quantum causality)}
Let $(M_j,S_j,P_j)$ be objects in $\GlobHypDef$, $j=1,2,3$, and $(f_j,F_j)$ morphisms from $(M_j,S_j,P_j)$ to $(M_3,S_3,P_3)$, $j=1,2$, such that $f_1(M_1)$ and $f_2(M_2)$ are causally disjoint regions in $M_3$.\\
Then the subalgebras $\Aferm(f_1,F_1)(\Aferm(M_1,S_1,P_1))$ and $\Aferm(f_2,F_2)(\Aferm(M_2,S_2,P_2))$ of $\Aferm(M_3,S_3,P_3)$ super-commute\footnote{This means that the odd parts of the algebras anti-commute while the even parts commute with everything.}. 
\item\label{timeslice2}
\textbf{(Time slice axiom)}
Let $(M_j,S_j,P_j)$ be objects in $\GlobHypDef$, $j=1,2$, and $(f,F)$ a morphism from $(M_1,S_1,P_1)$ to $(M_2,S_2,P_2)$ such that there is a Cauchy hypersurface $\Sigma\subset M_1$ for which $f(\Sigma)$ is a Cauchy hypersurface of $M_2$.
Then 
$$
\Aferm(f,F):\Aferm(M_1,S_1,P_1) \to \Aferm(M_2,S_2,P_2)
$$ 
is an isomorphism.
\end{enumerate}
\end{thm}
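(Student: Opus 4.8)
The plan is to derive both axioms from facts already in hand, in close parallel with the bosonic Theorem~\ref{thm:Abos}: the algebraic input is that $\CAR$ is a functor into $\CAlg$ which sends isometric isomorphisms to isomorphisms and mutually orthogonal subspaces to super-commuting subalgebras, while the geometric input is Lemma~\ref{l:scalprodinvariant} (independence of the scalar product of the chosen Cauchy hypersurface) together with the Cauchy hypersurface extension Theorem~\ref{tBernalSanchez3}. The \textbf{time slice axiom} \eqref{timeslice2} is then essentially formal: in the proof of Theorem~\ref{thm:Abos}\eqref{timeslice} it is shown that, when $f$ maps a Cauchy hypersurface of $M_1$ onto a Cauchy hypersurface of $M_2$, the map $\SYMPL(f,F)$ is a symplectic isomorphism; by the commutative diagram \eqref{eq:symplhilb}, $\SOL(f,F)$ is then a linear isomorphism, and by Lemma~\ref{lem:hilb} it is isometric, hence an isomorphism in $\hilb$; applying $\CAR$ yields that $\Aferm(f,F)=\CAR(\SOL(f,F))$ is an isomorphism of $C^*$-algebras.

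For \textbf{quantum causality} \eqref{quantcaus2}, as usual we may assume the $f_j$ and $F_j$ are inclusions, so $M_1,M_2\subset M_3$ are open, causally compatible and causally disjoint; in particular $M_1\cap M_2=\emptyset$. The subalgebra $\Aferm(f_j,F_j)(\Aferm(M_j,S_j,P_j))$ is the one generated by the image $V_j:=\SOL(f_j,F_j)(\SOL(M_j,S_j,P_j))\subset\SOL(M_3,S_3,P_3)$, and by the construction of $\CAR$ two such subalgebras super-commute as soon as $V_1\perp V_2$ in $\SOL(M_3,S_3,P_3)$. Hence it suffices to show $(\phi_3,\psi_3)=0$ whenever $\phi_3\in V_1$, $\psi_3\in V_2$. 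Writing $\phi_3=G_3(\ext\,\phi_1^\circ)$ with $\phi_1^\circ\in\DD(M_1,S_1)$, property (G$_3^\pm$) gives $\supp(\phi_3)\subset J^{M_3}(C_1)$ with $C_1:=\supp(\phi_1^\circ)$ compact in $M_1$, and symmetrically $\supp(\psi_3)\subset J^{M_3}(C_2)$ with $C_2\subset M_2$ compact.

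Now fix smooth spacelike Cauchy hypersurfaces $\Sigma_j\subset M_j$ and compact spacelike submanifolds-with-boundary $K_j\subset\Sigma_j$ with $J^{M_j}(C_j)\cap\Sigma_j\subset K_j$ (possible since this intersection is compact). Because $\Sigma_j$ is spacelike and Cauchy it is acausal in $M_j$, hence by causal compatibility acausal in $M_3$; since moreover $M_1,M_2$ are causally disjoint and disjoint, $K:=K_1\sqcup K_2$ is a compact acausal spacelike submanifold-with-boundary of $M_3$. By Theorem~\ref{tBernalSanchez3} there is a smooth spacelike Cauchy hypersurface $\Sigma_3$ of $M_3$ containing $K$. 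One then checks, exactly as in the proof of Lemma~\ref{lem:hilb}, that $\supp(\phi_3)\cap\Sigma_3\subset K_1$ and $\supp(\psi_3)\cap\Sigma_3\subset K_2$: given $p\in\supp(\phi_3)\cap\Sigma_3$, a causal curve of $M_3$ joining $p$ to a point of $C_1$ extends to an inextensible causal curve $\overline\gamma$ of $M_3$; the maximal piece of $\overline\gamma$ inside the open set $M_1$ through that point is inextensible in $M_1$, hence meets $\Sigma_1$, necessarily at a point of $J^{M_1}(C_1)\cap\Sigma_1\subset K_1\subset\Sigma_3$; since $\Sigma_3$ is acausal it is met only once by $\overline\gamma$, forcing $p\in K_1$. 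As $K_1\subset M_1$, $K_2\subset M_2$ and $M_1\cap M_2=\emptyset$, the integrand $\langle i\sigma_{P_3}(\mathfrak{n}^\flat)\cdot\phi_3|_{\Sigma_3},\psi_3|_{\Sigma_3}\rangle$ vanishes identically on $\Sigma_3$, so $(\phi_3,\psi_3)=0$ by Lemma~\ref{l:scalprodinvariant}.

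The substantial point, and the one I expect to require care, is the geometric construction in the causality proof: producing a \emph{single} Cauchy hypersurface of $M_3$ that contains compact pieces of Cauchy hypersurfaces of both $M_1$ and $M_2$. Causal disjointness is genuinely needed precisely to guarantee that $K_1\sqcup K_2$ is acausal in $M_3$, so that Theorem~\ref{tBernalSanchez3} applies; the localization $\supp(\phi_3)\cap\Sigma_3\subset K_1$ then rests on causal compatibility of $M_1$ in $M_3$ together with the fact that a spacelike Cauchy hypersurface meets every inextensible causal curve exactly once — the same bookkeeping already carried out for one piece in Lemma~\ref{lem:hilb}, here done for two disjoint pieces simultaneously. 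The time slice axiom, by contrast, costs nothing beyond Theorem~\ref{thm:Abos} and Lemma~\ref{lem:hilb}.
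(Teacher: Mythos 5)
Your proposal is correct and follows essentially the same route as the paper: the time slice axiom is reduced to Theorem~\ref{thm:Abos} via the diagram~\eqref{eq:symplhilb}, and quantum causality is proved by using causal disjointness to make $K_1\cup K_2$ acausal, invoking Theorem~\ref{tBernalSanchez3} to find a common Cauchy hypersurface $\Sigma_3\supset K_1\cup K_2$, and repeating the support-localization bookkeeping of Lemma~\ref{lem:hilb} to conclude that the two subspaces of $\SOL(M_3,S_3,P_3)$ are orthogonal, whence the CAR-subalgebras super-commute. The only cosmetic difference is that you parametrize elements of the image subspaces as $G_3(\ext\,\phi_1^\circ)$ rather than as extensions of solutions, which is equivalent by Remark~\ref{rem:symplanders}.
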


\begin{proof}
To show \eqref{quantcaus2}, we assume without loss of generality that $f_j$ and $F_j$ are inclusions.
Let $\phi_1\in\SOL(M_1,S_1,P_1)$ and $\psi_1\in\SOL(M_2,S_2,P_2)$.
Denote the extensions to $M_3$ by $\phi_2:=\SOL(f_1,F_1)(\phi_1)$ and $\psi_2:=\SOL(f_2,F_2)(\psi_1)$.
Choose a compact submanifold $K_1$ (with boundary) in a spacelike Cauchy hypersurface $\Sigma_1$ of $M_1$ such that $\supp(\phi_1)\cap\Sigma_1 \subset K_1$ and similarly $K_2$ for $\psi_1$.
Since $M_1$ and $M_2$ are causally disjoint, $K_1 \cup K_2$ is acausal.
Hence, by Theorem~\ref{tBernalSanchez3}, there exists a Cauchy hypersurface $\Sigma_3$ of $M_3$ containing $K_1$ and $K_2$.
As in the proof of Lemma~\ref{lem:hilb} one sees that $\supp(\phi_2)\cap\Sigma_3=\supp(\phi_1)\cap\Sigma_1$ and similarly for $\psi_2$.
Thus, when restricted to $\Sigma_3$, $\phi_2$ and $\psi_2$ have disjoint support.
Hence $(\phi_2,\psi_2)=0$.
This shows that the subspaces $\SOL(f_1,F_1)(\SOL(M_1,S_1,P_1))$ and $\SOL(f_2,F_2)(\SOL(M_2,S_2,P_2))$ of $\SOL(M_3,S_3,P_3)$ are perpendicular.
Definition~\ref{def-CAR} shows that the corresponding CAR-algebras must super-commute.

To see \eqref{timeslice2} we recall that $(f,F)$ is also a morphism in $\GlobHypGreen$ and that we know from Theorem~\ref{thm:Abos} that $\SYMPL(f,F)$ is an isomorphism.
From diagram~\eqref{eq:symplhilb} we see that $\SOL(f,F)$ is an isomorphism.
Hence $\Aferm(f,F)$ is also an isomorphism.
\end{proof}

\begin{rem}
Since causally disjoint regions should lead to commuting observables also in the fermionic case, one usually considers only the even part $\Aferm^\mathrm{even}(M,S,P)$ (or a subalgebra thereof) as the observable algebra while the full algebra $\Aferm(M,S,P)$ is called the {\em field algebra}.
\end{rem}

There is a slightly different description of the functor $\Aferm$.
Let $\rhilb$ denote the category whose objects are the real pre-Hilbert spaces and whose morphisms are the isometric linear embeddings.
We have the functor $\forget:\hilb\to\rhilb$ which associates to each complex pre-Hilbert space $(V,(\cdot\,,\cdot))$ its underlying real pre-Hilbert space $(V,\Re(\cdot\,,\cdot))$.
By Remark~\ref{rem:RversusC},
$$
\Aferm=\mathrm{CAR}_\mathrm{sd}\circ\forget\circ\SOL .
$$
Since the self-dual CAR-algebra of a real pre-Hilbert space is the Clifford algebra of its complexification and since for any complex pre-Hilbert space $V$ we have $$
\forget(V) \otimes_\R \C = V \oplus V^*, 
$$
$\Aferm(M,S,P)$ is also the Clifford algebra of $\SOL(M,S,P) \oplus \SOL(M,S,P)^* = \SOL(M,S\oplus S^*,P\oplus P^*)$.
This is the way this functor is often described in the physics literature, see e.g.\ \cite[p.~115f]{St}.

Self-dual CAR-representations are more natural for real fields.
Let $M$ be globally hyperbolic and let $S\to M$ be a {\em real} vector bundle equipped with a real inner product $\<\cdot\,,\cdot\>$.
A formally skew-adjoint\footnote{instead of self-adjoint!} differential operator $P$ acting on sections of $S$ is called of {\em definite type} if and only if for any $x\in M$ and any future-directed timelike tangent vector $\mathfrak{n}\in T_xM$, the bilinear map 
$$
S_x\times S_x \to \R,
\quad\quad
(\phi,\psi)\mapsto\< \sigma_P(\mathfrak{n}^\flat)\cdot\phi,\psi\> ,
$$
yields a positive definite Euclidean scalar product on $S_x$.
An example is given by the real Dirac operator 
$$
D:=\sum_{j=1}^m\epsilon_j e_j\cdot\na_{e_j}
$$
acting on sections of the real spinor bundle $\Sigma^\R M$.

Given a smooth spacelike Cauchy hypersurface $\Sigma\subset M$ with future-directed timelike unit normal field $\mathfrak{n}$, we define a scalar product on $\SOL(M,S,P)=\ker(P)\cap \Cinfsc(M,S,P)$ by 
$$
(\phi,\psi):=\int_\Sigma \< \sigma_P(\mathfrak{n}^\flat)\cdot\phi_{|_\Sigma},\psi_{|_\Sigma}\>\ds .
$$
With essentially the same proofs as before, one sees that this scalar product does not depend on the choice of Cauchy hypersurface $\Sigma$ and that a morphism $(f,F):(M_1,S_1,P_1) \to (M_2,S_2,P_2)$ gives rise to an extension operator $\SOL(f,F):\SOL(M_1,S_1,P_1) \to \SOL(M_2,S_2,P_2)$ preserving the scalar product.
We have constructed a functor 
$$
\SOL:\GHSD\longrightarrow\rhilb
$$
where $\GHSD$ denotes the category whose objects are triples $(M,S,P)$ with $M$ globally hyperbolic, $S\to M$ a real vector bundle with real inner product and $P$ a formally skew-adjoint, Green-hyperbolic differential operator of definite type acting on sections of $S$.
The morphisms are the same as before.

Now the functor 
$$
\Afermsd := \CAR_\mathrm{sd} \circ \SOL : \GHSD \longrightarrow \CAlg
$$
is a locally covariant quantum field theory in the sense that Theorem~\ref{thm:Aferm} holds with $\Aferm$ replaced by $\Afermsd$.

\section{States and quantum fields}

In order to produce numbers out of our quantum field theory that can be compared to experiments, we need states, in addition to observables.
We briefly recall the relation between states and representations via the GNS-construction.
Then we show how the choice of a state gives rise to quantum fields and $n$-point functions.

\subsection{States and representations}
Recall that a \emph{state} on a unital \Cstar-algebra $A$ is a linear functional $\tau: A\to \C$ such that 
\begin{enumerate}[(i)]
\item
$\tau$ is positive, i.e., $\tau(a^*a)\ge 0$ for all $a\in A$;
\item 
$\tau$ is normed, i.e., $\tau(1)=1$.
\end{enumerate}
One checks that for any state the sesquilinear form $A\times A \to \C$, $(a,b) \mapsto \tau(b^*a)$, is a positive semi-definite Hermitian product and $|\tau(a)|\leq \|a\|$ for all $a\in A$.
In particular, $\tau$ is continuous.

Any state induces a representation of $A$.
Namely, the sesquilinear form $\tau(b^*a)$ induces a scalar product $\so{\cdot}{\cdot}$ on $A/\{a\in A\mid \tau(a^*a)=0\}$.
The Hilbert space completion of $A/\{a\in A\mid \tau(a^*a)=0\}$ is denoted by $\H_\tau$.
The action of $A$ on $\H_\tau$ is induced by the multiplication in $A$,
$$
\pi_\tau(a)[b]_\tau := [ab]_\tau,
$$
where $[a]_\tau$ denotes the residue class of $a\in A$ in $A/\{a\in A\mid \tau(a^*a)=0\}$.
This representation is known as the \emph{GNS-representation} induced by $\tau$.
The residue class $\Omega_\tau:=[1]_\tau\in \H_\tau$ is called the \emph{vacuum vector}.
By construction, it is a cyclic vector, i.e., the orbit $\pi_\tau(A)\cdot\Omega_\tau = A/\{a\in A\mid \tau(a^*a)=0\}$ is dense in $\H_\tau$.

The GNS-representation together with the vacuum vector allows to reconstruct the state since
\begin{equation}
\tau(a) = \tau(1^*a1) = \so{\pi_\tau(a)\Omega_\tau}{\Omega_\tau} .
\label{eq:tau}
\end{equation}
If we look at the vector state $\tilde\tau:\LL(\H_\tau) \to \C$, $\tilde\tau(\tilde a) = \so{\tilde a\,\Omega_\tau}{\Omega_\tau}$, on the \Cstar-algebra $\LL(\H_\tau)$ of bounded linear operators on $\H_\tau$, then \eqref{eq:tau} says that the diagram
$$
\xymatrix{
A \ar[rr]^{\pi_\tau} \ar[rd]_\tau && \LL(\H_\tau) \ar[ld]^{\tilde\tau}\\
& \C &
}
$$
commutes.
One checks that $\|\pi_\tau\|\le 1$, see \cite[p.~20]{BB}. 
In particular, $\pi_\tau: A \to \LL(\H_\tau)$ is continuous.

See e.g.\ \cite[Sec.~1.4]{BB} or \cite[Sec.~2.3]{BR} for details on states and representations of \Cstar-algebras.

\subsection{Bosonic quantum field}
Now let $(M,S,P)$ be an object in $\GlobHypGreen$ and $\tau$ a state on the corresponding bosonic algebra $\Abos(M,S,P)$.
Intuitively, the quantum field should be an operator-valued distribution $\Phi$ on $M$ such that
$$
e^{i\Phi(f)} = w([f])
$$
for all test sections $f\in\DD(M,S)$.
Here $[f]$ denotes the residue class in $\SYMPL(M,S,P)=\DD(M,S)/\ker G$ and $w:\SYMPL(M,S,P) \to \Abos(M,S,P)$ is as in Definition~\ref{d:CCR}.
This suggests the definition
$$
\Phi(f) := -i\ddt w(t[f]).
$$
The problem is that $w$ is highly discontinuous so that this derivative does not make sense.
This is where states and representations come into the play.
We call a state $\tau$ on $\Abos(M,S,P)$ {\em regular} if for each $f\in\DD(M,S)$ and each $h\in\H_\tau$ the map $t\mapsto \pi_\tau(w(t[f]))h$ is continuous.
Then $t\mapsto \pi_\tau(w(t[f]))$ is a strongly continuous one-parameter unitary group for any $f\in\DD(M,S)$ because
$$
\pi_\tau(w((t+s)[f])) 
= \pi_\tau(e^{i\omega(t[f],s[f])/2}w(t[f])w(s[f]))
= \pi_\tau(w(t[f]))\pi_\tau(w(s[f])) .
$$
Here we used Definition~\ref{d:CCR}~\eqref{CCR4} and the fact that $\omega$ is skew-symmetric so that $\omega(t[f],s[f])=0$.
By Stone's theorem \cite[Thm.~VIII.8]{RS1} this one-parameter group has a unique infinitesimal generator, i.e., a self-adjoint, generally unbounded operator $\Phi_\tau(f)$ on $\H_\tau$ such that 
$$
e^{it\Phi_\tau(f)} = \pi_\tau(w(t[f])) .
$$
For all $h$ in the domain of $\Phi_\tau(f)$ we have
$$
\Phi_\tau(f)h = -i\ddt \pi_\tau(w(t[f]))h .
$$
We call the operator-valued map $f \mapsto \Phi_\tau(f)$ the {\em quantum field}
corresponding to $\tau$.

\begin{definition}\label{def:strongreg}
A regular state $\tau$ on $\Abos(M,S,P)$ is called {\em strongly regular} if
\begin{enumerate}[(i)]
\item there is a dense subspace $\D_\tau\subset \H_\tau$ contained in the domain of $\Phi_\tau(f)$ for any $f\in\DD(M,S)$;
\item $\Phi_\tau(f)(\D_\tau)\subset \D_\tau$ for any $f\in\DD(M,S)$;
\item the map $\DD(M,S) \to \H_\tau$, $f \mapsto \Phi_\tau(f)h$, is continuous for every fixed $h\in\D_\tau$.
\end{enumerate}
\end{definition}

For a strongly regular state $\tau$ we have for all $f,g\in\DD(M,S)$, $\alpha,\beta\in\R$ and $h\in\D_\tau$:
\begin{align*}
\Phi_\tau(\alpha f + \beta g)h
&=
-i\ddt \pi_\tau(w(t[\alpha f +\beta g]))h \\
&=
-i\ddt\left\{ e^{i\alpha\beta t^2\omega([f],[g])/2}\pi_\tau(w(\alpha t[f]))\pi_\tau(w(\beta t[g]))h\right\} \\
&=
-i\ddt \pi_\tau(w(\alpha t[f]))h - i\ddt\pi_\tau(w(\beta t[g]))h \\
&=
\alpha \Phi_\tau(f)h + \beta\Phi_\tau(g)h.
\end{align*}
Hence $\Phi_\tau(f)$ depends linearly on $f$.
The quantum field $\Phi_\tau$ is therefore a distribution on $M$ with values in self-adjoint operators on $\H_\tau$.

The {\em $n$-point functions}  are defined by
\begin{align*}
\tau_n(f_1,\ldots,f_n)
&:=
\so{\Phi_\tau(f_1)\cdots\Phi_\tau(f_n)\Omega_\tau}{\Omega_\tau} \\
&=
\tilde\tau\left(\Phi_\tau(f_1)\cdots\Phi_\tau(f_n)\right)\\
&=
\tilde\tau\left(\left(-i\ddT{1} \pi_\tau(w(t_1[f_1]))\right)\cdots \left(-i\ddT{n} \pi_\tau(w(t_n[f_n]))\right)\right) \\
&=
(-i)^n \DDT \tilde\tau\left(\pi_\tau(w(t_1[f_1]))\cdots \pi_\tau(w(t_n[f_n]))\right) \\
&=
(-i)^n \DDT \tilde\tau\left(\pi_\tau(w(t_1[f_1])\cdots w(t_n[f_n]))\right) \\
&=
(-i)^n \DDT \tau\left(w(t_1[f_1])\cdots w(t_n[f_n])\right) .
\end{align*}

For a strongly regular state $\tau$ the $n$-point functions are continuous separately in each factor.
By the Schwartz kernel theorem \cite[Thm.~5.2.1]{Hoerm1} the $n$-point function $\tau_n$ extends uniquely to a distribution on $M\times\cdots\times M$ ($n$ times) in the following sense:
Let $S^*\boxtimes \cdots \boxtimes S^*$ be the bundle over $M\times\cdots\times M$ whose fiber over $(x_1,\ldots,x_n)$ is given by $S^*_{x_1}\otimes \cdots \otimes S^*_{x_n}$.
Then there is a unique distribution on $M\times\cdots\times M$ in the bundle $S^*\boxtimes \cdots \boxtimes S^*$, again denoted $\tau_n$, such that for all $f_j\in\DD(M,S)$,
$$
\tau_n(f_1,\ldots,f_n) = \tau_n(f_1\otimes \cdots  \otimes f_n)
$$
where $(f_1\otimes \cdots  \otimes f_n)(x_1,\ldots,x_n) := f_1(x_1)\otimes\cdots\otimes f_n(x_n)$.

\begin{thm}\label{thm:QFbosonic}
Let $(M,S,P)$ be an object in $\GlobHypGreen$ and $\tau$ a strongly regular state on the corresponding bosonic algebra $\Abos(M,S,P)$.
Then
\begin{enumerate}[(i)]
\item\label{bos:PPsiO}
$P\Phi_\tau=0$ and $P\tau_n(f_1,\ldots,f_{j-1},\cdot,f_{j+1},\ldots,f_n)=0$ hold in the distributional sense where $f_k\in\DD(M,S)$, $k\neq j$, are fixed;
\item\label{bos:CCRPsi}
the quantum field satisfies the canonical commutation relations, i.e., 
$$
[\Phi_\tau(f),\Phi_\tau(g)]h = i\int_M \<Gf,g\>\dV \cdot h
$$ 
for all $f,g\in\DD(M,S)$ and $h\in\D_\tau$;
\item\label{bos:CCRnpoint}
the $n$-point functions satisfy the canonical commutation relations, i.e., \begin{align*}
&\tau_{n+2}(f_1,\ldots,f_{j-1},f_j,f_{j+1},\ldots,f_{n+2})\\
& - \tau_{n+2}(f_1,\ldots,f_{j-1},f_{j+1},f_j,f_{j+2},\ldots,f_{n+2}) \\
=\,& i\int_M\<Gf_j,f_{j+1}\>\dV \cdot \tau_n(f_1,\ldots,f_{j-1},f_{j+2},\ldots,f_{n+2})
\end{align*}
for all $f_1,\ldots,f_{n+2}\in \DD(M,S)$.
\end{enumerate}
\end{thm}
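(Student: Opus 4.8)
The plan is to reduce all three assertions to the Weyl relation~\eqref{CCR4} of Definition~\ref{d:CCR} (equivalently, $\pi_\tau(w(t[f]))\,\pi_\tau(w(s[g]))=e^{-i ts\,\omega([f],[g])}\,\pi_\tau(w(s[g]))\,\pi_\tau(w(t[f]))$), to the Stone-theorem identity $\pi_\tau(w(t[f]))=e^{it\Phi_\tau(f)}$, and to the fact that $Pf$ represents $0$ in $\SYMPL(M,S,P)=\DD(M,S)/\ker G$, which holds because $G\circ P=0$ on $\DD(M,S)$ by Theorem~\ref{pexactseq}. Granting this last point, part~\eqref{bos:PPsiO} is immediate: since $Pf\in\ker G$ we get $w(t[Pf])=w(0)=1$ for every $t$, so $t\mapsto\pi_\tau(w(t[Pf]))$ is the constant unitary group and $\Phi_\tau(Pf)=-i\ddt\pi_\tau(w(t[Pf]))=0$; as $P$ is formally self-adjoint and $\<\cdot\,,\cdot\>$ identifies $S$ with $S^*$, the distributional equation $P\Phi_\tau=0$ is precisely ``$\Phi_\tau(Pf)=0$ for all $f$''. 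For the $n$-point version one applies the same observation to $\tau_n(f_1,\dots,f_n)=(-i)^n\left.\frac{\partial^n}{\partial t_1\cdots\partial t_n}\right|_{0}\tau\!\big(w(t_1[f_1])\cdots w(t_n[f_n])\big)$: replacing $f_j$ by $Pf_j$ turns the $j$-th factor into $1$, so the argument of $\tau$ does not depend on $t_j$ and $\partial_{t_j}|_{0}$ annihilates it.

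For part~\eqref{bos:CCRPsi} I would fix $f,g\in\DD(M,S)$ and $h,h'\in\D_\tau$ and analyse the scalar function $\mathcal A(t,s):=\so{\pi_\tau(w(t[f]))\,\pi_\tau(w(s[g]))\,h}{h'}$ near the origin. Computing $\partial_t\partial_s\mathcal A|_{(0,0)}$ \emph{directly} --- pulling the bounded operator $\pi_\tau(w(t[f]))$ out of $\partial_s$, using that $h$ and $\Phi_\tau(g)h$ lie in $\D_\tau$ and hence in $\operatorname{dom}\Phi_\tau(f)$ by Definition~\ref{def:strongreg}(ii), then transferring the remaining unitary onto the $h'$-slot and using the symmetry of the self-adjoint operators --- gives $-\so{\Phi_\tau(f)\Phi_\tau(g)h}{h'}$. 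Computing it \emph{via the Weyl relation}, which says $\mathcal A(t,s)=e^{-i ts\,\omega([f],[g])}\,\mathcal B(t,s)$ with $\mathcal B(t,s):=\so{\pi_\tau(w(s[g]))\,\pi_\tau(w(t[f]))\,h}{h'}$, the Leibniz rule contributes an extra summand $-i\,\omega([f],[g])\so{h}{h'}$ from the exponential, while $\partial_t\partial_s\mathcal B|_{(0,0)}=-\so{\Phi_\tau(g)\Phi_\tau(f)h}{h'}$ by the same computation with $f$ and $g$ interchanged. Equating the two values and cancelling gives $\so{[\Phi_\tau(f),\Phi_\tau(g)]h}{h'}=i\,\omega([f],[g])\so{h}{h'}=i\big(\int_M\<Gf,g\>\dV\big)\so{h}{h'}$, and density of $\D_\tau$ upgrades this to $[\Phi_\tau(f),\Phi_\tau(g)]h=i\int_M\<Gf,g\>\dV\cdot h$ on $\D_\tau$.

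Part~\eqref{bos:CCRnpoint} is the same computation carried out one level lower, on the scalar function $(t_1,\dots,t_{n+2})\mapsto\tau\!\big(w(t_1[f_1])\cdots w(t_{n+2}[f_{n+2}])\big)$, which is smooth near $0$ (this being exactly what makes $\tau_{n+2}$ well-defined). Commuting the $j$-th and $(j{+}1)$-st Weyl generators via~\eqref{CCR4} multiplies this function by $e^{-i t_jt_{j+1}\,\omega([f_j],[f_{j+1}])}$; applying $(-i)^{n+2}\left.\frac{\partial^{n+2}}{\partial t_1\cdots\partial t_{n+2}}\right|_{0}$ and expanding by Leibniz, only two terms survive because each $t_k$ must be differentiated exactly once: the term where the exponential contributes its constant $1$, which yields $\tau_{n+2}(f_1,\dots,f_{j+1},f_j,\dots,f_{n+2})$, and the term where $\partial_{t_j}\partial_{t_{j+1}}$ falls on the linear part $-i t_jt_{j+1}\,\omega([f_j],[f_{j+1}])$ of the exponential, which collapses the $j$-th and $(j{+}1)$-st Weyl factors to $1$ and, after bookkeeping the powers of $-i$, yields $i\int_M\<Gf_j,f_{j+1}\>\dV\cdot\tau_n(f_1,\dots,f_{j-1},f_{j+2},\dots,f_{n+2})$. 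This is exactly the asserted commutation relation, and the Schwartz-kernel extension already used to define $\tau_n$ turns it into an identity of distributions.

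The one genuinely delicate point is the domain bookkeeping in part~\eqref{bos:CCRPsi}: the unbounded operators $\Phi_\tau(f)$ must be handled so that every differentiation lands either on a bounded unitary (which passes through limits) or on a vector known to lie in $\D_\tau$. Working throughout with matrix elements $\so{\,\cdot\,}{h'}$, $h'\in\D_\tau$, and exploiting the invariance clause Definition~\ref{def:strongreg}(ii) is precisely what permits this; parts~\eqref{bos:CCRnpoint} and the $n$-point half of~\eqref{bos:PPsiO} involve only bounded Weyl operators and the state, so these subtleties do not arise there.
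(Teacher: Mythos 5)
Your proposal is correct and follows essentially the same route as the paper: part~\eqref{bos:PPsiO} from $G\circ P=0$ so that $[Pf]=0$ in $\SYMPL(M,S,P)$, and parts~\eqref{bos:CCRPsi} and~\eqref{bos:CCRnpoint} from the Weyl relation $w([f])w([g])=e^{-i\omega([f],[g])}w([g])w([f])$ differentiated at the origin. The only (harmless) variations are that you prove~\eqref{bos:CCRPsi} via matrix elements $\so{\cdot}{h'}$ for cleaner domain bookkeeping where the paper differentiates vectors directly, and that you derive~\eqref{bos:CCRnpoint} by a direct Leibniz computation on $\tau\bigl(w(t_1[f_1])\cdots w(t_{n+2}[f_{n+2}])\bigr)$ rather than deducing it from~\eqref{bos:CCRPsi} as the paper does.
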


\begin{proof}
Since $P$ is formally self-adjoint and $GPf=0$ for any $f\in\DD(M,S)$, we have for any $h\in\D_\tau$:
$$
(P\Phi_\tau)(f)h = \Phi_\tau(Pf)h = -i\ddt \pi_\tau (w(t\underbrace{[Pf]}_{=0}))h = -i \ddt h = 0.
$$
This shows $P\Phi_\tau=0$.
The result for the $n$-point functions follows and \eqref{bos:PPsiO} is proved.

To show \eqref{bos:CCRPsi} we observe that by Definition~\ref{d:CCR}~\eqref{CCR4} we have on the one hand
$$
w([f+g]) = e^{i\omega([f],[g])/2}w([f])w([g])
$$
and on the other hand
$$
w([f+g]) = e^{i\omega([g],[f])/2}w([g])w([f]),
$$
hence
$$
w([f])w([g]) = e^{-i\omega([f],[g])}w([g])w([f]) .
$$
Thus
\begin{align*}
\Phi_\tau(f)\Phi_\tau(g)h
&=
-\dtt \pi_\tau(w(t[f])w(s[g]))h \\
&=
-\dtt \pi_\tau(e^{-i\omega(t[f],s[g])}w(s[g])w(t[f]))h \\
&=
-\dtt \left\{e^{-i\omega(t[f],s[g])}\cdot\pi_\tau(w(s[g])w(t[f]))h\right\} \\
&=
i\omega([f],[g])h + \Phi_\tau(g)\Phi_\tau(f)h \\
&=
i\int_M\<Gf,g\>\dV\cdot h  + \Phi_\tau(g)\Phi_\tau(f)h.
\end{align*}
This shows \eqref{bos:CCRPsi}.
Assertion \eqref{bos:CCRnpoint} follows from \eqref{bos:CCRPsi}.
\end{proof}

\begin{rem}
As a consequence of the canonical commutation relations we get
$$
[\Phi_\tau(f),\Phi_\tau(g)]=0
$$
if the supports of $f$ and $g$ are causally disjoint, i.e., if there is no causal curve from $\supp(f)$ to $\supp(g)$.
The reason is that in this case the supports of $Gf$ and $g$ are disjoint.
A similar remark holds for the $n$-point functions.
\end{rem}

\begin{rem}
In the physics literature one also finds the statement $\Phi(\overline f) = \Phi(f)^*$.
This simply expresses the fact that we are dealing with a theory over the reals.
We have encoded this by considering real vector bundles $S$, see Definition~\ref{def:GHG}, and the fact that $\Phi_\tau(f)$ is always self-adjoint.
\end{rem}

\subsection{Fermionic quantum fields}

Let $(M,S,P)$ be an object in $\GlobHypDef$ and let $\tau$ be a state on the fermionic algebra $\Aferm(M,S,P)$.
For $f\in\DD(M,S)$ we put
\be
\Phi_\tau(f) &:=&- \pi_\tau(\a(Gf)^*),\\
\Phi_\tau^+(f)&:=&\pi_\tau(\a(Gf)),
\ee
where $\a$ is as in Definition~\ref{def-CAR} (compare \cite[Sec.~III.B, p.~141]{Dimock82}).
Since $\pi_\tau$, $\a$, and $G$ are sequentially continuous (for $G$ see \cite[Prop.~3.4.8]{BGP}), so are $\Phi_\tau$ and $\Phi_\tau^+$.
In contrast to the bosonic case, no regularity assumption on $\tau$ is needed.
Hence $\Phi_\tau$ and $\Phi_\tau^+$ are distributions on $M$ with values in the space of bounded operators on $\H_\tau$.
Note that $\Phi_\tau$ is linear while $\Phi_\tau^+$ is anti-linear.

\begin{thm}\label{thm:QFfermionic}
Let $(M,S,P)$ be an object in $\GlobHypDef$ and $\tau$ a state on the corresponding fermionic algebra $\Aferm(M,S,P)$.
Then
\begin{enumerate}[(i)]
\item\label{ferm:PPsiO}
$P\Phi_\tau=P\Phi_\tau^+=0$ holds in the distributional sense;
\item\label{ferm:CARPsi}
the quantum fields satisfy the canonical anti-commutation relations, i.e., 
\be 
\{\Phi_\tau(f),\Phi_\tau(g)\} &=& \{\Phi_\tau^+(f),\Phi_\tau^+(g)\}=0,\\
\{\Phi_\tau(f),\Phi_\tau^+(g)\} &=&i\Big(\int_M \<Gf,g\>\dV\Big) \cdot \id_{\H_\tau}
\ee
for all $f,g\in\DD(M,S)$.
\end{enumerate}
\end{thm}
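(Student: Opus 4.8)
The plan is to reduce everything to the properties of the self-dual CAR-algebra established in Section~\ref{s:appendixCAR} (Definition~\ref{def-CAR}), together with the facts about $G$, $\SOL$, and $\sigma_P$ already assembled. The two distributional field equations in \eqref{ferm:PPsiO} follow the same one-line argument as in the bosonic case: for $f\in\DD(M,S)$ one has $G(Pf)=0$ by Theorem~\ref{pexactseq}, hence $\Phi_\tau(Pf)=-\pi_\tau(\a(G(Pf))^*)=0$ and likewise $\Phi_\tau^+(Pf)=\pi_\tau(\a(G(Pf)))=0$, which is exactly the assertion $P\Phi_\tau=P\Phi_\tau^+=0$ in the distributional sense. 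Here one uses that $\a$ is (conjugate-)linear and $\pi_\tau$ is a $*$-homomorphism.

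First I would treat the anti-commutation relations. The vanishing relations $\{\Phi_\tau(f),\Phi_\tau(g)\}=0$ and $\{\Phi_\tau^+(f),\Phi_\tau^+(g)\}=0$ come directly from the defining relations of the self-dual CAR-algebra applied to the elements $\a(Gf)^*=\a(\overline{Gf})$ and $\a(Gf)$ of $\Aferm(M,S,P)$, since $\pi_\tau$ is an algebra homomorphism and the relevant anticommutators already vanish at the algebra level (the self-dual CAR relation says $\{\a(u),\a(v)\}$ is a multiple of the inner product of the \emph{complexified} arguments, and for two copies of holomorphic-type generators this pairing is zero). For the mixed relation, the self-dual CAR relation gives $\{\a(Gf),\a(Gf)^*\}=(Gf,Gg)_{\SOL}\cdot\id$, where $(\cdot\,,\cdot)_{\SOL}$ is the pre-Hilbert scalar product on $\SOL(M,S,P)$ from Lemma~\ref{l:scalprodinvariant}. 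Applying $\pi_\tau$ and inserting the signs from the definitions of $\Phi_\tau,\Phi_\tau^+$ yields $\{\Phi_\tau(f),\Phi_\tau^+(g)\}=-(Gf,Gg)_{\SOL}\cdot\id_{\H_\tau}$ up to sign bookkeeping.

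The real content is then the identity
\[
(Gf,Gg)_{\SOL} \;=\; -\,i\int_M \<Gf,g\>\dV ,
\]
which converts the Cauchy-hypersurface integral defining $(\cdot\,,\cdot)_{\SOL}$ into the spacetime pairing with the Green operator. To prove it I would pick a spacelike Cauchy hypersurface $\Sigma$, write $g=g_+-g_-$ with $\supp(g_\pm)\subset J_\pm(K)$ via a cut-off as in the proof of Theorem~\ref{pexactseq}, so that $G_+g$ and $G_-g$ restrict to $Gg$ near $\Sigma$, and apply the first-order Green's formula \eqref{eq:Green1storder} on the region between $\Sigma$ and a hypersurface pushed to the future past $\supp(g_+)$: the boundary term at infinity vanishes (spacelike compact support), the bulk term produces $\int \<Gf,Pg_+\>\dV=\int\<Gf,g_+\>\dV$ since $PGf=0$ on the relevant region and $Pg_+=g$ there, and the surface term at $\Sigma$ is precisely $\int_\Sigma\<\sigma_P(\mathfrak{n}^\flat)Gf,Gg\>\ds$. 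Collecting both future and past contributions and tracking the factor $i$ in \eqref{eq:ScalProd} gives the claimed formula. The main obstacle is getting this sign- and factor-$i$ bookkeeping exactly right — coordinating the conventions in Definition~\ref{def-CAR}, the minus sign in the definition of $\Phi_\tau$, the $i$ in the scalar product \eqref{eq:ScalProd}, and the orientation of $\mathfrak{n}$ in Green's formula — rather than any conceptual difficulty; the supports arguments are routine once the decomposition $g=g_+-g_-$ is in place, exactly as in Theorem~\ref{pexactseq} and Lemma~\ref{l:GpmDsa}.
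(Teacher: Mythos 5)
Your overall strategy coincides with the paper's: part (i) and the reduction of the anti-commutators to the CAR axioms and to the identity $(Gf,Gg)=-i\int_M\<Gf,g\>\dV$ are exactly the printed argument. (Note only that the algebra here is the complex CAR-representation of Definition~\ref{def-CAR}, not the self-dual one; by Remark~\ref{rem:RversusC} they agree as Clifford algebras, so your reformulation of the vanishing anticommutators is harmless.)

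The one place where your write-up would not survive being taken literally is the Green's formula step. First, no cut-off decomposition of $g$ is needed -- for compactly supported $g$ the condition $\supp(g_\pm)\subset J_\pm(K)$ is vacuous -- and the relevant splitting is simply $Gg=G_+g-G_-g$, which the Green's operators provide for free (your subsequent use of $Pg_+=g$ shows you in fact mean $g_\pm=G_\pm g$). Second, for the term $(Gf,G_+g)$ the auxiliary Cauchy hypersurface must be pushed to the \emph{past}, not the future: $\supp(G_+g)\subset J_+^M(\supp(g))$ is past-compact and unbounded to the future, so one chooses $\Sigma'$ in the past of $\Sigma$ with $\Sigma'\cap\supp(G_+g)=\emptyset$, and the boundary term over $\Sigma'$ vanishes because $G_+g\equiv 0$ there -- not because of spacelike compactness of $Gf$, which only guarantees that the hypersurface integrals are well defined. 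Green's formula \eqref{eq:Green1storder} on the region between $\Sigma'$ and $\Sigma$, with $PGf=0$ and $PG_+g=g$, then gives $(Gf,G_+g)=-i\int_{J_-(\Sigma)}\<Gf,g\>\dV$ after letting $\Sigma'$ recede to the past; the mirror computation gives $(Gf,G_-g)=i\int_{J_+(\Sigma)}\<Gf,g\>\dV$, and subtracting yields the claimed identity. With these corrections your argument is exactly the paper's.
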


\begin{proof}
Since $GP=0$ on $\DD(M,S)$, we have $P\Phi_\tau(f) = \Phi_\tau(Pf) = -\pi_\tau(\a(GPf)^*) = 0$ and similarly for $\Phi_\tau^+$.
This proves assertion \eqref{ferm:PPsiO}.

Using Definition~\ref{def-CAR}~\eqref{def-CAR:2} we compute
\begin{align*}
 \{\Phi_\tau(f),\Phi_\tau(g)\}
&=\{\pi_\tau(\a(Gf)^*),\pi_\tau(\a(Gg)^*)\}\\
&=\pi_\tau(\{\a(Gf)^*,\a(Gg)^*\})\\
&=\pi_\tau(\{\a(Gg),\a(Gf)\}^*)\\
&=0.
\end{align*}
Similarly one sees $\{\Phi_\tau^+(f),\Phi_\tau^+(g)\}=0$.
Definition~\ref{def-CAR}~\eqref{def-CAR:3} also yields
\begin{align*}
 \{\Phi_\tau(f),\Phi_\tau^+(g)\}
&=-\pi_\tau(\{\a(Gf)^*,\a(Gg)\})
=-(Gf,Gg)\cdot\id_{\H_\tau}.
\end{align*}
To prove assertion~\eqref{ferm:CARPsi} we have to verify
\begin{equation}
(Gf,Gg) = -i\int_{M}\<Gf,g\>\dV
\label{eq:M}
\end{equation}
Let $\Sigma\subset M$ be a smooth spacelike Cauchy hypersurface.
Since $\supp(G_+g)$ is past-compact, we can find a Cauchy hypersurface $\Sigma'\subset M$ in the past of $\Sigma$ which does not intersect $\supp(G_+g)\subset J^M_+(\supp(g))$.
Denote the region between $\Sigma$ and $\Sigma'$ by $\Omega'$.
The Green's formula (\ref{eq:Green1storder}) yields
\begin{align*}
(Gf,G_+g)
&=
\int_\Sigma \< i\sigma_P(\mathfrak{n}^\flat)\cdot Gf,G_+g\>\ds \nonumber\\
&=
\int_{\Sigma'} \< i\sigma_P(\mathfrak{n}^\flat)\cdot Gf,G_+g\>\ds 
+ i\int_{\Omega'}(\<PGf,G_+g\>-\<Gf,PG_+g\>)\dV\nonumber\\
&=
-i\int_{\Omega'}\<Gf,g\>\dV
\end{align*}
because $PG_+g=g$ and $PGf=0$.
Since $\Sigma'$ can be chosen arbitrarily to the past, this shows
\begin{equation}
(Gf,G_+g) = -i\int_{J_-(\Sigma)}\<Gf,g\>\dV.
\label{eq:M-}
\end{equation}
A similar computation yields
\begin{equation}
(Gf,G_-g) = i\int_{J_+(\Sigma)}\<Gf,g\>\dV.
\label{eq:M+}
\end{equation}
Subtracting \eqref{eq:M+} from \eqref{eq:M-} yields \eqref{eq:M} and concludes the proof of assertion~\eqref{ferm:CARPsi}.
\end{proof}

\begin{rem}
Similarly to the bosonic case, we find
$$
\{\Phi_\tau(f),\Phi_\tau^+(g)\}=0
$$
if the supports of $f$ and $g$ are causally disjoint.
\end{rem}

\begin{rem}
Using the anti-commutation relations in Theorem~\ref{thm:QFfermionic}~\eqref{ferm:CARPsi}, the computation of $n$-point functions can be reduced to those of the form
$$
\tau_{n,n'}(f_1,\ldots,f_n,g_1,\ldots,g_{n'}) = \<\Omega_\tau,\Phi_\tau(f_1)\cdots\Phi_\tau(f_n)\Phi_\tau^+(g_1)\cdots\Phi_\tau^+(g_{n'})\Omega_\tau\>_\tau .
$$
As in the bosonic case, the $n$-point functions satisfy the field equation in the distributional sense in each argument and extend to distributions on $M\times\cdots\times M$.
\end{rem}

If one uses the self-dual fermionic algebra $\Afermsd(M,S,P)$ instead of $\Aferm(M,S,P)$, then one gets the quantum field
$$
\Psi_\tau(f) := \pi_\tau(\b(Gf))
$$
where $\b$ is as in Definition~\ref{def-sdCAR}.
Then the analogue to Theorem~\ref{thm:QFfermionic} is

\begin{thm}
Let $(M,S,P)$ be an object in $\GHSD$ and $\tau$ a state on the corresponding self-dual fermionic algebra $\Afermsd(M,S,P)$.
Then
\begin{enumerate}[(i)]
\item
$P\Psi_\tau=0$ holds in the distributional sense;
\item
the quantum field takes values in self-adjoint operators, $\Psi_\tau(f) = \Psi_\tau(f)^*$ for all $f\in\DD(M,S)$;
\item
the quantum fields satisfy the canonical anti-commutation relations, i.e., 
$$
\{\Psi_\tau(f),\Psi_\tau(g)\} 
=
\int_M \<Gf,g\>\dV \cdot \id_{\H_\tau}
$$
for all $f,g\in\DD(M,S)$.
\end{enumerate}
\end{thm}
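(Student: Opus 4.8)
The plan is to run the proof of Theorem~\ref{thm:QFfermionic} essentially verbatim, with the single self-dual field $\Psi_\tau(f)=\pi_\tau(\b(Gf))$ playing the role of the pair $\Phi_\tau,\Phi_\tau^+$ and with the structural properties of the functor $\CAR_\mathrm{sd}$ (Definition~\ref{def-sdCAR}) replacing those of $\CAR$. First, $\Psi_\tau$ is a well-defined operator-valued distribution by the same argument as in the paragraph preceding Theorem~\ref{thm:QFfermionic}: the map $f\mapsto Gf$ is $\R$-linear and sequentially continuous \cite[Prop.~3.4.8]{BGP}, and $\b$ and $\pi_\tau$ are continuous, so $f\mapsto\Psi_\tau(f)$ is a continuous $\R$-linear map into the bounded operators on $\H_\tau$. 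For assertion~(i), since $G\circ P=0$ on $\DD(M,S)$ by Theorem~\ref{pexactseq}, one has $P\Psi_\tau(f)=\Psi_\tau(Pf)=\pi_\tau(\b(GPf))=0$ for every $f\in\DD(M,S)$, which is the field equation in the distributional sense. For assertion~(ii), the point is that $S$ is now a \emph{real} bundle, so $Gf\in\SOL(M,S,P)$ is a real vector and the self-dual CAR generators are self-adjoint, $\b(v)^*=\b(v)$, by Definition~\ref{def-sdCAR}; since $\pi_\tau$ is a $*$-homomorphism, $\Psi_\tau(f)^*=\pi_\tau(\b(Gf)^*)=\pi_\tau(\b(Gf))=\Psi_\tau(f)$. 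This is precisely the advantage of the self-dual formalism over the construction via $\Phi_\tau,\Phi_\tau^+$ that is alluded to in the text.

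For assertion~(iii), the canonical anticommutation relation $\{\b(v),\b(w)\}=(v,w)\cdot\id$ of Definition~\ref{def-sdCAR} together with the multiplicativity of $\pi_\tau$ gives
\[
\{\Psi_\tau(f),\Psi_\tau(g)\}=\pi_\tau\bigl(\{\b(Gf),\b(Gg)\}\bigr)=(Gf,Gg)\cdot\id_{\H_\tau},
\]
so everything reduces to the identity $(Gf,Gg)=\int_M\<Gf,g\>\dV$, which is the exact analogue of \eqref{eq:M} in the proof of Theorem~\ref{thm:QFfermionic}. I would prove it by the same device: write $G=G_+-G_-$, pick a smooth spacelike Cauchy hypersurface $\Sigma$, and for $G_+g$ push $\Sigma$ to a Cauchy hypersurface $\Sigma'$ far in the past, so that $\Sigma'$ misses $\supp(G_+g)\subset J_+^M(\supp g)$; applying the Green's formula \eqref{eq:Green1storder} for the first-order operator $P$ on the region between $\Sigma'$ and $\Sigma$, together with $PGf=0$ and $PG_+g=g$, yields the analogue of \eqref{eq:M-}, and symmetrically one obtains the analogue of \eqref{eq:M+} using $G_-g$. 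Subtracting and using $J_-^M(\Sigma)\cup J_+^M(\Sigma)=M$ gives $(Gf,Gg)=\int_M\<Gf,g\>\dV$.

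The only point requiring genuine care — and hence the main obstacle — is sign and normalization bookkeeping. Because $P$ is now formally \emph{skew}-adjoint rather than self-adjoint, the Green's formula \eqref{eq:Green1storder} must be used with $P^*=-P$, so its left-hand side becomes $\int_\Omega(\<P\phi,\psi\>+\<\phi,P\psi\>)\dV$, and correspondingly the scalar product $(\phi,\psi)=\int_\Sigma\<\sigma_P(\mathfrak{n}^\flat)\cdot\phi_{|_\Sigma},\psi_{|_\Sigma}\>\ds$ on $\SOL(M,S,P)$ carries no factor $i$, in contrast to \eqref{eq:ScalProd}. These two changes conspire so that the factors of $i$ present in \eqref{eq:M-} and \eqref{eq:M+} cancel, and one lands exactly on $\{\Psi_\tau(f),\Psi_\tau(g)\}=\int_M\<Gf,g\>\dV\cdot\id_{\H_\tau}$ with no spurious $i$, consistent with the description of $\Afermsd(M,S,P)$ as the Clifford algebra of $\SOL(M,S,P)$ given above. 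No new ideas beyond those already used for Theorem~\ref{thm:QFfermionic} and Lemma~\ref{l:scalprodinvariant} are needed.
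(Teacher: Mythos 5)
Your proposal is correct and follows exactly the route the paper intends: the theorem is stated as ``the analogue to Theorem~\ref{thm:QFfermionic}'' with the proof left implicit, and your adaptation --- using $\b(v)^*=\b(v)$ for self-adjointness, reducing (iii) to $(Gf,Gg)=\int_M\<Gf,g\>\dV$, and rerunning the Green's-formula argument with $P^*=-P$ and the $i$-free scalar product --- is precisely that analogue. The sign bookkeeping you single out is indeed the only delicate point, and you resolve it correctly: the sign flip in the Green's identity and the missing factor of $i$ in the scalar product cancel the $i$'s appearing in \eqref{eq:M-} and \eqref{eq:M+}, yielding the stated anti-commutation relation.
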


\begin{rem}
It is interesting to compare the concept of locally covariant quantum field theories as proposed in \cite{BFV} to the axiomatic approach to quantum field theory on Minkowski space based on the G\r{a}rding-Wightman axioms as exposed in \cite[Sec.~IX.8]{RS2}.
Property~1 (relativistic invariance of states) and Property~6 (Poincar\'e invariance of the field) in \cite{RS2} are replaced by functoriality (covariance). 
Property~4 (invariant domain for fields) and Property~5 (regularity of the field) have been encoded in strong regularity of the state used to define the quantum field in the bosonic case and are automatic in the fermionic case.
Property~7 (local commutativity or microscopic causality) is contained in Theorems~\ref{thm:QFbosonic} and \ref{thm:QFfermionic}.
Property~3 (existence and uniqueness of the vacuum) has no analogue and is replaced by the \emph{choice} of a state.
Property~8 (cyclicity of the vacuum) is then automatic by the general properties of the GNS-construction.

There remains one axiom, Property~2 (spectral condition), which we have not discussed at all.
It gets replaced by the Hadamard condition on the state chosen.
It was observed by Radzikowski \cite{Rad} that earlier formulations of this condition are equivalent to a condition on the wave front set of the $2$-point function.
Much work has been put into constructing and investigating Hadamard states for various examples of fields, see e.g.\ \cite{DMP,DPP,FewVerch,HollWald,SV1,SV2,S,Wald} and the references therein.
\end{rem}

\appendix
\section{Algebras of canonical (anti-) commutation relations}

We collect the necessary algebraic facts about CAR and CCR-algebras.

\subsection{CAR algebras}\label{s:appendixCAR}
The symbol ``CAR'' stands for ``canonical anti-commutation relations''.
These algebras are related to pre-Hilbert spaces.
We always assume the Hermitian inner product $(\cdot\,,\cdot)$ to be linear in the first argument and anti-linear in the second.

\begin{definition}\label{def-CAR}
A \emph{$\mathrm{CAR}$-representation} of a complex pre-Hilbert space $(V,(\cdot\,,\cdot))$ is a pair $(\a,A)$, where $A$ is a unital \Cstar-algebra and $\a:V\to A$ is an anti-linear map satisfying:
\begin{enumerate}[(i)]
\item\label{def-CAR:1} $A=C^*(\a(V))$,
\item\label{def-CAR:2} $\{\a(v_1),\a(v_2)\}=0$ and
\item\label{def-CAR:3} $\{\a(v_1)^*,\a(v_2)\}=(v_1,v_2)\cdot 1$,
\end{enumerate}
for all $v_1,v_2\in V$.
\end{definition}

We want to discuss CAR-representations in terms of \Cstar-Clifford algebras, whose definition we recall.
Given a complex pre-Hilbert vector space $(V,(\cdot\,,\cdot))$, we denote by $V_\C:=V\otimes_\R\C$ the complexification of $V$ considered as a real vector space and by $q_\C$ the complex-bilinear extension of $\Re(\cdot\,,\cdot)$ to $V_\C$.
Let $\mathrm{Cl}_{\rm alg}(V_\C,q_\C)$ be the algebraic Clifford algebra of  $(V_\C,q_\C)$.
It is an associative complex algebra with unit and contains $V_\C$ as a vector subspace.
Its multiplication is called Clifford multiplication and denoted by ``$\,\cdot\,$''. 
It satisfies the Clifford relations
\begin{equation} 
v\cdot w+w\cdot v=-2q_\C(v,w)1
\label{eq:CliffRel}
\end{equation}
for all $v,w\in V_\C$.
Define the $*$-operator on $\mathrm{Cl}_{\rm alg}(V_\C,q_\C)$ to be the unique anti-multiplicative and anti-linear extension of the anti-linear map $V_\C\rightarrow V_\C$, $v_1+iv_2\mapsto -(\overline{v_1+iv_2})=-(v_1-iv_2)$ for all $v_1,v_2\in V$.
In other words,
\[
*(\sum_{i_1<\ldots<i_k}\alpha_{i_1,\ldots,i_k}z_{i_1}\cdot\ldots\cdot z_{i_k})
=
(-1)^k\sum_{i_1<\ldots<i_k}\overline{\alpha_{i_1,\ldots,i_k}}\cdot\overline{z_{i_k}}\cdot\ldots\cdot \overline{z_{i_1}}
\]
for all $k\in\mathbb{N}$ and $z_{i_1},\ldots,z_{i_k}\in V_\C$.
Let $\|\cdot\|_\infty$ be defined by 
\[ \|a\|_\infty:=\sup_{\pi\in\mathrm{Rep}(V)}(\|\pi(a)\|)\]
for every $a\in\mathrm{Cl}_{\rm alg}(V_\C,q_\C)$, where $\mathrm{Rep}(V)$ denotes the set of all (isomorphism classes of) $*$-homomorphisms from $\mathrm{Cl}_{\rm alg}(V_\C,q_\C)$ to \Cstar-algebras.
Then $\|\cdot\|_\infty$ can be shown to be a well-defined \Cstar-norm on $\mathrm{Cl}_{\rm alg}(V_\C,q_\C)$, see e.g.\ \cite[Sec.~1.2]{PR}.

\begin{definition}\label{def-C*Cliffalg}
The \Cstar-Clifford algebra of a pre-Hilbert space $(V,(\cdot\,,\cdot))$ is the \Cstar-com\-ple\-tion of $\mathrm{Cl}_{\rm alg}(V_\C,q_\C)$ with respect to the \Cstar-norm $\|\cdot\|_\infty$ and the star operator defined above.
\end{definition}

\begin{thm}\label{tCAR}
For every complex pre-Hilbert space $(V,(\cdot\,,\cdot))$, the \Cstar-Clifford algebra $\mathrm{Cl}(V_\C,q_\C)$ provides a {\rm CAR}-representation of $(V,(\cdot\,,\cdot))$ via $\a(v)=\frac{1}{2}(v+iJv)$, where $J$ is the complex structure of $V$.

Moreover, {\rm CAR}-representations have the following universal property:
Let $\wih{A}$ be any unital \Cstar-algebra and $\wih{\a}:V\to\wih{A}$ be any anti-linear map satisfying Axioms~\eqref{def-CAR:2} and \eqref{def-CAR:3} of Definition~\ref{def-CAR}.
Then there exists a unique \Cstar-morphism $\wit{\alpha}: \mathrm{Cl}(V_\C,q_\C)\to\wih{A}$ such that 
$$
\xymatrix{
V \ar[r]^{\wih{\a}} \ar[d]_\a & \wih{A}\\
\mathrm{Cl}(V_\C,q_\C) \ar@{.>}[ru]^{\wit{\alpha}} &}
$$
commutes.
Furthermore, $\wit{\alpha}$ is injective.
\end{thm}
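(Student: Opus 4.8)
The plan is to establish the three assertions in turn. For the claim that $\a(v):=\tfrac12(v+iJv)$ defines a $\CAR$-representation inside $\mathrm{Cl}(V_\C,q_\C)$, I would begin by computing $\a(v)^*$: since $\a(v)=\tfrac12 v+i\,\tfrac12 Jv$ has ``real part'' $\tfrac12 v$ and ``imaginary part'' $\tfrac12 Jv$ in $V_\C=V\otimes_\R\C$, the defining formula for the $*$-operator gives $\a(v)^*=-\tfrac12(v-iJv)$. Hence $\a$ identifies $V$ with the $(-i)$-eigenspace $V^{0,1}$ of $J$ acting on $V_\C$ while $\a(\cdot)^*$ identifies $V$ with the $(+i)$-eigenspace $V^{1,0}$; since $V^{1,0}\oplus V^{0,1}=V_\C$ generates $\mathrm{Cl}_{\rm alg}(V_\C,q_\C)$ as an algebra, passing to closures gives Axiom~\eqref{def-CAR:1}. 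Anti-linearity of $\a$ is a one-line verification (write $\lambda=a+bi$ with $a,b\in\R$ and use that $J$ is multiplication by $i$ on $V$). For Axioms~\eqref{def-CAR:2} and~\eqref{def-CAR:3} I would feed the pairs $(\a(v_1),\a(v_2))$ and $(\a(v_1)^*,\a(v_2))$ into the Clifford relation~\eqref{eq:CliffRel}, reducing both to evaluating $q_\C$ on eigenvectors of $J$. Using that $q_\C$ restricts to $\Re(\cdot\,,\cdot)$ on $V$, that $Jv=iv$, and that the inner product is anti-linear in its second argument, one finds $q_\C(v_1+iJv_1,v_2+iJv_2)=0$ and $q_\C(v_1-iJv_1,v_2+iJv_2)=2(v_1,v_2)$, whence $\{\a(v_1),\a(v_2)\}=0$ and $\{\a(v_1)^*,\a(v_2)\}=(v_1,v_2)\cdot 1$. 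This part is routine bookkeeping with real and imaginary parts and I would not write it out.

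For the universal property, given $(\wih{A},\wih{\a})$ as in the statement I would first build a $\C$-linear map $\phi\colon V_\C\to\wih{A}$ by setting $\phi(\a(v)):=\wih{\a}(v)$ on $V^{0,1}$ and $\phi(\a(v)^*):=\wih{\a}(v)^*$ on $V^{1,0}$; each piece is $\C$-linear because $\a$, $\wih{\a}$ and the two $*$-operations involved are all conjugate-linear, so the relevant compositions come out $\C$-linear. The same three Clifford computations as above, now read off Axioms~\eqref{def-CAR:2}--\eqref{def-CAR:3} for $\wih{\a}$, show $\phi(z)\phi(w)+\phi(w)\phi(z)=-2q_\C(z,w)\,1_{\wih A}$ for $z,w\in V^{1,0}\cup V^{0,1}$, hence for all $z,w\in V_\C$ by bilinearity. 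The universal property of the \emph{algebraic} Clifford algebra then yields a unique unital algebra homomorphism $\wit{\alpha}_{\rm alg}\colon\mathrm{Cl}_{\rm alg}(V_\C,q_\C)\to\wih{A}$ extending $\phi$, and it is a $*$-homomorphism since this need only be checked on the generators in $V_\C$, where it comes down to the $*$-operations being involutive. Composing $\wit{\alpha}_{\rm alg}$ with a faithful representation of $\wih{A}$ exhibits it as an element of $\mathrm{Rep}(V)$, so $\|\wit{\alpha}_{\rm alg}(a)\|\le\|a\|_\infty$ for every $a$; hence $\wit{\alpha}_{\rm alg}$ extends by continuity to a unital \Cstar-morphism $\wit{\alpha}\colon\mathrm{Cl}(V_\C,q_\C)\to\wih{A}$ with $\wit{\alpha}\circ\a=\wih{\a}$, and the square commutes by construction. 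Uniqueness is automatic: a second such morphism agrees with $\wit{\alpha}$ on $\a(V)$, hence on $\a(V)^*$, hence on $V_\C$, hence on the dense subalgebra $\mathrm{Cl}_{\rm alg}(V_\C,q_\C)$, hence everywhere.

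The substantive point, and the one I expect to be the real obstacle, is injectivity of $\wit{\alpha}$; everything before it is formal. My approach is to reduce to the finite-dimensional case. For a finite-dimensional complex subspace $W\subseteq V$ the complexification $W_\C$ has even complex dimension $2\dim_\C W$ and $q_\C$ is non-degenerate on it (it extends the positive definite real form $\Re(\cdot\,,\cdot)|_W$), so by the classification of complex Clifford algebras $\mathrm{Cl}_{\rm alg}(W_\C,q_\C)\cong M_{2^{\dim_\C W}}(\C)$; see \cite[Ch.~I]{LM} or \cite[Sec.~1.2]{PR}. A full matrix algebra is simple and has a unique \Cstar-norm, so $\|\cdot\|_\infty$ restricts on it to that norm and the restriction of $\wit{\alpha}$, being a unital $*$-homomorphism of a full matrix algebra, is isometric. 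Since every element of $\mathrm{Cl}_{\rm alg}(V_\C,q_\C)$ involves only finitely many vectors, $\mathrm{Cl}_{\rm alg}(V_\C,q_\C)=\bigcup_W\mathrm{Cl}_{\rm alg}(W_\C,q_\C)$ over the directed family of finite-dimensional $W\subseteq V$, so $\wit{\alpha}_{\rm alg}$ is isometric on $\mathrm{Cl}_{\rm alg}(V_\C,q_\C)$ and hence its continuous extension $\wit{\alpha}$ is isometric, in particular injective. (Equivalently, $\mathrm{Cl}(V_\C,q_\C)$ is a direct limit of full matrix algebras along unital embeddings, hence simple, so the proper closed two-sided ideal $\ker\wit{\alpha}$ vanishes.) The only genuinely external ingredient is the structure theorem $\mathrm{Cl}_{\rm alg}(\C^{2d},q)\cong M_{2^d}(\C)$ for non-degenerate $q$.
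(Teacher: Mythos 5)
Your proof is correct. The first half --- realizing the CAR relations inside $\mathrm{Cl}(V_\C,q_\C)$ via $\tfrac12(v\pm iJv)$, identifying $\a(V)$ and $\a(V)^*$ with the two eigenspaces of $J$ on $V_\C$, deducing Axiom (i) from $\a(v)-\a(v)^*=v$, and reading off Axioms (ii) and (iii) from the Clifford relation together with the identities $q_\C(v_1+iJv_1,v_2+iJv_2)=0$ and $q_\C(v_1-iJv_1,v_2+iJv_2)=2(v_1,v_2)$ --- is exactly the paper's argument (the paper writes $p_\mp(v)=\tfrac12(v\pm iJv)$ and performs the same two computations). You diverge in the second half. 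For the existence of $\wit{\alpha}$ the paper is laconic: it sets $\alpha(\a(v)):=\wih{\a}(v)$ and asserts that Axioms (ii) and (iii) ``allow us to extend'' this to a \Cstar-morphism; your route through the universal property of the algebraic Clifford algebra, followed by the observation that the resulting $*$-homomorphism is automatically $\|\cdot\|_\infty$-contractive, is the honest way to fill in that step (note that $\wit{\alpha}_{\rm alg}$ is already an element of $\mathrm{Rep}(V)$ as defined in the paper, so composing with a faithful representation of $\wih{A}$ is not even needed). For injectivity the paper simply cites the simplicity of $\mathrm{Cl}(V_\C,q_\C)$ from Plymen--Robinson and concludes that the nonzero morphism $\wit{\alpha}$ has vanishing kernel, whereas you prove that $\wit{\alpha}$ is isometric by exhausting $\mathrm{Cl}_{\rm alg}(V_\C,q_\C)$ by the subalgebras $\mathrm{Cl}_{\rm alg}(W_\C,q_\C)\cong M_{2^{\dim_\C W}}(\C)$ over finite-dimensional complex subspaces $W\subseteq V$ and using uniqueness of the \Cstar-norm on a full matrix algebra. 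This buys self-containedness --- it is in effect a proof of the simplicity statement the paper imports --- at the cost of invoking the structure theorem for complex Clifford algebras of even-dimensional non-degenerate spaces; both arguments are standard and both are sound.
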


\begin{proof}
Define $p_\mp : V\to\mathrm{Cl}(V_\C,q_\C)$ by $p_-(v):=\frac{1}{2}(v+iJv)$ and $p_+(v):=\frac{1}{2}(v-iJv)$.
Since $p_-(Jv)=-ip_-(v)$, the map $\a=p_-$ is anti-linear.
Because of $\a(v)-\a(v)^*=p_-(v)+p_+(v)=v$, the \Cstar-subalgebra of $\mathrm{Cl}(V_\C,q_\C)$ generated by the image of $\a$ contains $V$.
Hence $\a(V)$ generates $\mathrm{Cl}(V_\C,q_\C)$ as a \Cstar-algebra.
Axiom~\eqref{def-CAR:1} in Definition~\ref{def-CAR} is proved.

Let $v_1,v_2\in V$, then
\be 
\{\a(v_1),\a(v_2)\}
&=&
p_-(v_1)\cdot p_-(v_2)+p_-(v_2)\cdot p_-(v_1)\\
&=&
-2q_\C(p_-(v_1), p_-(v_2))\cdot 1\\
&=&0,
\ee
which is Axiom~\eqref{def-CAR:2} in Definition~\ref{def-CAR}.
Furthermore, 
\be 
\{\a(v_1)^*,\a(v_2)\}&=&-p_+(v_1)\cdot p_-(v_2)-p_-(v_2)\cdot p_+(v_1)\\
&=&2q_\C(p_+(v_1),p_-(v_2))\cdot 1\\
&=&\Re (v_1,v_2)\cdot 1+i\Re (v_1,Jv_2)\cdot 1\\
&=&(v_1,v_2)\cdot 1,
\ee 
which shows Axiom~\eqref{def-CAR:3} in Definition~\ref{def-CAR}.
Therefore $(\a,\mathrm{Cl}(V_\C,q_\C))$ is a CAR-representation of $(V,(\cdot\,,\cdot))$.

The second part of the theorem follows from $\mathrm{Cl}(V_\C,q_\C)$ being simple, i.e., from the non-existence of non-trivial closed two-sided $*$-invariant ideals,
see \cite[Thm.~1.2.2]{PR}.
% Let us first recall the argument given in the proof of \cite[Thm.~1.2.2]{PR} for the simplicity of $\mathrm{Cl}(V_\C,q_\C)$.
% Let $A$ be any \Cstar-algebra and $\Psi:\mathrm{Cl}(V_\C,q_\C)\rightarrow A$ be any non-zero $*$-homomorphism.
% We already know that $\Psi$ must be bounded with operator-norm at most $1$ \cite[Cor. 4.1.20]{BGP}.
% We have to show that $\Psi$ is injective or, equivalently, that $\Psi$ is isometric (see e.g.\ \cite[Prop. 4.1.22]{BGP}).
% Let $W$ be any $2m$-dimensional (complex) vector subspace of $V$ for some positive integer $m$. 
% Then $\Psi_{|_W}\neq 0$ since otherwise the Clifford relations would imply $\Psi({\bf 1})=0$ and hence $\Psi=0$, contradiction.
% Now the Clifford algebra of $(W,(\cdot\,,\cdot)_{|_W})$ is well-known to be isomorphic to the finite-dimensional matrix algebra $\C(2^m)$, which is simple.
% Therefore $\Psi_{|_W}$ is actually injective, thus an isometry.
% Since the Clifford algebras of even-dimensional subspaces of $V$ span $\mathrm{Cl}(V_\C,q_\C)$ as a \Cstar-algebra, we obtain that $\Psi$ itself is an isometry.
% 
Let $\wih{\a}:V\rightarrow \wih{A}$ be any other anti-linear map satisfying \eqref{def-CAR:2} and \eqref{def-CAR:3} in Definition~\ref{def-CAR}.
Since $\a$ and $\wih{\a}$ are injective (which is clear by Axiom~\eqref{def-CAR:3}) one may set $\alpha(\a(v)):=\wih{\a}(v)$ for all $v\in V$.
Axioms~\eqref{def-CAR:2} and \eqref{def-CAR:3} allow us to extend $\alpha$ to a \Cstar-morphism $\wit{\alpha}:C^*(\a(V))=\mathrm{Cl}(V_\C,q_\C)\to\wih{A}$.
The injectivity of $\wih{\a}$ implies the non-triviality of $\wit{\alpha}$ which, together with the simplicity of $\mathrm{Cl}(V_\C,q_\C)$, provides the injectivity of $\wit{\alpha}$.
Therefore we found an injective \Cstar-morphism $\wit{\alpha}:\mathrm{Cl}(V_\C,q_\C)\to\wih{A}$ with $\wit{\alpha}\circ \a=\wih{\a}$.
It is unique since it is determined by $\a$ and $\wih{\a}$ on a subset of generators.
This concludes the proof of Theorem~\ref{tCAR}.
\end{proof}

For an alternative description of the CAR-representation in terms of creation and annihilation operators on the fermionic Fock space we refer to \cite[Prop.~5.2.2]{BR}.

\begin{cor}\label{cCAR}
For every complex pre-Hilbert space $(V,(\cdot\,,\cdot))$ there exists a ${\rm CAR}$-representation of $(V,(\cdot\,,\cdot))$, unique up to \Cstar-isomorphism.
\end{cor}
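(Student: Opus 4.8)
The plan is to deduce both halves of the statement directly from Theorem~\ref{tCAR}, with essentially no further work. Existence is immediate: by the first part of Theorem~\ref{tCAR}, the pair $(\a,\mathrm{Cl}(V_\C,q_\C))$, with $\a(v)=\frac{1}{2}(v+iJv)$ and $J$ the complex structure of $V$, is a CAR-representation of $(V,(\cdot\,,\cdot))$.

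For uniqueness up to \Cstar-isomorphism, I would take two arbitrary CAR-representations $(\a_1,A_1)$ and $(\a_2,A_2)$ of $(V,(\cdot\,,\cdot))$ and compare each with the model $\mathrm{Cl}(V_\C,q_\C)$. Since $\a_j$ is anti-linear and satisfies Axioms~\eqref{def-CAR:2} and~\eqref{def-CAR:3} of Definition~\ref{def-CAR}, the universal property in Theorem~\ref{tCAR}, applied with $\wih A=A_j$ and $\wih\a=\a_j$, yields a unique injective \Cstar-morphism $\wit\alpha_j:\mathrm{Cl}(V_\C,q_\C)\to A_j$ with $\wit\alpha_j\circ\a=\a_j$. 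Its image is a \Cstar-subalgebra of $A_j$ containing $\a_j(V)$, hence equals $A_j$ by Axiom~\eqref{def-CAR:1}; thus $\wit\alpha_j$ is surjective and therefore a \Cstar-isomorphism. Then $\wit\alpha_2\circ\wit\alpha_1^{-1}:A_1\to A_2$ is a \Cstar-isomorphism carrying $\a_1$ to $\a_2$, which is precisely the asserted uniqueness.

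I do not expect any real obstacle here: Theorem~\ref{tCAR} does all the analytic work, and the corollary is a purely formal consequence of its universal property. The only point worth spelling out is the surjectivity of $\wit\alpha_j$, which is immediate from the generating condition~\eqref{def-CAR:1} built into the definition of a CAR-representation. One could alternatively note that $A_1$, being \Cstar-isomorphic to $\mathrm{Cl}(V_\C,q_\C)$, itself inherits the universal property and then obtain the isomorphism $A_1\to A_2$ in one step; but routing the argument through the model algebra $\mathrm{Cl}(V_\C,q_\C)$ keeps the roles of the two representations symmetric and makes the statement transparent.
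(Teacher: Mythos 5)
Your proposal is correct and follows essentially the same route as the paper: existence from the first part of Theorem~\ref{tCAR}, and uniqueness by applying the universal property to obtain an injective \Cstar-morphism out of $\mathrm{Cl}(V_\C,q_\C)$ whose surjectivity follows from Axiom~(i) of Definition~\ref{def-CAR}. The paper compares a single arbitrary representation with the model rather than two, but this is only a cosmetic difference.
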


\begin{proof}
The existence has already been proved in Theorem~\ref{tCAR}.
Let $(\wih{\a},\wih{A})$ be any CAR-representation of $(V,(\cdot\,,\cdot))$.
Theorem~\ref{tCAR} states the existence of a unique injective \Cstar-morphism $\wit{\alpha}: \mathrm{Cl}(V_\C,q_\C)\to\wih{A}$ such that $\wit{\alpha}\circ \a=\wih{\a}$.
Now $\wit{\alpha}$ has to be surjective since Axiom~\eqref{def-CAR:1} holds for $(\wih{\a},\wih{A})$.
\end{proof}

From now on, given a complex pre-Hilbert space $(V,(\cdot\,,\cdot))$, we denote  the \Cstar-algebra $\mathrm{Cl}(V_\C,q_\C)$ associated with the $\mathrm{CAR}$-representation $(\a,\mathrm{Cl}(V_\C,q_\C))$ of $(V,(\cdot\,,\cdot))$ by $\mathrm{CAR}(V,(\cdot\,,\cdot))$.
We list the properties of CAR-representations which are relevant for quantization, see also \cite[Vol.~II, Thm.~5.2.5, p.~15]{BR}.

\begin{prop}\label{pCAR}
Let $(V,(\cdot\,,\cdot))$ be a complex pre-Hilbert space and  $(\a,\mathrm{CAR}(V,(\cdot\,,\cdot)))$ its {\rm CAR}-representation.
\begin{enumerate}[(i)]
\item\label{pCAR:1}
For every $v\in V$ one has $\|\a(v)\|=|v|=(v,v)^{\frac{1}{2}}$, where $\|\cdot\|$ denotes the \Cstar-norm on $\mathrm{CAR}(V,(\cdot\,,\cdot))$.
\item \label{pCAR:2}
The \Cstar-algebra $\mathrm{CAR}(V,(\cdot\,,\cdot))$ is simple, i.e., it has no closed two-sided $*$-ideals other than $\{0\}$ and the algebra itself.
\item \label{pCAR:4}
The algebra $\mathrm{CAR}(V,(\cdot\,,\cdot))$ is $\Z_2$-graded, 
$$
\mathrm{CAR}(V,(\cdot\,,\cdot))=\mathrm{CAR}^\mathrm{even}(V,(\cdot\,,\cdot))\oplus\mathrm{CAR}^\mathrm{odd}(V,(\cdot\,,\cdot)),
$$ 
and $\a(V) \subset \mathrm{CAR}^\mathrm{odd}(V,(\cdot\,,\cdot))$.
\item \label{pCAR:3}
Let $f:V\to V'$ be an isometric linear embedding, where $(V',(\cdot\,,\cdot)')$ is another complex pre-Hilbert space.
Then there exists a unique injective \Cstar-morphism $\mathrm{CAR}(f):\mathrm{CAR}(V,(\cdot\,,\cdot))\to\mathrm{CAR}(V',(\cdot\,,\cdot)')$ such that 
$$
\xymatrix{
V\ar[rr]^{f}\ar[d]^{\a}
&&V' \ar[d]^{\a'}\\
\mathrm{CAR}(V,(\cdot\,,\cdot))\ar[rr]^{\mathrm{CAR}(f)}
&&\mathrm{CAR}(V',(\cdot\,,\cdot)')
}
$$
commutes.
\end{enumerate}
\end{prop}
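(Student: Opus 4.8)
The plan is to deduce all four assertions from Theorem~\ref{tCAR}, which identifies $\mathrm{CAR}(V,(\cdot\,,\cdot))$ with the $\mathrm{C}^*$-Clifford algebra $\mathrm{Cl}(V_\C,q_\C)$, establishes its universal property, and records that it is simple. For the norm identity \eqref{pCAR:1} I would argue directly from the CAR relations, without appealing to the Clifford picture. Put $a:=\a(v)^*\a(v)\ge 0$. Axiom~\eqref{def-CAR:2} with $v_1=v_2=v$ gives $\a(v)^2=0$, and Axiom~\eqref{def-CAR:3} gives $\a(v)\a(v)^*=(v,v)\cdot 1-a$; hence $a^2=\a(v)^*\big((v,v)\cdot 1-a\big)\a(v)=(v,v)\,a$ since $(\a(v)^*)^2(\a(v))^2=0$. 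Therefore the spectrum of the self-adjoint element $a$ lies in $\{0,(v,v)\}$, whence $\|\a(v)\|^2=\|a\|\le(v,v)$; and for $v\ne 0$ one has $a\ne 0$ (otherwise $\a(v)=0$, which forces $(v,v)\cdot 1=0$ by Axiom~\eqref{def-CAR:3}), so $\|a\|=(v,v)$. The case $v=0$ is trivial.

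Statement~\eqref{pCAR:2} is nothing but the simplicity of $\mathrm{Cl}(V_\C,q_\C)$ already invoked in the proof of Theorem~\ref{tCAR} (see \cite[Thm.~1.2.2]{PR}). For the grading~\eqref{pCAR:4} I would use the parity automorphism: on the algebraic Clifford algebra $\mathrm{Cl}_{\rm alg}(V_\C,q_\C)$ let $\epsilon$ be the unique algebra automorphism extending $-\id$ on $V_\C$. Checking on generators that $\epsilon$ commutes with the $*$-operator shows $\epsilon$ is a $*$-automorphism, and clearly $\epsilon^2=\id$. Since $\pi\mapsto\pi\circ\epsilon$ is a bijection of the set of $*$-representations used to define $\|\cdot\|_\infty$, the map $\epsilon$ is $\|\cdot\|_\infty$-isometric and therefore extends to a $*$-automorphism of $\mathrm{Cl}(V_\C,q_\C)=\mathrm{CAR}(V,(\cdot\,,\cdot))$ of order two. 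Defining $\mathrm{CAR}^{\mathrm{even}}$ and $\mathrm{CAR}^{\mathrm{odd}}$ to be the $(+1)$- and $(-1)$-eigenspaces of $\epsilon$, the identity $a=\frac{1}{2}(a+\epsilon(a))+\frac{1}{2}(a-\epsilon(a))$ yields the direct sum decomposition; and $\a(v)=\frac{1}{2}(v+iJv)$ lies in $V_\C$, hence in the $(-1)$-eigenspace $\mathrm{CAR}^{\mathrm{odd}}$.

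For \eqref{pCAR:3}, given an isometric linear embedding $f:V\to V'$, I would set $\wih{\a}:=\a'\circ f:V\to\mathrm{CAR}(V',(\cdot\,,\cdot)')$. It is anti-linear, and since $f$ is isometric it satisfies Axioms~\eqref{def-CAR:2} and~\eqref{def-CAR:3} of Definition~\ref{def-CAR}. The universal property from Theorem~\ref{tCAR} then produces a unique injective $\mathrm{C}^*$-morphism $\mathrm{CAR}(f):\mathrm{CAR}(V,(\cdot\,,\cdot))\to\mathrm{CAR}(V',(\cdot\,,\cdot)')$ with $\mathrm{CAR}(f)\circ\a=\wih{\a}=\a'\circ f$, which is exactly the asserted commuting square; uniqueness is automatic because $\a(V)$ generates $\mathrm{CAR}(V,(\cdot\,,\cdot))$ as a $\mathrm{C}^*$-algebra.

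The only point not already contained in Theorem~\ref{tCAR} is the grading in~\eqref{pCAR:4}, so I expect that to be where the only real care is needed: one must know that the parity automorphism, initially defined only on the dense $*$-subalgebra $\mathrm{Cl}_{\rm alg}(V_\C,q_\C)$, is bounded for $\|\cdot\|_\infty$ so that it extends to the completion. This, however, is immediate from the very definition of $\|\cdot\|_\infty$ as a supremum over $*$-representations, so no serious obstacle arises; everything else is a direct unwinding of the earlier results.
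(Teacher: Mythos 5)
Your proposal is correct, and its overall architecture (everything reduced to Theorem~\ref{tCAR} and the Clifford-algebra picture) matches the paper's; parts \eqref{pCAR:2} and \eqref{pCAR:3} are argued exactly as in the paper. The two places where you diverge are \eqref{pCAR:1} and \eqref{pCAR:4}. For the norm identity, the paper computes $\|\a(v)\|^4=\|(\a(v)\a(v)^*)^2\|$ via the \Cstar-identity and uses $(\a(v)\a(v)^*)^2=|v|^2\,\a(v)\a(v)^*$ together with injectivity of $\a$ to divide out $\|\a(v)\|^2$; you instead work with $a=\a(v)^*\a(v)$, derive $a^2=(v,v)\,a$ from the same two CAR axioms, and read off $\|a\|=(v,v)$ from the spectrum of the self-adjoint element $a$. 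Both rest on the same algebraic input ($\a(v)^2=0$ plus the anticommutator relation) and on the observation that $\a(v)=0$ forces $v=0$; your spectral phrasing is a clean equivalent. For the grading, the paper simply cites the standard $\Z_2$-grading of the Clifford algebra by parity of products of generators (\cite[p.~27]{PR}), whereas you construct it from the parity automorphism $\epsilon$ extending $-\id$ on $V_\C$, verify that $\epsilon$ is a $*$-automorphism isometric for $\|\cdot\|_\infty$, extend it to the completion, and take $\pm1$-eigenspaces. This buys you something the paper leaves implicit: the cited grading lives a priori on the algebraic Clifford algebra, and your argument is precisely what is needed to see that the decomposition survives passage to the \Cstar-completion (the bounded projections $\tfrac12(1\pm\epsilon)$ give the topological direct sum). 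So your treatment of \eqref{pCAR:4} is slightly more complete than the paper's, at the cost of a little extra verification.
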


\begin{proof}
We show assertion~\eqref{pCAR:1} .
On the one hand, the \Cstar-property of the norm $\|\cdot\|$ implies
\be  
\|\a(v)\|^4
&=&\|\a(v)\a(v)^*\|^2\\
&=&\|(\a(v)\a(v)^*)^2\|.
\ee
On the other hand, 
\be  
(\a(v)\a(v)^*)^2&=&\a(v)\{\a(v)^*,\a(v)\}\a(v)^*\\
&=&|v|^2\a(v)\a(v)^*,
\ee
where we used $\a(v)^2=0$ which follows from the second axiom.
We deduce that
\be 
\|\a(v)\|^4&=&|v|^2\cdot\|\a(v)\a(v)^*\|\\
&=&|v|^2\cdot\|\a(v)\|^2.
\ee
Since $\a$ is injective, we obtain the result.

Assertion~\eqref{pCAR:2} follows from $\mathrm{Cl}(V_\C,q_\C)$ being simple, see \cite[Thm.~1.2.2]{PR}.
Alternatively, it can be deduced from the universal property formulated in Theorem~\ref{tCAR}.

To see \eqref{pCAR:4} we recall that the Clifford algebra $\mathrm{Cl}(V_\C,q_\C)$ has a $\Z_2$-grading where the even part is generated by products of an even number of vectors in $V_\C$ and, similarly, the odd part is the vector space span of products of an odd number of vectors in $V_\C$, see \cite[p.~27]{PR}.
This is compatible with the Clifford relations \eqref{eq:CliffRel}.
Clearly, $\a(V) \subset \mathrm{CAR}^\mathrm{odd}(V,(\cdot\,,\cdot))$.

It remains to show \eqref{pCAR:3}.
It is straightforward to check that $\a'\circ f$ satisfies Axioms~\eqref{def-CAR:2} and \eqref{def-CAR:3} in Definition~\ref{def-CAR}.
The result follows from Theorem~\ref{tCAR}.
\end{proof}

One easily sees that $\mathrm{CAR}(\id)=\id$ and that $\mathrm{CAR}(f'\circ f)=\mathrm{CAR}(f')\circ\mathrm{CAR}(f)$ for all isometric linear embeddings $V\xrightarrow{f}V'\xrightarrow{f'}V''$.
Therefore we have constructed a covariant functor
\[\mathrm{CAR}:\hilb\longrightarrow\CAlg,\]
where $\hilb$ denotes the category whose objects are the complex pre-Hilbert spaces and whose morphisms are the isometric linear embeddings.

For {\em real} pre-Hilbert spaces there is the concept of {\em self-dual} CAR-representations. 

\begin{definition}\label{def-sdCAR}
A \emph{self-dual $\mathrm{CAR}$-representation} of a real pre-Hilbert space $(V,(\cdot\,,\cdot))$ is a pair $(\b,A)$, where $A$ is a unital \Cstar-algebra and $\b:V\to A$ is an $\R$-linear map satisfying:
\begin{enumerate}[(i)]
\item\label{def-sdCAR:1} $A=C^*(\b(V))$,
\item\label{def-sdCAR:2} $\b(v)=\b(v)^*$ and
\item\label{def-sdCAR:3} $\{\b(v_1),\b(v_2)\}=(v_1,v_2)\cdot 1$,
\end{enumerate}
for all $v,v_1,v_2\in V$.
\end{definition}
Given a self-dual CAR-representation, one can extend $\b$ to a $\C$-linear map from the complexification $V_\C$ to $A$.
This extension $\b:V_\C\to A$ then satisfies $\b(\bar v)=\b(v)^*$ and $\{\b(v_1),\b(v_2)\}=(v_1,\bar v_2)\cdot 1$ for all $v,v_1,v_2\in V_\C$.
These are the axioms of a self-dual CAR-representation as in \cite[p.~386]{Ar}.

\begin{thm}\label{tsdCAR}
For every real pre-Hilbert space $(V,(\cdot\,,\cdot))$, the \Cstar-Clifford algebra $\mathrm{Cl}(V_\C,q_\C)$ provides a self-dual {\rm CAR}-representation of $(V,(\cdot\,,\cdot))$ via $\b(v)=\frac{i}{\sqrt 2}v$.

Moreover, self-dual {\rm CAR}-representations have the following universal property:
Let $\wih{A}$ be any unital \Cstar-algebra and $\wih{\b}:V\to\wih{A}$ be any $\R$-linear map satisfying Axioms~\eqref{def-sdCAR:2} and \eqref{def-sdCAR:3} of Definition~\ref{def-sdCAR}.
Then there exists a unique \Cstar-morphism $\wit{\beta}: \mathrm{Cl}(V_\C,q_\C)\to\wih{A}$ such that 
$$
\xymatrix{
V \ar[r]^{\wih{\b}} \ar[d]_\b & \wih{A}\\
\mathrm{Cl}(V_\C,q_\C) \ar@{.>}[ru]^{\wit{\beta}} &}
$$
commutes.
Furthermore, $\wit{\beta}$ is injective.
\end{thm}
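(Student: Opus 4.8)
The plan is to follow the proof of Theorem~\ref{tCAR} almost verbatim, the only structural simplification being that for a \emph{real} pre-Hilbert space the inner product is already real-valued, so $q_\C$ is simply its complex-bilinear extension to $V_\C=V\otimes_\R\C$ and restricts to $(\cdot\,,\cdot)$ on $V$.

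First I would verify that $(\b,\mathrm{Cl}(V_\C,q_\C))$ with $\b(v)=\frac{i}{\sqrt2}v$ is a self-dual CAR-representation. The map $\b$ is $\R$-linear by inspection. The $\Cstar$-subalgebra generated by $\b(V)=\frac{i}{\sqrt2}V$ contains $V$ (multiply by the scalar $-i\sqrt2$) and hence, by taking complex linear combinations, all of $V_\C$; since $V_\C$ generates $\mathrm{Cl}_{\rm alg}(V_\C,q_\C)$, Axiom~\eqref{def-sdCAR:1} follows. For Axiom~\eqref{def-sdCAR:2} I would use that the $*$-operator is anti-linear and acts on $V\subset V_\C$ by $v\mapsto -v$, so $\b(v)^*={*}(\tfrac{i}{\sqrt2}v)=\overline{(\tfrac{i}{\sqrt2})}\,{*}(v)=(-\tfrac{i}{\sqrt2})(-v)=\b(v)$. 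For Axiom~\eqref{def-sdCAR:3} the Clifford relation~\eqref{eq:CliffRel} gives $\{\b(v_1),\b(v_2)\}=\left(\tfrac{i}{\sqrt2}\right)^2(v_1\cdot v_2+v_2\cdot v_1)=-\tfrac12\bigl(-2q_\C(v_1,v_2)\bigr)\cdot 1=(v_1,v_2)\cdot 1$.

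For the universal property, given an $\R$-linear $\wih{\b}:V\to\wih{A}$ satisfying \eqref{def-sdCAR:2} and \eqref{def-sdCAR:3}, I would extend $\wih{\b}$ complex-linearly to the map $\ell:V_\C\to\wih{A}$ determined by $\ell(v):=-i\sqrt2\,\wih{\b}(v)$ for $v\in V$; this normalization is the one forced by the requirement $\wit{\beta}\circ\b=\wih{\b}$. From \eqref{def-sdCAR:3} one gets $\ell(v)\ell(w)+\ell(w)\ell(v)=(-i\sqrt2)^2(v,w)\cdot 1=-2q_\C(v,w)\cdot 1$ for $v,w\in V$, and since both sides are complex-bilinear in $(v,w)$ this identity persists on all of $V_\C$. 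The universal property of the algebraic Clifford algebra then yields a unique unital algebra homomorphism $\wit{\beta}_{\rm alg}:\mathrm{Cl}_{\rm alg}(V_\C,q_\C)\to\wih{A}$ extending $\ell$. That $\wit{\beta}_{\rm alg}$ is a $*$-homomorphism I would check by comparing the two anti-linear, anti-multiplicative maps $a\mapsto\wit{\beta}_{\rm alg}(a^*)$ and $a\mapsto\wit{\beta}_{\rm alg}(a)^*$ on the generating subspace $V_\C$, where they agree using \eqref{def-sdCAR:2}. Finally, since $\|\cdot\|_\infty$ is by definition the supremum of $\|\pi(\cdot)\|$ over all $*$-representations $\pi$, the $*$-homomorphism $\wit{\beta}_{\rm alg}$ is norm-decreasing and extends by continuity to a $\Cstar$-morphism $\wit{\beta}:\mathrm{Cl}(V_\C,q_\C)\to\wih{A}$; commutativity of the triangle and uniqueness of $\wit{\beta}$ are then immediate (uniqueness because $\b(V)$ generates $\mathrm{Cl}(V_\C,q_\C)$), and injectivity follows from $\wit{\beta}$ being unital and hence nonzero together with the simplicity of $\mathrm{Cl}(V_\C,q_\C)$ already invoked in Theorem~\ref{tCAR}.

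There is no real obstacle here: the argument is parallel to that of Theorem~\ref{tCAR}. The only points demanding a little attention are the sign and $\sqrt2$ bookkeeping forced by $\b(v)=\tfrac{i}{\sqrt2}v$, and the passage from the $*$-homomorphism on the purely algebraic Clifford algebra to its $\Cstar$-completion --- which, however, is automatic from the very definition of the norm $\|\cdot\|_\infty$.
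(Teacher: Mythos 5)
Your proposal is correct and follows exactly the route the paper intends: the paper gives no separate proof of Theorem~\ref{tsdCAR} but states that the proofs are analogous to those for complex pre-Hilbert spaces, i.e.\ to Theorem~\ref{tCAR}, which is precisely what you carry out (with the correct sign and $\sqrt2$ normalizations, and with the extension step through the universal property of $\mathrm{Cl}_{\rm alg}(V_\C,q_\C)$ spelled out in more detail than the paper's terse ``Axioms~(ii) and~(iii) allow us to extend'').
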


\begin{cor}\label{csdCAR}
For every real pre-Hilbert space $(V,(\cdot\,,\cdot))$ there exists a ${\rm CAR}$-representation of $(V,(\cdot\,,\cdot))$, unique up to \Cstar-isomorphism.
\end{cor}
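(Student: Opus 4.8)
The plan is to follow verbatim the argument used for Corollary~\ref{cCAR}, now in the self-dual setting. Existence requires no further work: Theorem~\ref{tsdCAR} already exhibits a concrete self-dual CAR-representation of $(V,(\cdot\,,\cdot))$, namely the pair $(\b,\mathrm{Cl}(V_\C,q_\C))$ with $\b(v)=\frac{i}{\sqrt 2}v$. So the entire content of the corollary is the uniqueness-up-to-$\Cstar$-isomorphism clause, and this will be deduced from the universal property in Theorem~\ref{tsdCAR}.

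For uniqueness, I would start from an arbitrary self-dual CAR-representation $(\wih\b,\wih A)$ of $(V,(\cdot\,,\cdot))$. In particular $\wih\b:V\to\wih A$ is $\R$-linear and satisfies Axioms~\eqref{def-sdCAR:2} and \eqref{def-sdCAR:3} of Definition~\ref{def-sdCAR}, so Theorem~\ref{tsdCAR} applies and furnishes a unique \Cstar-morphism $\wit\beta:\mathrm{Cl}(V_\C,q_\C)\to\wih A$ with $\wit\beta\circ\b=\wih\b$, and moreover $\wit\beta$ is injective. It then remains only to observe that $\wit\beta$ is surjective: its image is a $\Cstar$-subalgebra of $\wih A$ containing $\wit\beta(\b(V))=\wih\b(V)$, hence it contains $C^*(\wih\b(V))$, which equals $\wih A$ by Axiom~\eqref{def-sdCAR:1} for $(\wih\b,\wih A)$. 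Therefore $\wit\beta$ is a \Cstar-isomorphism, and every self-dual CAR-representation of $(V,(\cdot\,,\cdot))$ is \Cstar-isomorphic to $(\b,\mathrm{Cl}(V_\C,q_\C))$ compatibly with the structure maps.

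There is essentially no obstacle here: all the real analytic and algebraic input (well-definedness of the \Cstar-norm $\|\cdot\|_\infty$, simplicity of $\mathrm{Cl}(V_\C,q_\C)$, and the universal property) has already been absorbed into Theorem~\ref{tsdCAR}, exactly as the analogous corollary for ordinary CAR-representations rests on Theorem~\ref{tCAR}. The only point worth stating carefully in the write-up is that surjectivity of $\wit\beta$ uses precisely the generating axiom~\eqref{def-sdCAR:1}, and that injectivity is not re-derived but quoted from Theorem~\ref{tsdCAR} (where it comes from simplicity of the \Cstar-Clifford algebra). The proof should therefore be only a few lines long.
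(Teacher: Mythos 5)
Your argument is correct and is exactly the paper's approach: the paper proves Corollary~\ref{cCAR} by combining existence from the universal-property theorem with surjectivity of the induced injective \Cstar-morphism via the generating axiom, and states that Corollary~\ref{csdCAR} follows by the same argument applied to Theorem~\ref{tsdCAR}, which is precisely what you do.
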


From now on, given a real pre-Hilbert space $(V,(\cdot\,,\cdot))$, we denote  the \Cstar-algebra $\mathrm{Cl}(V_\C,q_\C)$ associated with the self-dual $\mathrm{CAR}$-representation $(\b,\mathrm{Cl}(V_\C,q_\C))$ of $(V,(\cdot\,,\cdot))$ by $\mathrm{CAR}_\mathrm{sd}(V,(\cdot\,,\cdot))$.

\begin{prop}\label{psdCAR}
Let $(V,(\cdot\,,\cdot))$ be a real pre-Hilbert space and  $(\b,\mathrm{CAR}_\mathrm{sd}(V,(\cdot\,,\cdot)))$ its self-dual {\rm CAR}-representation.
\begin{enumerate}[(i)]
\item\label{psdCAR:1}
For every $v\in V$ one has $\|\b(v)\|=\frac{1}{\sqrt 2}|v|$, where $\|\cdot\|$ denotes the \Cstar-norm on $\mathrm{CAR}_\mathrm{sd}(V,(\cdot\,,\cdot))$.
\item \label{psdCAR:2}
The \Cstar-algebra $\mathrm{CAR}_\mathrm{sd}(V,(\cdot\,,\cdot))$ is simple.
\item \label{psdCAR:4}
The algebra $\mathrm{CAR}_\mathrm{sd}(V,(\cdot\,,\cdot))$ is $\Z_2$-graded, 
$$
\mathrm{CAR}_\mathrm{sd}(V,(\cdot\,,\cdot))=\mathrm{CAR}^\mathrm{even}_\mathrm{sd}(V,(\cdot\,,\cdot))\oplus\mathrm{CAR}^\mathrm{odd}_\mathrm{sd}(V,(\cdot\,,\cdot)),
$$ 
and $\b(V) \subset \mathrm{CAR}^\mathrm{odd}_\mathrm{sd}(V,(\cdot\,,\cdot))$.
\item \label{psdCAR:3}
Let $f:V\to V'$ be an isometric linear embedding, where $(V',(\cdot\,,\cdot)')$ is another real pre-Hilbert space.
Then there exists a unique injective \Cstar-morphism $\mathrm{CAR}_\mathrm{sd}(f):\mathrm{CAR}_\mathrm{sd}(V,(\cdot\,,\cdot))\to\mathrm{CAR}_\mathrm{sd}(V',(\cdot\,,\cdot)')$ such that 
$$
\xymatrix{
V\ar[rr]^{f}\ar[d]^{\b}
&&V' \ar[d]^{\b'}\\
\mathrm{CAR}_\mathrm{sd}(V,(\cdot\,,\cdot))\ar[rr]^{\mathrm{CAR}_\mathrm{sd}(f)}
&&\mathrm{CAR}_\mathrm{sd}(V',(\cdot\,,\cdot)')
}
$$
commutes.
\end{enumerate}
\end{prop}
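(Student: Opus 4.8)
The plan is to follow the proof of Proposition~\ref{pCAR} step by step, using that $\mathrm{CAR}_\mathrm{sd}(V,(\cdot\,,\cdot))$ is by definition the \Cstar-Clifford algebra $\mathrm{Cl}(V_\C,q_\C)$ equipped with the explicit map $\b(v)=\frac{i}{\sqrt 2}v$ and the universal property from Theorem~\ref{tsdCAR}.

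For~\eqref{psdCAR:1}, I would start from the fact that $\b(v)$ is self-adjoint (Axiom~\eqref{def-sdCAR:2}), so the \Cstar-identity gives $\|\b(v)\|^2=\|\b(v)^*\b(v)\|=\|\b(v)^2\|$. Axiom~\eqref{def-sdCAR:3} with $v_1=v_2=v$ yields $2\b(v)^2=\{\b(v),\b(v)\}=|v|^2\cdot 1$, hence $\b(v)^2=\tfrac12|v|^2\cdot 1$, and since the unit of a non-trivial unital \Cstar-algebra has norm one we get $\|\b(v)^2\|=\tfrac12|v|^2$. Combining the two identities gives $\|\b(v)\|=\tfrac1{\sqrt 2}|v|$; in particular $\b$ is injective. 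Assertion~\eqref{psdCAR:2} is then inherited from the simplicity of $\mathrm{Cl}(V_\C,q_\C)$, which holds by \cite[Thm.~1.2.2]{PR}, exactly as in the proof of Proposition~\ref{pCAR}~\eqref{pCAR:2}; alternatively it can be deduced directly from the universal property in Theorem~\ref{tsdCAR}.

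For~\eqref{psdCAR:4}, I would invoke the standard $\Z_2$-grading of the Clifford algebra: the even part of $\mathrm{Cl}(V_\C,q_\C)$ is spanned by products of an even number of elements of $V_\C$ and the odd part by products of an odd number, a splitting which is well defined because the Clifford relations~\eqref{eq:CliffRel} are homogeneous of even degree and which is preserved by the $*$-operator, being anti-multiplicative and mapping $V_\C$ into itself; it passes to the \Cstar-completion since the two homogeneous subspaces are closed. Since $\b(v)=\tfrac{i}{\sqrt 2}v\in V_\C$ is a single generator, this gives $\b(V)\subset\mathrm{CAR}^\mathrm{odd}_\mathrm{sd}(V,(\cdot\,,\cdot))$. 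Finally, for~\eqref{psdCAR:3}, given an isometric linear embedding $f:V\to V'$ I would check that $\b'\circ f:V\to\mathrm{CAR}_\mathrm{sd}(V',(\cdot\,,\cdot)')$ is $\R$-linear and satisfies Axioms~\eqref{def-sdCAR:2} and~\eqref{def-sdCAR:3}: self-adjointness is immediate, and $\{\b'(f(v_1)),\b'(f(v_2))\}=(f(v_1),f(v_2))'\cdot 1=(v_1,v_2)\cdot 1$ because $f$ is isometric. The universal property of Theorem~\ref{tsdCAR} then supplies a unique injective \Cstar-morphism $\mathrm{CAR}_\mathrm{sd}(f):\mathrm{Cl}(V_\C,q_\C)\to\mathrm{CAR}_\mathrm{sd}(V',(\cdot\,,\cdot)')$ with $\mathrm{CAR}_\mathrm{sd}(f)\circ\b=\b'\circ f$, which is the required morphism; its uniqueness as a \Cstar-morphism follows since $\b(V)$ generates the algebra.

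I do not expect a genuine obstacle, as each step runs exactly parallel to the complex case treated in Proposition~\ref{pCAR}. The only points needing a little care are verifying that $\b(v)=\frac{i}{\sqrt 2}v$ is really self-adjoint for $v\in V$ real, i.e.\ that the $*$-operator on $\mathrm{Cl}(V_\C,q_\C)$ satisfies $*(v)=-v$ on $V$ — which is built into the definition of $*$ and hence into Theorem~\ref{tsdCAR} — and that the $\Z_2$-grading survives the \Cstar-completion, which it does because the homogeneous subspaces are norm-closed.
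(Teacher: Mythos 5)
Your proposal is correct and matches the paper's intent exactly: the paper itself only remarks that ``the proofs are similar to the ones for CAR-representations of complex pre-Hilbert spaces,'' and you carry out precisely that adaptation, with the norm computation in~(i) legitimately simplified by the self-adjointness $\b(v)^*=\b(v)$ and the identity $2\b(v)^2=|v|^2\cdot 1$. All four steps check out.
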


The proofs are similar to the ones for CAR-representations of complex pre-Hilbert spaces.
We have constructed a functor
\[\mathrm{CAR}_\mathrm{sd}:\rhilb\longrightarrow\CAlg,\]
where $\rhilb$ denotes the category whose objects are the real pre-Hilbert spaces and whose morphisms are the isometric linear embeddings.

\begin{rem}\label{rem:RversusC}
Let $(V,(\cdot\,,\cdot))$ be a complex pre-Hilbert space.
If we consider $V$ as a real vector space, then we have the real pre-Hilbert space $(V,\Re(\cdot\,,\cdot))$.
For the corresponding CAR-representations we have
$$
\mathrm{CAR}(V,(\cdot\,,\cdot)) 
= \mathrm{CAR}_\mathrm{sd}(V,\Re(\cdot\,,\cdot))
= \mathrm{Cl}(V_\C,q_\C)
$$
and
$$
\b(v) = \frac{i}{\sqrt 2}(\a(v)-\a(v)^*) .
$$
\end{rem}

\subsection{CCR algebras}\label{s:appendixCCR}

In this section, we recall the construction of the representation of any (real) symplectic vector space by the so-called canonical commutation relations (CCR). 
Proofs can be found in \cite[Sec.~4.2]{BGP}.

\begin{definition}\label{d:CCR}
A \emph{{\rm CCR}-representation} of a symplectic vector space $(V,\omega)$ is a pair $(w,A)$, where $A$ is a unital \Cstar-algebra and $w$ is a map $V\to A$ satisfying:
\begin{enumerate}[(i)]
\item\label{CCR1} 
$A=C^*(w(V))$,
\item\label{CCR2} 
$w(0)=1$,
\item\label{CCR3}  
$w(-\phi)=w(\phi)^*$,
\item\label{CCR4}  
$w(\phi+\psi)=e^{i\omega(\phi,\psi)/2}w(\phi)\cdot w(\psi)$,
\end{enumerate}
for all $\phi,\psi\in V$.
\end{definition}

The map $w$ is in general neither linear, nor any kind of group homomorphism, nor continuous \cite[Prop. 4.2.3]{BGP}.

\begin{ex}\label{ex:CCR}
Given any symplectic vector space $(V,\omega)$, consider the Hilbert space $H:=L^2(V,\mathbb{C})$, where $V$ is endowed with the counting measure.
Define the map $w$ from $V$ into the space $\mathcal{L}(H)$ of bounded endomorphisms of $H$ by 
\[(w(\phi)F)(\psi):=e^{i\omega(\phi,\psi)/2}F(\phi+\psi),\]
for all $\phi,\psi\in V$ and $F\in H$. 
It is well-known that $\mathcal{L}(H)$ is a \Cstar-algebra with the operator norm as \Cstar-norm, and that the map $w$ satisfies the Axioms~\eqref{CCR2}-\eqref{CCR4} from Definition~\ref{d:CCR}, see e.g.\ \cite[Ex.~4.2.2]{BGP}. 
Hence setting $A:=C^*(w(V))$, the pair $(w,A)$ provides a {\rm CCR}-representation of $(V,\omega)$.
\end{ex}

This is essentially the only example of {\rm CCR}-representation:

\begin{thm}\label{t:existuniqueCCR}
Let $(V,\omega)$ be a symplectic vector space and $(\hat{w},\wih{A})$ be a pair satisfying the Axioms~\eqref{CCR2}-\eqref{CCR4} of Definition~\ref{d:CCR}.
Then there exists a unique \Cstar-morphism $\Phi:A\rightarrow\wih{A}$ such that $\Phi\circ w=\hat{w}$, where $(w,A)$ is the {\rm CCR}-representation from Example~\ref{ex:CCR}.
Moreover, $\Phi$ is injective.

In particular, $(V,\omega)$ has a {\rm CCR}-representation, unique up to \Cstar-isomorphism.
\end{thm}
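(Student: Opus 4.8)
The plan is to push everything down to the \emph{algebraic} Weyl $*$-algebra sitting inside the model algebra. Let $(w,A)$ be the $\mathrm{CCR}$-representation of Example~\ref{ex:CCR} and set $\Delta:=\mathrm{span}_\C\{w(\phi):\phi\in V\}\subseteq A$. By Axioms~\eqref{CCR3}–\eqref{CCR4}, $\Delta$ is a unital $*$-subalgebra of $A$, and it is dense by Axiom~\eqref{CCR1}. First I would record that $\{w(\phi)\}_{\phi\in V}$ is linearly independent in $A$: in the model Hilbert space $H=L^2(V,\C)$ one computes $w(\phi)\delta_0=\delta_{-\phi}$, where $\delta_x\in H$ is the indicator of $\{x\}$, and distinct $\phi$ give distinct, hence linearly independent, vectors. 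Thus $\{w(\phi)\}$ is a vector-space basis of $\Delta$, so the prescription $w(\phi)\mapsto\hat w(\phi)$ extends to a well-defined linear map $\Phi_0:\Delta\to\widehat A$; Axioms~\eqref{CCR3}–\eqref{CCR4} applied to both $w$ and $\hat w$ then show that $\Phi_0$ is a unital $*$-homomorphism.

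Next I would prove that $\Phi_0$ is injective. Its kernel is a two-sided $*$-ideal of $\Delta$, and it is proper because $\Phi_0(w(0))=\hat w(0)=1\neq 0$. So it suffices to show that $\Delta$ is $*$-simple, i.e.\ has no proper nonzero two-sided ideal. For this one uses the conjugation identity $w(\chi)\,w(\psi)\,w(\chi)^{-1}=e^{i\omega(\chi,\psi)}w(\psi)$, a direct consequence of \eqref{CCR4}: given $0\neq a=\sum_j c_j w(\phi_j)$ in a nonzero ideal $J$, after left multiplication by a suitable $w(\chi)$ one may assume $a$ has a nonzero scalar component; forming $a-w(\chi)aw(\chi)^{-1}\in J$ for varying $\chi$ and using the nondegeneracy of $\omega$ (so that $e^{i\omega(\chi,\psi)}\neq 1$ can always be arranged for $\psi\neq 0$) lets one reduce, by induction on the number of nonzero coefficients, to $1\in J$, whence $J=\Delta$. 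Therefore $\ker\Phi_0=\{0\}$.

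The analytic core of the argument is that $\Phi_0$ is \emph{isometric}. Both the restriction of $\|\cdot\|_A$ to $\Delta$ and the pull-back $a\mapsto\|\Phi_0(a)\|_{\widehat A}$ are \Cstar-norms on $\Delta$, the latter being a genuine norm rather than a mere seminorm precisely because $\Phi_0$ is injective. Here I would invoke the uniqueness of the \Cstar-norm on the Weyl $*$-algebra of a symplectic vector space (this is where the nondegeneracy of $\omega$ is indispensable; see \cite[Sec.~4.2]{BGP}), which forces the two norms to agree, so that $\|\Phi_0(a)\|_{\widehat A}=\|a\|_A$ for all $a\in\Delta$. I expect this to be the main obstacle: it is the only genuinely nontrivial input, everything else being formal. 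An alternative avoiding it would be to note that $\|\Phi_0(a)\|_{\widehat A}\le\sum_j|c_j|$ (triangle inequality plus unitarity of the $\hat w(\phi_j)$), extend $\Phi_0$ to the Banach $*$-algebra $\ell^1(V,\text{twisted convolution})$ and then to its enveloping \Cstar-algebra, and identify the latter with $A$ using that $V$ is an amenable discrete group; but this seems no cheaper.

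Once $\Phi_0$ is isometric it extends uniquely, by density of $\Delta$ in $A$, to an isometric $*$-homomorphism $\Phi:A\to\widehat A$ with $\Phi\circ w=\hat w$; it is injective because isometric, and it is the only such morphism because its values are prescribed on $w(V)$, which generates $A$ as a \Cstar-algebra, and it is continuous. Finally, for the closing assertion: $(w,A)$ of Example~\ref{ex:CCR} is a $\mathrm{CCR}$-representation, which settles existence; and for an arbitrary $\mathrm{CCR}$-representation $(\hat w,\widehat A)$ the morphism $\Phi$ is moreover surjective, since $\Phi(A)=\overline{\Phi(\Delta)}=\overline{\mathrm{span}\{\hat w(\phi)\}}=C^*(\hat w(V))=\widehat A$ by Axiom~\eqref{CCR1}, so $\Phi$ is a \Cstar-isomorphism and the $\mathrm{CCR}$-representation is unique up to \Cstar-isomorphism.
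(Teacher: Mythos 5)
The paper does not actually prove this theorem: it defers entirely to \cite[Sec.~4.2]{BGP}, so there is no in-text argument to compare yours against line by line. Your skeleton is the standard one and most of it is sound: the computation $w(\phi)\delta_0=\delta_{-\phi}$ does give linear independence of the $w(\phi)$ in the concrete model, hence well-definedness of $\Phi_0$ on $\Delta$; the $*$-homomorphism property follows from Axioms~\eqref{CCR2}--\eqref{CCR4}; the simplicity argument for $\Delta$, though tersely written, does work once one reads it as ``subtract a suitable multiple of $a-w(\chi)aw(\chi)^{-1}$ from $a$ to cancel one non-scalar term while keeping the scalar term'' (as literally stated, the element $a-w(\chi)aw(\chi)^{-1}$ has \emph{no} scalar component, so you should make the bookkeeping of the induction explicit); and the density/extension/uniqueness/surjectivity endgame is routine.

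The genuine gap is the step you yourself flag: the assertion that the two \Cstar-norms on $\Delta$ coincide. This is exactly the Manuceau--Slawny uniqueness theorem for the Weyl algebra, and it is not a formal consequence of what precedes it --- in particular, simplicity of the \emph{algebraic} $*$-algebra $\Delta$ does not force uniqueness of the \Cstar-norm (a simple dense $*$-subalgebra can a priori carry inequivalent \Cstar-norms; what one needs is simplicity of a suitable \emph{completion}, and that is where the real work lies). Since the statement ``all \Cstar-norms on $\Delta$ agree'' is essentially equivalent to the theorem being proved, invoking it from \cite[Sec.~4.2]{BGP} --- the very reference the paper cites for the whole theorem --- makes your argument a reduction rather than a proof. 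Your proposed alternative via $\ell^1(V)$ with twisted convolution and amenability of the discrete group $V$ has the same issue in disguise: amenability identifies the full and reduced twisted group \Cstar-algebras, but injectivity of the induced morphism $A\to\wih{A}$ still requires simplicity of the completed algebra. To close the gap one must actually prove one of these statements, e.g.\ by showing that the conjugation--averaging operators $b\mapsto\frac{1}{2}\bigl(b+\lambda\, w(\chi)bw(\chi)^{-1}\bigr)$ are norm-nonincreasing and hence pass to the closure of $\Delta$ in \emph{any} \Cstar-completion, which yields simplicity of that completion and therefore isometry of all the morphisms in sight.
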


We denote the \Cstar-algebra associated to the {\rm CCR}-representation of $(V,\omega)$ from Example~\ref{ex:CCR} by $\CCR(V,\omega)$. 
As a consequence of Theorem~\ref{t:existuniqueCCR}, we obtain the following important corollary.

\begin{cor}\label{c:CCR}
Let $(V,\omega)$ be a symplectic vector space and $(w,\CCR(V,\omega))$ its $\CCR$-representation.
\begin{enumerate}[(i)]
\item
The \Cstar-algebra $\CCR(V,\omega)$ is simple, i.e., it has no closed two-sided $*$-ideals other than $\{0\}$ and the algebra itself.
\item 
Let $(V',\omega')$ be another symplectic vector space and $f:V\to V'$ a symplectic linear map.
Then there exists a unique injective \Cstar-morphism $\CCR(f):\CCR(V,\omega)\rightarrow\CCR(V',\omega')$ such that 
$$
\xymatrix{
V\ar[rr]^{f}\ar[d]^{w}
&&V' \ar[d]^{w'}\\
\CCR(V,\omega)\ar[rr]^{\CCR(f)}
&&\CCR(V',\omega')
}
$$
commutes.
\end{enumerate}
\end{cor}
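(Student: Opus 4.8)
The plan is to derive both assertions as purely formal consequences of the universal property established in Theorem~\ref{t:existuniqueCCR}, with no further analysis required; throughout, $\CCR(V,\omega)$ denotes the concrete algebra $A=C^*(w(V))$ of Example~\ref{ex:CCR}, and $(w',\CCR(V',\omega'))$ the analogous representation of $(V',\omega')$.

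For part (ii) I would start from the map $\hat w:=w'\circ f:V\to\CCR(V',\omega')$. Since $f$ is linear, $\hat w(0)=w'(0)=1$ and $\hat w(-\phi)=w'(-f\phi)=w'(f\phi)^*=\hat w(\phi)^*$; and since $f$ is symplectic, $\omega'(f\phi,f\psi)=\omega(\phi,\psi)$, so $\hat w(\phi+\psi)=w'(f\phi+f\psi)=e^{i\omega'(f\phi,f\psi)/2}w'(f\phi)w'(f\psi)=e^{i\omega(\phi,\psi)/2}\hat w(\phi)\hat w(\psi)$. Thus $(\hat w,\CCR(V',\omega'))$ satisfies Axioms~\eqref{CCR2}--\eqref{CCR4} of Definition~\ref{d:CCR}, and Theorem~\ref{t:existuniqueCCR} yields a unique injective \Cstar-morphism $\CCR(f):\CCR(V,\omega)\to\CCR(V',\omega')$ with $\CCR(f)\circ w=\hat w=w'\circ f$, which is exactly the asserted commuting square. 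The functoriality identities $\CCR(\id)=\id$ and $\CCR(f'\circ f)=\CCR(f')\circ\CCR(f)$ then follow from the same uniqueness clause, since both sides agree on the generating set $w(V)$.

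For part (i) I would argue by contradiction: let $I\subsetneq\CCR(V,\omega)$ be a proper closed two-sided $*$-ideal and $q:\CCR(V,\omega)\to\CCR(V,\omega)/I$ the quotient map, a unital \Cstar-homomorphism onto a nonzero \Cstar-algebra (nonzero because $1\notin I$). Since $q$ is a unital $*$-homomorphism, $\hat w:=q\circ w$ again satisfies Axioms~\eqref{CCR2}--\eqref{CCR4}, so Theorem~\ref{t:existuniqueCCR} provides a unique \Cstar-morphism $\Phi:\CCR(V,\omega)\to\CCR(V,\omega)/I$ with $\Phi\circ w=\hat w$, and $\Phi$ is injective. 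But $q$ is itself a \Cstar-morphism with $q\circ w=\hat w$, so uniqueness forces $q=\Phi$; hence $q$ is injective and $I=\ker q=\{0\}$, establishing simplicity.

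As for difficulties: essentially all the substance is carried by the cited Theorem~\ref{t:existuniqueCCR} — in particular the injectivity of the comparison morphism — so the present corollary is a routine unwinding of that universal property rather than a genuine obstacle. The one point worth flagging is that non-degeneracy of $\omega$ enters only implicitly: for a degenerate symplectic form the Weyl algebra fails to be simple, and accordingly Theorem~\ref{t:existuniqueCCR} is available only for honest (non-degenerate) symplectic $(V,\omega)$.
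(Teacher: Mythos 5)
Your proposal is correct and follows exactly the route the paper intends: the paper states Corollary~\ref{c:CCR} as a direct consequence of the universal property in Theorem~\ref{t:existuniqueCCR} (deferring details to \cite[Sec.~4.2]{BGP}), and your derivation of (ii) via $\hat w:=w'\circ f$ and of (i) via the quotient map being forced, by uniqueness, to coincide with the injective comparison morphism is precisely that standard argument. Your closing remark that non-degeneracy of $\omega$ is what makes the injectivity clause (and hence simplicity) available is a correct and worthwhile observation.
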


Obviously $\CCR(\id)=\id$ and $\CCR(f'\circ f)=\CCR(f')\circ\CCR(f)$ for all symplectic linear maps $V\bui{\rightarrow}{f} V'\bui{\rightarrow}{f'} V''$, so that we have constructed a covariant functor
\[\CCR:\Sympl\longrightarrow\CAlg.\]


\begin{thebibliography}{11}

\bibitem{Ar}
{\sc H.~Araki:}
{\em On quasifree states of ${\rm CAR}$ and Bogoliubov automorphisms}.
Publ.\ Res.\ Inst.\ Math.\ Sci.~{\bf 6} (1970/71), 385--442.

\bibitem{BB}
{\sc C.~B\"ar and C.~Becker:}
{\em \Cstar-algebras}. 
In: C.~B\"ar and K.~Fredenhagen (Eds.): 
{\em Quantum field theory on curved spacetimes}. 1-–37,
Lecture Notes in Phys.~\textbf{786}, Springer-Verlag, Berlin, 2009.

\bibitem{BGM}
{\sc C.~B\"ar, P.~Gauduchon, and A.~Moroianu:}
{\em Generalized Cylinders in Semi-Riemannian and Spin Geometry}.
Math.~Zeitschr.~{\bf 249} (2005), 545--580.

\bibitem{BGP}
{\sc C.~B\"ar, N.~Ginoux, and F.~Pf\"affle:}
{\em Wave Equations on Lorentzian Manifolds and Quantization}.
EMS, Z\"urich, 2007.

\bibitem{Baum}
{\sc H.~Baum:}
{\em Spin-Strukturen und Dirac-Operatoren \"uber pseudoriemannschen
  Mannigfaltigkeiten}.
Teubner, Leipzig, 1981.

\bibitem{BS}
{\sc A.~N.~Bernal and M.~S\'anchez:}
{\em Smoothness of time functions and the metric splitting of globally hyperbolic spacetimes}.
Commun.\ Math.\ Phys.~\textbf{257} (2005), 43--50.

\bibitem{BS2}
{\sc A.~N.~Bernal and M.~S\'anchez:}
{\em Further results on the smoothability of Cauchy hypersurfaces and Cauchy time functions}.
Lett.\ Math.\ Phys.~\textbf{77} (2006), 183--197.

\bibitem{BransonHijaziBW}  
{\sc T. Branson and O. Hijazi:} 
{\em Bochner-Weitzenb\"ock formulas associated with the Rarita-Schwinger operator}.
Internat.\ J.\ Math.~\textbf{13} (2002), 137--182.

\bibitem{BR}
{\sc O.~Bratteli and D.~W.~Robinson:}
{\em Operator algebras and quantum statistical mechanics, I-II} (second edition).
Texts and Monographs in Physics, Springer, Berlin, 1997.

\bibitem{BrtD}
{\sc T.~Br\"ocker and T. tom Dieck:}
{\em Representations of compact Lie groups}.
Graduate Texts in Mathematics \textbf{98}, Springer-Verlag, New York, 1995. 

\bibitem{BFV}
{\sc R.~Brunetti, K.~Fredenhagen and R.~Verch:}
{\em The generally covariant locality principle - a new paradigm for local quantum field theory}.
Commun.\ Math.\ Phys.~\textbf{237} (2003), 31--68.

\bibitem{Buchdahl2}
{\sc H.~A.~Buchdahl:}
{\em On the compatibility of relativistic wave equations in Riemann spaces. II}.
J.\ Phys.~A~\textbf{15} (1982), 1–-5.

\bibitem{Buchdahl3}
{\sc H.~A.~Buchdahl:}
{\em On the compatibility of relativistic wave equations in Riemann spaces. III}.
J.\ Phys.~A~\textbf{15} (1982), 1057-–1062.

\bibitem{DHP}
{\sc C.~Dappiaggi, T.-P.~Hack and N.~Pinamonti:}
{\em The extended algebra of observables for Dirac fields and the trace anomaly of their stress-energy tensors}.
Rev.\ Math.\ Phys.~\textbf{21} (2009), 1241--1312.

\bibitem{DMP}
{\sc C.~Dappiaggi, V.~Moretti and N.~Pinamonti:}
{\em Distinguished quantum states in a class of cosmological spacetimes and their Hadamard property}.
J.~Math.\ Phys.~\textbf{50} (2009), 062304, 38 p.

\bibitem{DPP}
{\sc C.~Dappiaggi, N.~Pinamonti and M.~Porrmann:}
{\em Local causal structures, Hadamard states and the principle of local covariance in quantum field theory}.
\texttt{arXiv:1001.0858}

\bibitem{Dimock80}  
{\sc J. Dimock:} 
{\em Algebras of local observables on a manifold}.  
Commun.\ Math.\ Phys.~\textbf{77} (1980), 219--228.

\bibitem{Dimock82}  
{\sc J. Dimock:} 
{\em Dirac quantum fields on a manifold}.  
Trans.\ Amer.\ Math.\ Soc.~\textbf{269} (1982), 133--147.

\bibitem{FewVerch}
{\sc C.~J.~Fewster and R.~Verch:}
{\em A quantum weak energy inequality for Dirac fields in curved spacetime}.
Commun.\ Math.\ Phys.~\textbf{225} (2002), 331--359.

\bibitem{Furlani}  
{\sc E.~Furlani:} 
{\em Quantization of massive vector fields in curved space-time}.  
J.\ Math.\ Phys.~\textbf{40} (1999), 2611-–2626.

\bibitem{Geroch70}
{\sc R.P. Geroch:}
{\em Domain of dependence}.
J.\ Math.\ Phys.~\textbf{11} (1970), 437--449. 

\bibitem{Gib}
{\sc G.~W.~Gibbons:}
{\em A note on the Rarita-Schwinger equation in a gravitational background}.
J.~Phys.~A~\textbf{9} (1976), 145--148.

\bibitem{Hoerm1}
{\sc L. H\"ormander:}
{\em The analysis of linear partial differential operators. I. Distribution theory and Fourier analysis}. 2nd ed.
Grundlehren der Mathematischen Wissenschaften \textbf{256}, Springer-Verlag, Berlin, 1990.

\bibitem{Hoerm3}
{\sc L. H\"ormander:}
{\em The analysis of linear partial differential operators. III. Pseudodifferential operators}. 
Grundlehren der Mathematischen Wissenschaften \textbf{274}, Springer-Verlag, Berlin, 1985.

\bibitem{HollWald}
{\sc S.~Hollands and R.~M.~Wald:}
{\em Axiomatic quantum field theory in curved spacetime}.
Commun.\ Math.\ Phys.~\textbf{293} (2010), 85--125.

\bibitem{Kay78}
{\sc B.~S.~Kay:}
{\em Linear spin-zero quantum fields in external gravitational and scalar fields}.
Commun.\ Math.\ Phys.~\textbf{62} (1978), 55--70.

\bibitem{LM}
{\sc H.~B.~Lawson and M.-L.~Michelsohn:}
{\em Spin Geometry}.
Princeton University Press, Princeton, 1989.

\bibitem{MueDipl}
{\sc R.~M\"uhlhoff:}
{\em Higher Spin fields on curved spacetimes}.
Diplomarbeit, Universit\"at Leipzig, 2007.

\bibitem{Mue}
{\sc R.~M\"uhlhoff:}
{\em Cauchy Problem and Green's Functions for First Order Differential Operators and Algebraic Quantization}.
J.\ Math.\ Phys.~\textbf{52} (2011), 022303, 7~p.%; doi:10.1063/1.3530846; %\texttt{arXiv:1001.4091v1}

\bibitem{ONeill}
{\sc B.~O'Neill:}
{\em Semi-Riemannian Geometry}.
Academic Press, San Diego, 1983.

\bibitem{PR}
{\sc R.~J. Plymen and P.L. Robinson:}
{\em Spinors in Hilbert space}.
Cambridge Tracts in Mathematics \textbf{114}, Cambridge University Press, Cambridge, 1994.

\bibitem{Rad}
{\sc M.~J.~Radzikowski:}
{\em Micro-local approach to the Hadamard condition in quantum field theory on curved space-time}.
Commun.\ Math.\ Phys.~\textbf{179} (1996), 529-553.

\bibitem{RS}
{\sc W.~Rarita and J.~Schwinger:}
{\em On a Theory of Particles with Half-Integral Spin}.
Phys.\ Rev.~\textbf{60} (1941), 61.

\bibitem{RS1}
{\sc M.~Reed and B.~Simon:}
{\em Methods of Modern Mathematical Physics I: Functional Analysis}.
Academic Press, Orlando, 1980.

\bibitem{RS2}
{\sc M.~Reed and B.~Simon:}
{\em Methods of Modern Mathematical Physics II: Fourier Analysis, Self-Adjointness}.
Academic Press, Orlando, 1975.

\bibitem{SV1}
{\sc H.~Sahlmann and R.~Verch:}
{\em Passivity and microlocal spectrum condition}.
Commun.\ Math.\ Phys.~\textbf{214} (2000), 705--731.

\bibitem{SV2}
{\sc H.~Sahlmann and R.~Verch:}
{\em Microlocal spectrum condition and Hadamard form for vector-valued quantum fields in curved spacetime}. 
Rev.\ Math.\ Phys.~\textbf{13} (2001), 1203--1246.                            

\bibitem{S}
{\sc K.~Sanders:}
{\em The locally covariant Dirac field}.
Rev.~Math.~Phys.~22 (2010), 381--430.%; \texttt{arXiv:0911.1304}

\bibitem{St}
{\sc A.~Strohmaier:}
{\em The Reeh-Schlieder property for quantum fields on stationary spacetimes}.
Commun.\ Math.\ Phys.~{\bf 215} (2000), 105--118. 

\bibitem{T1}
{\sc M.~E.~Taylor:}
{\em Partial Differential Equations I - Basic Theory}.
Springer-Verlag, New York - Berlin - Heidelberg, 1996.

\bibitem{V}
{\sc R.~Verch:}
{\em A spin-statistics theorem for quantum fields on curved spacetime manifolds in a generally covariant framework}.
Commun.\ Math.\ Phys.~{\bf 223} (2001), 261--288.

\bibitem{Wald}
{\sc R.~M.~Wald:}
{\em Quantum field theory in curved spacetime and black hole thermodynamics}. 
University of Chicago Press, Chicago, 1994.

\bibitem{Wang}  
{\sc McK.~Y.~Wang:} 
{\em Preserving parallel spinors under metric deformations}.  
Indiana Univ.\ Math.\ J.~\textbf{40} (1991), 815--844.

\bibitem{Wue}
{\sc V.~W\"unsch:}
{\em Cauchy's problem and Huygens' principle for relativistic higher spin wave equations in an arbitrary curved space-time}.
Gen.\ Relativity Gravitation~\textbf{17} (1985), 15–-38. 
\end{thebibliography}
\end{document}